\title[A note on undecidability of propositional non-associative linear logics]
  {A note on undecidability of propositional non-associative linear logics}
\author[Hiromi Tanaka]
  {H\ls I\ls R\ls O\ls M\ls I\ns
   T\ls A\ls N\ls A\ls K\ls A$^1$\\
$^1$ Department of Philosophy, Keio University\addressbreak
      Mita, Minato-ku, Tokyo 108-8345, Japan.\addressbreak
htanaka304@gmail.com}
\date{}
\newtheorem{Thm}{Theorem}[section]  
\newtheorem{Cor}[Thm]{Corollary}
\newtheorem{Lem}[Thm]{Lemma}
\newtheorem{Def}[Thm]{Definition}
\newtheorem{Remark}[Thm]{Remark}
\newtheorem{Prop}[Thm]{Proposition}
\newtheorem{Exm}[Thm]{Example}
\newcommand{\negr}[1]{\sim\!#1}
\newcommand{\negl}[1]{-#1}
\newcommand{\negrl}[1]{\sim\!\!-#1} 
\newcommand{\neglr}[1]{-\!\!\sim\!#1} 
\newcommand{\N}[1]{\,N\,#1}
\newcommand{\NN}[1]{\,N^*\,#1}
\newcommand{\id}{\mathop{\mathrm{id}}\nolimits}
\newcommand{\exch}{\mathop{\mathrm{e}}\nolimits}
\newcommand{\cont}{\mathop{\mathrm{c}}\nolimits}
\newcommand{\weak}{\mathop{\mathrm{w}}\nolimits}
\newcommand{\asso}{\mathop{\mathrm{a}}\nolimits}
\newcommand{\dbackslash}[1]{\backslash\!\!\backslash#1}
\newcommand{\dslash}[1]{/\!\!/#1}
\newenvironment{bprooftree}
  {\leavevmode\hbox\bgroup}
  {\DisplayProof\egroup}
\renewcommand{\labelenumi}{(\roman{enumi})}
\begin{document}

\label{firstpage}
\maketitle

\begin{abstract}
We introduce a non-associative and non-commutative version of propositional intuitionistic linear logic, called   propositional non-associative non-commutative intuitionistic linear logic ($\mathbf{NACILL}$ for short). 
We prove that $\mathbf{NACILL}$ and any of its extensions by the rules of exchange and/or contraction are undecidable. 
Furthermore, we introduce two types of classical versions of $\mathbf{NACILL}$, i.e., an involutive version of $\mathbf{NACILL}$ and a cyclic and involutive version of $\mathbf{NACILL}$. 
We show that both of these logics are also undecidable. 
\end{abstract}


\section{Introduction}
\label{aaa} 
Lincoln \emph{et al.} (1992) showed that propositional classical linear logic, propositional non-commutative classical linear logic, and their intuitionistic versions are all undecidable, which is known as one of the remarkable results in the early period of linear logic. 
To the best of our knowledge, non-associative versions of propositional linear logic, called \emph{propositional non-associative linear logics}, have not been well-investigated and the decision problems for these logics have not been settled so far. 
In this paper, we introduce propositional non-associative linear logics in natural ways, and show that the decision problems for some of them are undecidable, as a continuation of the above well-known work by Lincoln \emph{et al}.

On the other hand, various research results on non-associative logics have been accumulated in the research fields of substructural logic and algebraic logic, such as studies on \emph{full non-associative Lambek calculus} ($\mathbf{FNL}$) \cite{GO10,GJ13}. 
The main difference between the linear-logical setting and the standard setting of substructural logic is the presence of the linear-logical modal operator, which is often called  ``exponential" or  ``bang". 
This operator plays an important role in terms of computational complexity. 
For instance, as shown in \cite{LMSS92}, the modality-free fragment of propositional linear logic is PSPACE-complete, in contrast to the undecidability of propositional linear logic. 
This means that the rich expressive power of linear logic is caused by the linear-logical modal operator.
 
As the basis for considering various versions of propositional non-associative linear logics in this paper, we introduce \emph{propositional non-associative non-commutative intuitionistic linear logic} ($\mathbf{NACILL}$), by enriching $\mathbf{FNL}$ with a sort of modal operator. 
The modality of $\mathbf{NACILL}$ is employed to limit the use of the rules of weakening, contraction, exchange and associativity. 

We show that the decision problem for $\mathbf{NACILL}$ is undecidable.  
Our proof of the undecidability of $\mathbf{NACILL}$ is based on the ideas in \cite{Ch15} and \cite{LMSS92}. 
Chvalovsk\'{y} (2015) had a breakthrough result, by proving that the finitary consequence relation in $\mathbf{FNL}$ is undecidable. 
In view of this, we show the undecidability of $\mathbf{NACILL}$, by encoding the finitary consequence relation in $\mathbf{FNL}$ into the provability in $\mathbf{NACILL}$. 
The idea of this encoding comes from \cite{LMSS92}.     
To confirm that our encoding is sound and faithful, we show that $\mathbf{NACILL}$ is a strongly conservative extension of $\mathbf{FNL}$. 

\if0
It seems difficult to show that $\mathbf{NACILL}$ is a strongly conservative extension of $\mathbf{FNL}$, using only proof-theoretic methods; hence, we show this with the aid of algebraic techniques. 
In fact, Chvalovsk\'{y} (2015)  also proved that the finitary consequence relation in each of the extensions of $\mathbf{FNL}$ by the rules of contraction and exchange is undecidable. 
Furthermore, the techniques we use in the proof of the undecidability of $\mathbf{NACILL}$ can be easily applied to the case where contraction and exchange are added to $\mathbf{NACILL}$. In view of these facts,
\fi
In fact, Chvalovsk\'{y} (2015)  also proved that the finitary consequence relation in each of the extensions of $\mathbf{FNL}$ by the rules of contraction and exchange is undecidable. 
Furthermore, the techniques we use in the proof of the undecidability of $\mathbf{NACILL}$ can be easily applied to the case where contraction and exchange are added to $\mathbf{NACILL}$. 
In view of these facts, we prove the following generalized form of the undecidability of $\mathbf{NACILL}$, which is the main result of this paper: 

``Every extension of $\mathbf{NACILL}$ by a (possibly empty) subset of the rules of contraction and exchange is undecidable (Theorem~\ref{main1} in Section~\ref{mainsection})."

Here, in particular, we stress the undecidability of $\mathbf{NACILL}$ with exchange and contraction.
This is in sharp contrast to the decidability of intuitionistic linear logic with contraction which was proved in \cite{OT99,LS15}.

In addition, we introduce two types of non-associative and non-commutative versions of propositional classical linear logic: one is an involutive version of $\mathbf{NACILL}$ (denoted by $\mathbf{NACCLL}^-$) and the other is a cyclic and involutive version of $\mathbf{NACILL}$ (denoted by $\mathbf{NACCLL}$). 
We show that both of these logics are also undecidable.
The proofs of the undecidability of $\mathbf{NACCLL}^-$ and $\mathbf{NACCLL}$ is similar to the proof of the main theorem. 
Hence, we do not describe the proofs of the undecidability of $\mathbf{NACCLL}^-$ and $\mathbf{NACCLL}$ in detail.

At the end of the introduction, we summarize the contents of this paper. 
Section \ref{pre} is divided into two parts. 
In the first half of the section, we recall the syntax of $\mathbf{FNL}$ and introduce the syntax of $\mathbf{NACILL}$. 
In the second half of the section, we describe the algebraic semantics for $\mathbf{FNL}$ and $\mathbf{NACILL}$, and recall some algebraic notions, such as nuclei, residuated frames, and Dedekind-MacNeille completions; these are useful for showing the main theorem in Section~\ref{mainsection}. 
Section~\ref{classical} is devoted to a short discussion of the undecidability of $\mathbf{NACCLL}^-$ and $\mathbf{NACCLL}$.  
In Section~\ref{future}, we conclude this paper with some open questions.  

In \ref{gentzen}, independently of the main theorem, we discuss cut elimination for propositional non-associative intuitionistic linear logics from an algebraic standpoint; concretely, we prove that $\mathbf{NACILL}$ and all its extensions by the rules of weakening, exchange and contraction admit cut elimination, using a sort of residuated frame. 
  
\section{Preliminaries}
\label{pre}
We start with the syntax of \emph{full non-associative Lambek calculus} $\mathbf{FNL}$.
Our explanation is based on \cite{GO10,GJ13,Ch15,GJ17}. 

The \emph{language} $\mathcal{L}$ of $\mathbf{FNL}$ consists of the binary connectives $\land, \lor, \cdot, \backslash,/$ and the constant $1$. 
We fix a countable set of propositional variables. 
We denote it by $\mathsf{Var}$. 
An \emph{$\mathcal{L}$-formula} is a term in the language $\mathcal{L}$ over $\mathsf{Var}$. 
The set of $\mathcal{L}$-formulas is denoted by $Fm_{\mathcal{L}}$. 
$Fm^{\circ}_{\mathcal{L}}=(Fm^{\circ}_{\mathcal{L}},\circ,\varepsilon)$ denotes the free unital groupoid generated by the set $Fm_{\mathcal{L}}$. 
An {\emph{$\mathcal{L}$-structure} is simply an element of $Fm^{\circ}_{\mathcal{L}}$. 
We denote by $S_{Fm^{\circ}_{\mathcal{L}}}$ the set of unary linear polynomials over $Fm^{\circ}_{\mathcal{L}}$. 
An \emph{$\mathcal{L}$-sequent} is an element of $Fm_{\mathcal{L}}^{\circ} \times Fm_{\mathcal{L}}$. 
For readability, given an $\mathcal{L}$-sequent $(x,a)$, we always denote it by $x \Rightarrow a$. 
The sequent calculus for $\mathbf{FNL}$ consists of the initial sequents and the inference rules given in Figure \ref{inf1}. 
Letters $a,b,c$ range over $\mathcal{L}$-formulas, $x,y$ over $\mathcal{L}$-structures, and $u$ over unary linear polynomials over $Fm_{\mathcal{L}}^{\circ}$ in Figure~\ref{inf1}. 
Likewise, $u[x]$ stands for the image of $x$ under $u$. 
Given an $\mathcal{L}$-sequent $s$, a \emph{proof} of $s$ and the \emph{provability} of $s$ in $\mathbf{FNL}$ are defined as usual. 
Specifically, given an $\mathcal{L}$-formula $a$, we say that $a$ is provable in $\mathbf{FNL}$ if the sequent $\varepsilon \Rightarrow a$ is provable in $\mathbf{FNL}$. 

Next, we review the consequence relation in $\mathbf{FNL}$, for which we use the notation $\vdash_{\mathbf{FNL}}$. 
Let $\Phi \cup \{x \Rightarrow a\}$ be a set of $\mathcal{L}$-sequents. 
We say that $x \Rightarrow a$ \emph{is provable in ${\mathbf{FNL}}$ from $\Phi$} and write $\Phi \vdash_{\mathbf{FNL}} x \Rightarrow a$ if $x \Rightarrow a$ is provable in the sequent calculus obtained from ${\mathbf{FNL}}$ by adding $s$ as an initial sequent for each $s \in \Phi$. 

\begin{figure}[t]
\begin{description}
\item[Initial sequents:]
\[
\begin{bprooftree}
\AxiomC{}
\RightLabel{(Id)}
\UnaryInfC{$a \Rightarrow a$}
\end{bprooftree}
\begin{bprooftree}
\AxiomC{}
\RightLabel{($\Rightarrow 1$)}
\UnaryInfC{$\varepsilon \Rightarrow 1$}
\end{bprooftree}
\]
\item[Cut:]
\[
\begin{bprooftree}
\AxiomC{$x \Rightarrow a$}
\AxiomC{$u[a] \Rightarrow c$}
\RightLabel{(cut)}
\BinaryInfC{$u[x] \Rightarrow c$}
\end{bprooftree}
\]
\item[Rules for logical connectives:]
\[
\begin{bprooftree}
\AxiomC{$u[\varepsilon] \Rightarrow c$}
\RightLabel{$(1\Rightarrow)$}
\UnaryInfC{$u[1] \Rightarrow c$}
\end{bprooftree}
\begin{bprooftree}
\AxiomC{$x \Rightarrow a$}
\AxiomC{$u[b]\Rightarrow c$}
\RightLabel{$(\backslash \Rightarrow)$}
\BinaryInfC{$u[x \circ (a \backslash b)] \Rightarrow c$}
\end{bprooftree}
\]
\[
\begin{bprooftree}
\AxiomC{$a \circ x \Rightarrow b$}
\RightLabel{$(\Rightarrow \backslash)$}
\UnaryInfC{$x \Rightarrow a \backslash b$}
\end{bprooftree}
\begin{bprooftree}
\AxiomC{$u[a \circ b] \Rightarrow c$}
\RightLabel{$(\cdot \Rightarrow)$}
\UnaryInfC{$u[a \cdot b] \Rightarrow c$}
\end{bprooftree}
\begin{bprooftree}
\AxiomC{$x \Rightarrow a$}
\AxiomC{$y \Rightarrow b$}
\RightLabel{$(\Rightarrow \cdot)$}
\BinaryInfC{$x \circ y \Rightarrow a \cdot b$}
\end{bprooftree}
\]
\[
\begin{bprooftree}
\AxiomC{$x \circ a \Rightarrow b$}
\RightLabel{$(\Rightarrow /)$}
\UnaryInfC{$x \Rightarrow b/a$}
\end{bprooftree}
\begin{bprooftree}
\AxiomC{$x \Rightarrow a$}
\AxiomC{$u[b] \Rightarrow c$}
\RightLabel{$(/ \Rightarrow)$}
\BinaryInfC{$u[(b/a) \circ x] \Rightarrow c$}
\end{bprooftree}
\]
\[
\begin{bprooftree}
\AxiomC{$u[a] \Rightarrow c$}
\RightLabel{$(\land \Rightarrow)$}
\UnaryInfC{$u[a \land b] \Rightarrow c$}
\end{bprooftree}
\begin{bprooftree}
\AxiomC{$u[b] \Rightarrow c$}
\RightLabel{$(\land \Rightarrow)$}
\UnaryInfC{$u[a \land b] \Rightarrow c$}
\end{bprooftree}
\begin{bprooftree}
\AxiomC{$x \Rightarrow a$}
\AxiomC{$x \Rightarrow b$}
\RightLabel{$(\Rightarrow \land)$}
\BinaryInfC{$x \Rightarrow a \land b$}
\end{bprooftree}
\]
\[
\begin{bprooftree}
\AxiomC{$u[a] \Rightarrow c$}
\AxiomC{$u[b] \Rightarrow c$}
\RightLabel{$(\lor \Rightarrow)$}
\BinaryInfC{$u[a \lor b] \Rightarrow c$}
\end{bprooftree}
\begin{bprooftree}
\AxiomC{$x \Rightarrow a$}
\RightLabel{$(\Rightarrow \lor)$}
\UnaryInfC{$x \Rightarrow a \lor b$}
\end{bprooftree}
\begin{bprooftree}
\AxiomC{$x \Rightarrow b$}
\RightLabel{$(\Rightarrow \lor)$}
\UnaryInfC{$x \Rightarrow a \lor b$}
\end{bprooftree}
\]
\end{description}
\caption{Inference rules of $\mathbf{FNL}$}
\label{inf1}
\end{figure}

Moreover, we introduce extensions of $\mathbf{FNL}$ by new inference rules, using terminology from \cite{HT11}. 
Let $R$ be a set of inference rules closed under substitutions.  
The \emph{extension of $\mathbf{FNL}$ by $R$}, which is denoted by ${\mathbf{FNL}}_R$, is the sequent calculus obtained from the sequent calculus for $\mathbf{FNL}$ by adding all the inference rules in $R$. 
The consequence relation in ${\mathbf{FNL}}_R$ is defined in the same way as that in $\mathbf{FNL}$. 
In this paper, we often consider the extensions of $\mathbf{FNL}$ by some of the following structural rules:
\[
\begin{bprooftree}
\AxiomC{$u[x \circ y] \Rightarrow c$}
\RightLabel{$(\exch)$}
\UnaryInfC{$u[y \circ x] \Rightarrow c$}
\end{bprooftree}
\begin{bprooftree}
\AxiomC{$u[\varepsilon] \Rightarrow c$}
\RightLabel{$(\weak)$}
\UnaryInfC{$u[x] \Rightarrow c$}
\end{bprooftree}
\begin{bprooftree}
\AxiomC{$u[x \circ x] \Rightarrow c$}
\RightLabel{$(\cont)$}
\UnaryInfC{$u[x] \Rightarrow c$}
\end{bprooftree}
\begin{bprooftree}
\AxiomC{$u[(x \circ y) \circ z] \Rightarrow c$}
\RightLabel{$(\asso)$}
\doubleLine
\UnaryInfC{$u[x \circ (y \circ z)] \Rightarrow c$}
\end{bprooftree}
\]
The double horizontal line of the rule of (a) means that the sequent under the double horizontal line implies the sequent over the double horizontal line, in addition to the usual meaning. 
The extension of $\mathbf{FNL}$ by the rule of $(\asso)$ (i.e., $\mathbf{FNL}_{\asso}$) is equivalent to the positive fragment of \emph{full Lambek calculus} $\mathbf{FL}$. 

Next, we introduce the syntax of \emph{propositional non-associative non-commutative intuitionistic linear logic} $\mathbf{NACILL}$.
The language $\mathcal{L}_{\oc}$ of $\mathbf{NACILL}$ is obtained from $\mathcal{L}$ by adding the unary connective $\oc$.
Formulas, structures and sequents in the language $\mathcal{L}_{\oc}$ are defined in the same way as those in the language $\mathcal{L}$. 
The set of $\mathcal{L}_{\oc}$-formulas (resp. $\mathcal{L}_{\oc}$-structures) is written by $Fm_{\mathcal{L}_{\oc}}$ (resp. $Fm^{\circ}_{\mathcal{L}_{\oc}}$). 
$S_{Fm^{\circ}_{\mathcal{L}_{\oc}}}$ denotes the set of unary linear polynomials over $Fm^{\circ}_{\mathcal{L}_{\oc}}$. 
The free unital groupoid generated by the set $\{\oc a \mid a \in Fm_{\mathcal{L}_{\oc}}\}$ is denoted by $K_{\mathcal{L}_{\oc}}$. 
A sequent calculus for $\mathbf{NACILL}$ is obtained from the sequent calculus for $\mathbf{FNL}$ by adding all the inference rules in Figure \ref{inf2}. 
We always assume that $a,b,c$ range over $Fm_{\mathcal{L}_\oc}$, $x,y$ over $Fm_{\mathcal{L}_{\oc}}^{\circ}$, $u$ over $S_{Fm_{\mathcal{L}_{\oc}}^{\circ}}$, and $x^{\oc},z^{\oc}$ over $K_{\mathcal{L}_{\oc}}$ in Figures \ref{inf1} and \ref{inf2}, when considering sequent calculi for logics in the language $\mathcal{L}_{\oc}$. 
\begin{figure}[t]
\[
\begin{bprooftree}
\AxiomC{$u[a] \Rightarrow c$}
\RightLabel{$(\oc \Rightarrow)$}
\UnaryInfC{$u[\oc a] \Rightarrow c$}
\end{bprooftree}
\begin{bprooftree}
\AxiomC{$x^{\oc} \Rightarrow a$}
\RightLabel{$(\Rightarrow \oc)$}
\UnaryInfC{$x^{\oc} \Rightarrow \oc a$}
\end{bprooftree}
\begin{bprooftree}
\AxiomC{$u[\varepsilon] \Rightarrow c$}
\RightLabel{$(\oc \weak)$}
\UnaryInfC{$u[x^{\oc}] \Rightarrow c$}
\end{bprooftree}
\]
\[
\begin{bprooftree}
\AxiomC{$u[x^{\oc} \circ x^{\oc}] \Rightarrow c$}
\RightLabel{$(\oc \cont)$}
\UnaryInfC{$u[x^{\oc}] \Rightarrow c$}
\end{bprooftree}
\begin{bprooftree}
\AxiomC{$u[x^{\oc} \circ y] \Rightarrow c$}
\RightLabel{$(\oc \exch)$}
\doubleLine
\UnaryInfC{$u[y \circ x^{\oc}] \Rightarrow c$}
\end{bprooftree}
\]
\[
\begin{bprooftree}
\AxiomC{$u[(x^{\oc} \circ y) \circ z] \Rightarrow c$}
\RightLabel{$(\oc\asso)$}
\doubleLine
\UnaryInfC{$u[x^{\oc} \circ (y \circ z)] \Rightarrow c$}
\end{bprooftree}
\begin{bprooftree}
\AxiomC{$u[(x \circ y) \circ z^{\oc}] \Rightarrow c$}
\RightLabel{$(\oc \asso^*)$}
\doubleLine
\UnaryInfC{$u[x \circ (y \circ z^{\oc})] \Rightarrow c$}
\end{bprooftree}
\]
\caption{Rules for modality}
\label{inf2}
\end{figure}
${\mathbf{NACILL}}_R$ denotes the sequent calculus obtained from the sequent calculus for $\mathbf{NACILL}$ by a set $R$ of inference rules closed under substitutions. 
The consequence relations $\vdash_{\mathbf{NACILL}}$ and $\vdash_{{\mathbf{NACILL}}_{R}}$ are defined in a natural way.
Clearly, the following rule is admissible in $\mathbf{NACILL}$:

\begin{prooftree}
\AxiomC{$u[(x \circ y^{\oc}) \circ z] \Rightarrow c$}
\RightLabel{$(\oc \asso^{**})$}
\doubleLine
\UnaryInfC{$u[x \circ (y^{\oc} \circ z)] \Rightarrow c$}
\end{prooftree}

The following proposition summarizes basic properties of $\mathbf{NACILL}$. 
\begin{Prop}
\label{base}
The following formulas are provable in $\mathbf{NACILL}$:
\begin{enumerate}[(viii)]
\item $\oc 1$,
\item $\oc (a \backslash b) \backslash (\oc a \backslash \oc b)$, 
\item $\oc a \backslash a$,
\item $\oc a \backslash \oc \oc a$,
\item $\oc a \backslash 1$,
\item $\oc a \backslash (\oc a \cdot \oc a)$,
\item $(\oc a \cdot \oc b) \backslash \oc (a \wedge b)$ and $\oc(a \wedge b) \backslash (\oc a \cdot \oc b)$,
\item $(\oc a \cdot b) \backslash (b \cdot \oc a)$ and $(b \cdot \oc a) \backslash (\oc a \cdot b)$,
\item $((\oc a \cdot b) \cdot c) \backslash (\oc a \cdot (b \cdot c))$ and $(\oc a \cdot (b \cdot c)) \backslash ((\oc a \cdot b) \cdot c)$,
\item $((a \cdot b) \cdot \oc c) \backslash (a \cdot (b \cdot \oc c))$ and $(a \cdot (b \cdot \oc c)) \backslash ((a \cdot b) \cdot \oc c)$,
\item $((a \cdot \oc b) \cdot c) \backslash (a \cdot (\oc b \cdot c))$ and $(a \cdot (\oc b \cdot c)) \backslash ((a \cdot \oc b) \cdot c)$.
\end{enumerate}
\end{Prop}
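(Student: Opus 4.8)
The plan is to give, for each item, a short derivation in the sequent calculus, exploiting the fact that every formula on the list has the shape $\phi \backslash \psi$ (apart from $\oc 1$). To prove $\vdash_{\mathbf{NACILL}} \phi \backslash \psi$ it suffices, by $(\Rightarrow \backslash)$ together with the fact that $\varepsilon$ is the unit of the groupoid (so $\phi \circ \varepsilon = \phi$), to derive the sequent $\phi \Rightarrow \psi$; and to prove $\oc 1$ one applies $(\Rightarrow \oc)$ to the initial sequent $\varepsilon \Rightarrow 1$, which is legitimate because $\varepsilon \in K_{\mathcal{L}_\oc}$. Thus the whole proposition reduces to exhibiting derivations of finitely many sequents of the form $\phi \Rightarrow \psi$.

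The uniform recipe for these sequents is: first use $(\cdot \Rightarrow)$ repeatedly to replace every fusion $\cdot$ in the antecedent $\phi$ by the structural product $\circ$, turning $\phi$ into a structure; then manipulate that structure with the modality rules; and finally rebuild the succedent $\psi$ with $(\Rightarrow \cdot)$, $(\Rightarrow \land)$ and $(\Rightarrow \backslash)$ as appropriate. The modality rules enter item by item. Items (iii) and (iv) are immediate: dereliction $(\oc \Rightarrow)$ applied to $a \Rightarrow a$, and $(\Rightarrow \oc)$ applied to $\oc a \Rightarrow \oc a$. Item (ii) starts from $a \circ (a \backslash b) \Rightarrow b$ (via $(\backslash \Rightarrow)$ from two identities), strips both antecedent formulas with $(\oc \Rightarrow)$ to reach $\oc a \circ \oc(a \backslash b) \Rightarrow b$, and then applies $(\Rightarrow \oc)$. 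Item (v) is a single use of $(\oc \weak)$ on $\varepsilon \Rightarrow 1$, and item (vi) a single use of $(\oc \cont)$ on $\oc a \circ \oc a \Rightarrow \oc a \cdot \oc a$. For item (vii) one direction combines $(\oc \weak)$ with $(\Rightarrow \land)$ and $(\Rightarrow \oc)$, while the other combines $(\oc \cont)$ with the two $(\land \Rightarrow)$ rules and $(\Rightarrow \oc)$. Item (viii) is exactly the bidirectional exchange $(\oc \exch)$ sandwiched between $(\cdot \Rightarrow)$ and $(\Rightarrow \cdot)$, and items (ix), (x), (xi) are, in the same way, the bidirectional rules $(\oc \asso)$, $(\oc \asso^*)$ and the admissible $(\oc \asso^{**})$ respectively; the double horizontal line supplies whichever direction a given half of the biconditional requires.

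The only point needing care — and where I expect the main (purely bookkeeping) obstacle to lie — is the verification of the side conditions on the modality rules rather than the logical content of the derivations. Concretely, $(\Rightarrow \oc)$ may be applied only when the entire antecedent structure belongs to $K_{\mathcal{L}_\oc}$, i.e.\ is built solely from banged formulas and $\varepsilon$; this is why the derivations for (ii), (iv) and (vii) are arranged so that, at the moment $(\Rightarrow \oc)$ fires, the antecedent is a $\circ$-product of $\oc$-formulas. Likewise $(\oc \exch)$, $(\oc \asso)$, $(\oc \asso^*)$ and $(\oc \asso^{**})$ each require one designated component ($x^\oc$, $z^\oc$, or $y^\oc$) to be a $K_{\mathcal{L}_\oc}$-structure, which is automatic here because in each application that component is a single $\oc$-formula. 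Once these membership conditions are checked, every derivation is short and routine, so I would display the eleven derivations (grouping the two halves of (vii)--(xi) together) and leave the structural side conditions to inspection.
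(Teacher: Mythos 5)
Your proposal is correct and takes essentially the same approach as the paper, whose proof simply declares all items straightforward sequent-calculus derivations: your routine derivations, together with the careful verification that $(\Rightarrow \oc)$ and the modal structural rules are only ever applied to antecedent components lying in $K_{\mathcal{L}_{\oc}}$, are exactly the intended proofs. In particular, your use of the admissible rule $(\oc \asso^{**})$ for item (xi) is legitimate, since the paper establishes its admissibility immediately before the proposition.
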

\begin{proof}
All are straightforward to show. 
\if0
For example,  there is the following proof of $((\alpha \cdot \oc \beta) \cdot \gamma) \backslash (\alpha \cdot (\oc \beta \cdot \gamma))$:
\begin{prooftree}
\AxiomC{$a \Rightarrow a$}
\AxiomC{$\oc b \Rightarrow \oc b$}
\AxiomC{$c \Rightarrow c$}
\BinaryInfC{$(\oc b,c) \Rightarrow \oc b \cdot c$}
\BinaryInfC{$(a,(\oc b,c)) \Rightarrow a \cdot (\oc b \cdot c)$}
\UnaryInfC{$(\alpha,(\gamma,\oc \beta)) \Rightarrow \alpha \cdot (\oc \beta \cdot \gamma)$}
\UnaryInfC{$((\alpha,\gamma),\oc \beta) \Rightarrow \alpha \cdot (\oc \beta \cdot \gamma)$}
\UnaryInfC{$(\oc \beta,(\alpha,\gamma)) \Rightarrow \alpha \cdot (\oc \beta \cdot \gamma)$}
\UnaryInfC{$((\oc \beta,\alpha),\gamma) \Rightarrow \alpha \cdot (\oc \beta \cdot \gamma)$}
\UnaryInfC{$((\alpha,\oc \beta),\gamma) \Rightarrow \alpha \cdot (\oc \beta \cdot \gamma)$}
\UnaryInfC{$(\alpha \cdot \oc \beta,\gamma) \Rightarrow \alpha \cdot (\oc \beta \cdot \gamma)$}
\UnaryInfC{$(\alpha \cdot \oc \beta) \cdot \gamma \Rightarrow \alpha \cdot (\oc \beta \cdot \gamma)$}
\UnaryInfC{$\Rightarrow ((\alpha \cdot \oc \beta) \cdot \gamma) \backslash (\alpha \cdot (\oc \beta \cdot \gamma))$}
\end{prooftree}
\fi
\end{proof}

In what follows, we describe the algebraic models for $\mathbf{FNL}$ and $\mathbf{NACILL}$. 
An \emph{$\mathcal{L}$-algebra} (resp. \emph{$\mathcal{L}_{\oc}$-algebra}) is an algebra in the language $\mathcal{L}$ (resp. $\mathcal{L}_{\oc}$), i.e., an algebra of the form $(A,\wedge,\vee,\cdot,\backslash,/,1)$ (resp. $(A,\wedge,\vee,\cdot,\backslash,/,\oc,1)$). 
Given an $\mathcal{L}$-algebra (or $\mathcal{L}_{\oc}$-algebra) $\mathbf{A}$, a map $f \colon {\mathsf{Var}} \rightarrow A$ is called a \emph{valuation} into $\mathbf{A}$. 
This map is uniquely extended to the homomorphism $f$ from $\mathbf{Fm}_{\mathcal{L}}$ (resp. $\mathbf{Fm}_{\mathcal{L}_{\oc}}$) to $\mathbf{A}$, where $\mathbf{Fm}_{\mathcal{L}}$ (resp. $\mathbf{Fm}_{\mathcal{L}_{\oc}}$) denotes the absolutely free algebra in the language $\mathcal{L}$ (resp. $\mathcal{L}_{\oc}$) over $\mathsf{Var}$. 
We also call this homomorphism a valuation. 
Given a structure $x$, $\rho(x)$ stands for the formula obtained from $x$ by replacing $\circ$ by $\cdot$.
In particular, we set $\rho(x)=1$ if $x=\varepsilon$. 

Let $\Phi \cup \{x \Rightarrow a\}$ be a set of sequents. 
Given an algebra $\mathbf{A}$ and a valuation $f$ into $\mathbf{A}$, we write $\Phi \models_{{\mathbf{A}},f} x \Rightarrow a$ if $f(\rho(x)) \leq f(a)$ whenever $f(\rho(y)) \leq f(b)$ for all $y \Rightarrow b \in \Phi$. 
We write $\Phi \models_{\mathbf{A}} x \Rightarrow a$ if $\Phi \models_{{\mathbf{A}},f} x \Rightarrow a$ holds for all valuation $f $ into $\mathbf{A}$. 
Moreover, given a class $\mathcal{K}$ of algebras, we write $\Phi \models_{\mathcal{K}} x \Rightarrow a$ if $\Phi \models_{\mathbf{A}} x \Rightarrow a$ for any $\mathbf{A} \in \mathcal{K}$. 

Next, we briefly recall $r \ell u$-groupoids. 
For more on $r \ell u$-groupoids and related notions, we refer the reader to \cite{GJKO07,GO10,GJ13}.  
\begin{Def}
A \emph{residuated lattice-ordered unital groupoid} (\emph{$r\ell u$-groupoid} for short) is an $\mathcal{L}$-algebra ${\mathbf{G}}=(G,\wedge,\vee,\cdot,\backslash,/,1)$ such that:
\begin{itemize}
\item $(G,\wedge,\vee)$ is a lattice,
\item $(G,\cdot,1)$ is a unital groupoid, and
\item for any $x,y,z \in G$, $xy \leq z$ iff $y \leq x\backslash z$ iff $x \leq z/y$.
\end{itemize} 
\end{Def}

We usually write $xy$ instead of $x \cdot y$. 
The inequation $x \leq y$ holds if and only if the equation $x \wedge y=x$ holds in any lattice. 
In view of this, inequations are always referred to as equations when considering lattice-ordered algebras. 
The class $\mathsf{RLUG}$ of $r\ell u$-groupoids forms a variety (cf. \cite{GO10}).  

For the rules of exchange $(\exch)$, contraction $(\cont)$ and weakening $(\weak)$, we consider the following three identities:
\begin{align*}
xy &\leq yx & x &\leq xx  & x &\leq 1
\end{align*} 
These identities are abbreviated by $\mathsf{e}$, $\mathsf{c}$ and $\mathsf{w}$, respectively.  
Given $R \subseteq \{\exch,\cont,\weak\}$, $\mathsf{RLUG}_{\mathsf{R}}$ denotes the subvariety of $\mathsf{RLUG}$ axiomatized by the set $\mathsf{R}$, where $\mathsf{R}$ denotes the subset of $\{\mathsf{e},\mathsf{c},\mathsf{w}\}$ corresponding to $R$.
For instance, if $R=\{\exch\}$, then $\mathsf{RLUG}_{\mathsf{R}}$ ($=\mathsf{RLUG}_{\mathsf{e}}$) is the variety of commutative $r \ell u$-groupoids.  
In a standard way one proves the following (strong) completeness theorem:
\begin{Lem}[Galatos and Ono (2010)]
\label{algebra1}
Let $R$ be a subset of $\{\exch,\cont,\weak\}$ and $\Phi \cup \{x \Rightarrow a\}$ a set of $\mathcal{L}$-sequents. $\Phi \vdash_{{\mathbf{FNL}}_R} x \Rightarrow a$ if and only if $\Phi \models_{\mathsf{RLUG}_{R}} x \Rightarrow a$.
\end{Lem}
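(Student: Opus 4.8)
The plan is to establish the two directions by the standard method: soundness by induction on derivations, and completeness by a Lindenbaum--Tarski construction. For soundness, suppose $\Phi \vdash_{\mathbf{FNL}_R} x \Rightarrow a$, fix an algebra $\mathbf{A} \in \mathsf{RLUG}_R$ and a valuation $f$ with $f(\rho(y)) \leq f(b)$ for every $y \Rightarrow b \in \Phi$, and show $f(\rho(z)) \leq f(c)$ for each sequent $z \Rightarrow c$ occurring in the derivation, by induction on its height. The initial sequents $(\mathrm{Id})$ and $(\Rightarrow 1)$ are immediate, and premises taken from $\Phi$ hold by hypothesis. I would first record that multiplication is order-preserving in each argument (a consequence of residuation), so that every unary linear polynomial $u$ is monotone; with this, each logical rule is matched by the corresponding residuation or lattice (in)equality, $(\mathrm{cut})$ is handled by monotonicity of $u$ together with transitivity of $\leq$, and each structural rule in $R$ is validated by its identity --- $(\exch)$ by $xy \leq yx$, $(\cont)$ by $x \leq xx$, $(\weak)$ by $x \leq 1$ --- which holds in $\mathsf{RLUG}_R$ exactly because the matching rule lies in $R$.

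For completeness I would build the Lindenbaum--Tarski algebra of $\Phi$. Set $a \equiv_\Phi b$ iff $\Phi \vdash_{\mathbf{FNL}_R} a \Rightarrow b$ and $\Phi \vdash_{\mathbf{FNL}_R} b \Rightarrow a$; using $(\mathrm{cut})$ for transitivity and the logical rules for compatibility with the connectives, $\equiv_\Phi$ is a congruence on $\mathbf{Fm}_{\mathcal{L}}$, and the quotient $\mathbf{A}_\Phi = \mathbf{Fm}_{\mathcal{L}}/{\equiv_\Phi}$ is an $r\ell u$-groupoid in which $[a] \leq [b]$ iff $\Phi \vdash_{\mathbf{FNL}_R} a \Rightarrow b$. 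Here the residuation law reduces to the invertibility of $(\Rightarrow \backslash)$, $(\Rightarrow /)$, $(\cdot \Rightarrow)$ and $(\Rightarrow \cdot)$, while each identity matching a rule of $R$ is derivable (e.g.\ $(\exch)$ yields $a \cdot b \Rightarrow b \cdot a$), so that $\mathbf{A}_\Phi \in \mathsf{RLUG}_R$. For the canonical valuation $f(p) = [p]$ one has $f(c) = [c]$ for every formula $c$, and a short induction on structures using $(\Rightarrow \cdot)$, $(\cdot \Rightarrow)$, $(1 \Rightarrow)$ and $(\mathrm{cut})$ gives $\Phi \vdash_{\mathbf{FNL}_R} x \Rightarrow a$ iff $\Phi \vdash_{\mathbf{FNL}_R} \rho(x) \Rightarrow a$; hence $f(\rho(x)) \leq f(a)$ iff $\Phi \vdash_{\mathbf{FNL}_R} x \Rightarrow a$.

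To conclude, observe that $f$ satisfies every premise in $\Phi$, since $y \Rightarrow b \in \Phi$ gives $\Phi \vdash_{\mathbf{FNL}_R} y \Rightarrow b$ and hence $f(\rho(y)) \leq f(b)$. Thus if $\Phi \models_{\mathsf{RLUG}_R} x \Rightarrow a$, then in particular $\Phi \models_{\mathbf{A}_\Phi, f} x \Rightarrow a$, so $f(\rho(x)) \leq f(a)$ and therefore $\Phi \vdash_{\mathbf{FNL}_R} x \Rightarrow a$. I expect the main obstacle to be the verification that $\equiv_\Phi$ is a genuine congruence and that $\mathbf{A}_\Phi$ validates all the $r\ell u$-groupoid axioms, in particular the residuation equivalences, since this is where every logical rule together with the presence of $(\mathrm{cut})$ must be invoked; the remaining steps are routine.
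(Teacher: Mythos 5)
Your proof is correct and is precisely the ``standard'' algebraic completeness argument that the paper invokes without spelling out: Lemma~\ref{algebra1} is stated with a citation to Galatos and Ono (2010) and the remark that it is proved ``in a standard way,'' meaning soundness by induction on derivations plus a Lindenbaum--Tarski quotient, exactly as you do. Your handling of the side issues (monotonicity of unary linear polynomials for (cut) and the structural rules, the equivalence $\Phi \vdash_{\mathbf{FNL}_R} x \Rightarrow a$ iff $\Phi \vdash_{\mathbf{FNL}_R} \rho(x) \Rightarrow a$, and the derivability of the identities $\mathsf{e}$, $\mathsf{c}$, $\mathsf{w}$ from the corresponding rules) fills in the details correctly.
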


Next, as a typical example of a class of $\mathcal{L}_{\oc}$-algebras, we introduce modal $r \ell u$-groupoids. 
Modal $r \ell u$-groupoids are similar to modal residuated lattices in \cite{Ono93,Ono05}. 
\begin{Def}
\label{mrlu}
A \emph{modal residuated lattice-ordered unital groupoid} (\emph{modal $r\ell u$-groupoid} for short) is an $\mathcal{L}_{\oc}$-algebra ${\mathbf{A}}=(A,\wedge,\vee,\cdot,\backslash,/,\oc,1)$ such that:
\begin{itemize}
\item $(A,\wedge,\vee,\cdot,\backslash,/,1)$ is an $r\ell u$-groupoid, and
\item the following hold:
\begin{enumerate}[(iii)]
\item $1 \leq \oc 1$,
\item $x \leq y \Rightarrow \oc x \leq \oc y$,
\item $\oc x \oc y \leq \oc (xy)$.
\end{enumerate}
\end{itemize} 
\end{Def}

The following lemma guarantees that the above condition (ii) can be replaced by the equation $\oc (x \wedge y) \leq  \oc y$.
\begin{Prop}[Cf. Ono (2005)]
\label{mrlu}
Let $\mathbf{A}$ be an $\mathcal{L}_{\oc}$-algebra. 
$\mathbf{A}$ is a modal $r \ell u$-groupoid if and only if the $\mathcal{L}$-reduct of $\mathbf{A}$ is an $r\ell u$-groupoid and the following three identities hold:
\begin{enumerate}[(iii)]
\item $1 \leq \oc 1$,
\item $\oc (x \wedge y) \leq  \oc y$,
\item $\oc x \oc y \leq \oc (xy)$.
\end{enumerate}
\end{Prop}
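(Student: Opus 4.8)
The plan is to observe that conditions (i) and (iii) are literally the same in the definition of a modal $r\ell u$-groupoid and in the proposition, so the entire content reduces to showing that, over the theory of $r\ell u$-groupoids, the monotonicity quasi-identity
\[
x \leq y \ \Rightarrow\ \oc x \leq \oc y
\]
is interderivable with the single identity $\oc (x \wedge y) \leq \oc y$. I would therefore prove the two implications separately, in each case using only lattice reasoning together with the fact (recalled just before the proposition) that $x \leq y$ holds precisely when $x \wedge y = x$.

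For the left-to-right direction, I would assume $\mathbf{A}$ is a modal $r\ell u$-groupoid. Since $(A,\wedge,\vee)$ is a lattice, we always have $x \wedge y \leq y$; applying monotonicity of $\oc$ to this inequation yields $\oc(x \wedge y) \leq \oc y$ immediately, which is the identity (ii) of the proposition. Conditions (i) and (iii) are inherited unchanged.

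For the converse, I would assume the $\mathcal{L}$-reduct is an $r\ell u$-groupoid satisfying the three identities, and derive monotonicity of $\oc$. Given $x \leq y$, the lattice characterization of the order gives $x \wedge y = x$; substituting this into the identity $\oc(x \wedge y) \leq \oc y$ then produces $\oc x \leq \oc y$, which is exactly condition (ii) of the definition. Combining the two directions establishes the stated equivalence.

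I do not anticipate any genuine obstacle: the argument is the standard replacement of an order-theoretic implication by an equation via $x \leq y \Leftrightarrow x \wedge y = x$. The only point worth flagging is the payoff rather than any difficulty, namely that this replacement is precisely what shows the defining conditions of a modal $r\ell u$-groupoid can be taken to be identities, so that the class forms a variety; this is the reason the proposition is worth isolating before the completeness and conservativity arguments that follow.
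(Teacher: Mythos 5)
Your proposal is correct and matches the paper's own proof essentially verbatim: both directions are handled exactly as you describe, applying monotonicity to $x \wedge y \leq y$ for one implication and substituting $x = x \wedge y$ (from $x \leq y$) into $\oc(x \wedge y) \leq \oc y$ for the other. Your closing remark about the payoff---that the equational form witnesses that modal $r\ell u$-groupoids form a variety---is also precisely the use the paper makes of this proposition immediately afterwards.
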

\begin{proof}
Let $\mathbf{A}$ be a modal $r\ell u$-groupoid. 
Obviously, the equation $x \wedge y \leq y$ holds. 
By monotonicity of $\oc$, we have $\oc (x \wedge y)\leq \oc y$. 
Conversely, let $\mathbf{A}$ be an $\mathcal{L}_{\oc}$-algebra in which the equations (i), (ii) and (iii) in the statement hold. 
Suppose that $x \leq y$, i.e., $x=x \wedge y$. Using the identity (ii), we have $\oc x=\oc(x \wedge y)\leq \oc y$; thus $\oc x \leq \oc y$.
\end{proof}

By Proposition \ref{mrlu}, the class of modal $r\ell u$-groupoids forms a variety. 
Moreover, we introduce the algebraic semantics for $\mathbf{NACILL}$.
\begin{Def}
\label{NACILL}
An \emph{NACILL-algebra} is a modal $r \ell u$-groupoid satisfying the following identities: 
\begin{enumerate}[(vii)]
\item $\oc x \leq x$,
\item $\oc x \leq \oc \oc x$,
\item $\oc x \leq 1$,
\item $\oc x \leq \oc x\oc x$,
\item $\oc xy =y\oc x$,
\item $\oc x(yz)=(\oc xy)z$,
\item $x(y\oc z)=(xy)\oc z$.
\end{enumerate}
\end{Def}

Notice that the equation $x(\oc y z)=(x \oc y)z$ holds in NACILL-algebras.
The variety of NACILL-algebras is denoted by $\mathsf{NACILL}$. 
As in the case of $r \ell u$-groupoids, $\mathsf{NACILL}_{\mathsf{R}}$ denotes the subvariety of $\mathsf{NACILL}$ determined by $R \subseteq \{\exch,\cont,\weak\}$. 
A member of $\mathsf{NACILL}_{\mathsf{R}}$ is called an \emph{NACILL$_{R}$-algebra}. 
 One proves the following strong completeness theorem by a tedious completeness argument.   
\begin{Lem} 
\label{algebra2}
Let $R$ be a subset of $\{\exch,\cont,\weak\}$ and $\Phi \cup \{x \Rightarrow a\}$ a set of $\mathcal{L}_{\oc}$-sequents. $\Phi \vdash_{{\mathbf{NACILL}}_R} x \Rightarrow a$ if and only if $\Phi \models_{\mathsf{NACILL}_{R}} x \Rightarrow a$.
\end{Lem}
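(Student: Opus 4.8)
The plan is to prove the two directions separately: soundness (left-to-right) by induction on derivations, and completeness (right-to-left) by a Lindenbaum--Tarski construction relative to $\Phi$. For soundness, I would fix an $\mathsf{NACILL}_{R}$-algebra $\mathbf{A}$ and a valuation $f$ with $f(\rho(y)) \le f(b)$ for every $y \Rightarrow b \in \Phi$, and show by induction on derivation height in the calculus augmented by $\Phi$ that every provable sequent $x' \Rightarrow c$ satisfies $f(\rho(x')) \le f(c)$. Initial sequents and the purely logical rules reduce to residuation and the lattice structure exactly as in the proof of Lemma~\ref{algebra1}. The rules of Figure~\ref{inf2} correspond one by one to the (in)equations of Definition~\ref{NACILL} and the modal $r\ell u$-groupoid conditions: $(\oc\Rightarrow)$ to $\oc x \le x$, $(\oc\weak)$ to $\oc x \le 1$, $(\oc\cont)$ to $\oc x \le \oc x\oc x$, while the double-line rules $(\oc\exch)$, $(\oc\asso)$, $(\oc\asso^{*})$ use the \emph{equations} $\oc xy = y\oc x$, $\oc x(yz) = (\oc xy)z$, $x(y\oc z) = (xy)\oc z$ (which is precisely why these are required to hold as equalities, not mere inequalities). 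The rules of $R$ match $\mathsf{e},\mathsf{c},\mathsf{w}$ as in Lemma~\ref{algebra1}.

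The one case in soundness that is not a direct read-off is $(\Rightarrow\oc)$. Here I would isolate a separate \emph{promotion lemma}: for every $x^{\oc} \in K_{\mathcal{L}_{\oc}}$ and every valuation, $f(\rho(x^{\oc})) \le \oc f(\rho(x^{\oc}))$. Since $f(\rho(x^{\oc}))$ is a $\cdot$-product of elements of the form $\oc f(c)$, one compresses such a product using $\oc x \le \oc\oc x$ together with the submultiplicativity $\oc x\oc y \le \oc(xy)$ and $1 \le \oc 1$; the conclusion of $(\Rightarrow\oc)$ then follows from $f(\rho(x^{\oc})) \le \oc f(\rho(x^{\oc})) \le \oc f(a)$ by monotonicity of $\oc$, given the premise $f(\rho(x^{\oc})) \le f(a)$.

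For completeness, I would build the Lindenbaum--Tarski algebra of $\mathbf{NACILL}_{R}$ over $\Phi$. Define a preorder on $Fm_{\mathcal{L}_{\oc}}$ by $a \preceq_{\Phi} b$ iff $\Phi \vdash_{\mathbf{NACILL}_{R}} a \Rightarrow b$; reflexivity is (Id) and transitivity is (cut). Let $\mathbf{A}_{\Phi}$ be the quotient of $\mathbf{Fm}_{\mathcal{L}_{\oc}}$ by the induced equivalence $a \equiv_{\Phi} b$, with the operations lifted from the connectives. Well-definedness of the operations, and the fact that $\preceq_{\Phi}$ induces the lattice order, are verified from the left and right introduction rules with (cut), just as for $r\ell u$-groupoids; that $\mathbf{A}_{\Phi}$ satisfies each identity of Definition~\ref{NACILL} is then witnessed by a provable sequent, e.g. (i) by $\oc a \Rightarrow a$, (ii) by $\oc a \Rightarrow \oc\oc a$, and the remaining modal and associativity/commutation identities by the formulas of Proposition~\ref{base}. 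A small bridge lemma connects $\Phi$ with the formula algebra: for any structure $y$, the sequent $y \Rightarrow b$ is provable iff $\rho(y) \Rightarrow b$ is, which follows from $(\cdot\Rightarrow)$, $(\Rightarrow\cdot)$, $(1\Rightarrow)$, $(\Rightarrow 1)$ and (cut). Taking $f$ to be the natural valuation $p \mapsto [p]$, one gets $f(\rho(x')) = [\rho(x')]$, so every $y \Rightarrow b \in \Phi$ is validated (as $\Phi \vdash y \Rightarrow b$ gives $[\rho(y)] \le [b]$), while $\Phi \not\vdash x \Rightarrow a$ yields $[\rho(x)] \not\le [a]$, refuting the sequent in $\mathbf{A}_{\Phi}$.

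The main obstacle is bookkeeping rather than a single conceptual difficulty: one must check that every defining identity of an $\mathsf{NACILL}_{R}$-algebra is both preserved by the rules (soundness) and forced by provability (completeness). The delicate part in both directions is the modal interaction identities (v)--(vii) of Definition~\ref{NACILL}, whose correctness hinges on the restriction of the $\oc$-structural rules to $K_{\mathcal{L}_{\oc}}$, so that only $\oc$-elements trigger the relevant commutation and reassociation. Apart from this, the promotion lemma for the $(\Rightarrow\oc)$ case is the one genuinely non-routine computation, as it requires the combined use of the monotonicity, idempotency, and submultiplicativity of $\oc$; everything else is the tedious but mechanical matching anticipated in the statement.
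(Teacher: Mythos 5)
Your proposal is correct and is exactly the standard argument the paper alludes to when it says the lemma is proved ``by a tedious completeness argument'' (the paper gives no details): soundness by induction on derivations from $\Phi$, with a promotion lemma $f(\rho(x^{\oc}))\le\oc f(\rho(x^{\oc}))$ for the $(\Rightarrow\oc)$ case, and completeness via the Lindenbaum--Tarski algebra over $\Phi$. Your one understated point is that the structural $\oc$-rules $(\oc\weak)$, $(\oc\cont)$, $(\oc\exch)$, $(\oc\asso)$, $(\oc\asso^{*})$ act on arbitrary $x^{\oc}\in K_{\mathcal{L}_{\oc}}$, so their soundness is not a literal read-off of identities (iii)--(vii) of Definition~\ref{NACILL} (which are stated for single $\oc$-images) but needs the closure of integral, central, associating elements under products -- or, as your promotion lemma already supplies, the chain $w\le\oc w$ followed by the identities applied to $\oc w$ -- and you correctly flag this bookkeeping.
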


In the rest of this section, we recall the notions of nuclei, residuated frames, and Dedekind-MacNeille completions,  which are useful for proving cut elimination for a wide range of substructural logics; see e.g., \cite{CGT11,CGT12,GJ13} for more on algebraic cut elimination. 

\begin{Def}
Let ${\mathbf{G}}=(G,\cdot,\leq)$ be a partially-ordered groupoid. A map $\gamma \colon G \rightarrow G$ is called a \emph{nucleus} on $\mathbf{G}$ if it satisfies the following four conditions: for any $x,y \in G$,
\begin{description}
\item[\normalfont ($\gamma$1)] $x \leq \gamma(x)$,
\item[\normalfont ($\gamma$2)] $\gamma(\gamma(x)) \leq \gamma(x)$,
\item[\normalfont ($\gamma$3)] $x \leq y \Rightarrow \gamma(x) \leq \gamma(y)$,
\item[\normalfont ($\gamma$4)] $\gamma(x)\gamma(y)\leq \gamma(xy)$.
\end{description}
\end{Def}

Given an $r\ell u$-groupoid $\mathbf{G}$ and a nucleus $\gamma$ on $\mathbf{G}$, the algebra $\gamma({\mathbf{G}})=(\gamma(G),\wedge,\vee_{\gamma},\cdot_{\gamma},\backslash,/,\gamma(1))$, where $x \vee_{\gamma} y=\gamma(x \vee y)$ and $x \cdot_{\gamma} y=\gamma(xy)$, forms an $r \ell u$-groupoid. 

\begin{Def}
A \emph{unital residuated frame} (\emph{$ru$-frame} for short) is a tuple ${\mathbf{W}}=(W,W',N,\varepsilon)$ such that:
\begin{itemize}
\item $(W,\circ,\varepsilon)$ is a unital groupoid, 
\item $W'$ is a set, and
\item $N \subseteq W \times W'$ is a nuclear relation, i.e., for any $x,y \in W$ and $z \in W'$, there exist $x \dbackslash z, z \dslash y \in W'$ such that:
\[
x \circ y \N z \Longleftrightarrow y \N x \dbackslash z \Longleftrightarrow x \N z \dslash y. 
\] 
\end{itemize} 
\end{Def}

Let ${\mathbf{W}}=(W,W',N,\varepsilon)$ be an $ru$-frame. 
For any $X,Y \in {\mathcal{P}}(W)$ and $Z \in {\mathcal{P}}(W')$, define:
\begin{align*}
X^{\rhd}&:=\{a \in W' \mid \forall x \in X, x \N a\}; \\
Z^{\lhd}&:=\{a \in W \mid \forall z \in Z, a \N z\}; \\ 
X \circ Y&:=\{x \circ y \mid x \in X, y \in Y\}; \\
X\backslash Y&:=\{z \mid X \circ \{z\} \subseteq Y\}; \\
Y/X&:=\{z \mid \{z\} \circ X \subseteq Y\}.
\end{align*}
The map $\gamma_{N}$ on ${\mathcal{P}}(W)$ given by $\gamma_{N}(X)=X^{\rhd\lhd}$ is a nucleus on the $r \ell u$-groupoid $({\mathcal{P}}(W),\cap,\cup,\circ,\backslash,/,\{1\})$; see e.g., \cite[Lemma 3.36]{GJKO07} for a proof of this fact. 
A subset $X$ of $W$ is said to be \emph{Galois-closed} if $X=\gamma_{N}(X)$. 
Specifically, note that $Z^{\lhd}$ is a Galois-closed set for all $Z \subseteq W'$.   
The $r \ell u$-groupoid ${\mathbf{W}}^+=(\gamma_{N}[{\mathcal{P}}(W)],\cap,\cup_{\gamma_{N}},\circ_{\gamma_{N}},\backslash,/,\gamma_{N}(\{\varepsilon\}))$ is called the \emph{Galois algebra} of $\mathbf{W}$. 
The lattice reduct of $\mathbf{W}^+$ forms a complete lattice; hence we have the following result.
\begin{Lem}[Galatos and Jipsen (2013)]
\label{frame}
If $\mathbf{W}$ is an $ru$-frame, then $\mathbf{W}^+$ is a complete $r\ell u$-groupoid.
\end{Lem}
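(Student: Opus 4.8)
The plan is to split the claim into two parts that are already essentially prepared by the preceding material: that $\mathbf{W}^+$ carries an $r\ell u$-groupoid structure, and that its lattice reduct is complete. For the first part, I would begin from the observation that the full powerset $(\mathcal{P}(W),\cap,\cup,\circ,\backslash,/,\{\varepsilon\})$ is itself a complete $r\ell u$-groupoid: its lattice operations are set-theoretic intersection and union, the product is the pointwise product $X\circ Y=\{x\circ y\mid x\in X,\ y\in Y\}$ with unit $\{\varepsilon\}$, and the sets $X\backslash Y$ and $Y/X$ defined above are precisely the residuals of $\circ$, as one reads off immediately from their definitions. Since $\gamma_{N}(X)=X^{\rhd\lhd}$ is a nucleus on this $r\ell u$-groupoid by the cited fact \cite[Lemma 3.36]{GJKO07}, the general construction recalled just after the definition of a nucleus applies verbatim: it yields that $\gamma_{N}[\mathcal{P}(W)]=\mathbf{W}^+$, equipped with $\cap$, $\cup_{\gamma_{N}}$, $\circ_{\gamma_{N}}$, $\backslash$, $/$ and $\gamma_{N}(\{\varepsilon\})$, is an $r\ell u$-groupoid. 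The residuals require no modification, because $X\backslash Y$ and $Y/X$ are already Galois-closed whenever $Y$ is, which is exactly what lets them be inherited unchanged into $\mathbf{W}^+$.

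The only remaining point is completeness of the lattice reduct, and here I would appeal to the standard theory of closure operators. A nucleus is in particular a closure operator, via conditions ($\gamma$1)--($\gamma$3), so the Galois-closed subsets of $W$ are precisely the fixed points of $\gamma_{N}$. For an arbitrary family $\{X_{i}\}_{i\in I}\subseteq\gamma_{N}[\mathcal{P}(W)]$, I would show that $\bigcap_{i}X_{i}$ is again Galois-closed: by ($\gamma$1) it contains itself under $\gamma_{N}$, while $\bigcap_{i}X_{i}\subseteq X_{j}$ together with monotonicity ($\gamma$3) and $\gamma_{N}(X_{j})=X_{j}$ gives $\gamma_{N}(\bigcap_{i}X_{i})\subseteq X_{j}$ for every $j$, hence $\gamma_{N}(\bigcap_{i}X_{i})\subseteq\bigcap_{i}X_{i}$. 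Thus $\bigcap_{i}X_{i}$ is the greatest lower bound of the family in $(\gamma_{N}[\mathcal{P}(W)],\subseteq)$. Dually, $\gamma_{N}(\bigcup_{i}X_{i})$ is a Galois-closed set containing each $X_{i}$ and contained in every Galois-closed upper bound, so it is the least upper bound. Arbitrary meets and joins therefore exist, realized by $\bigcap_{i}X_{i}$ and $\gamma_{N}(\bigcup_{i}X_{i})$ respectively, and the lattice reduct of $\mathbf{W}^+$ is complete.

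I do not expect a serious obstacle here: once the nucleus fact and the general $\gamma(\mathbf{G})$ construction are invoked, everything reduces to the elementary fact that the fixed points of a closure operator on a complete lattice again form a complete lattice. The one step that genuinely needs the nucleus axioms, rather than being purely set-theoretic, is the closedness of arbitrary intersections of Galois-closed sets; this is where monotonicity ($\gamma$3) is used, and it is exactly what makes plain intersection $\bigcap_{i}X_{i}$ — rather than $\gamma_{N}(\bigcap_{i}X_{i})$ — the correct description of meets in $\mathbf{W}^+$.
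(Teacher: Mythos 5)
Your proposal is correct and follows essentially the same route the paper takes: it invokes \cite[Lemma 3.36]{GJKO07} to get that $\gamma_{N}$ is a nucleus on the powerset $r\ell u$-groupoid $({\mathcal{P}}(W),\cap,\cup,\circ,\backslash,/,\{\varepsilon\})$, applies the general $\gamma(\mathbf{G})$ construction recalled after the definition of a nucleus to obtain the $r\ell u$-groupoid structure on $\gamma_{N}[{\mathcal{P}}(W)]$, and observes that the Galois-closed sets, being the fixed points of a closure operator on a complete lattice, form a complete lattice with meets given by $\bigcap_{i}X_{i}$ and joins by $\gamma_{N}(\bigcup_{i}X_{i})$. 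You merely spell out the closure-operator details that the paper leaves implicit, so there is nothing to correct.
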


In particular for every $r \ell u$-groupoid $\mathbf{G}$, clearly ${\mathbf{W}_{\mathbf{G}}}=(G,G,\leq,1)$ is an $ru$-frame, where $x \dbackslash z=x\backslash z$ and $z \dslash x=z/x$. 
Moreover, define the map from $G$ to $\gamma_{\leq}[{\mathcal{P}}(G)]$ by $x \mapsto \{x\}^{\lhd}$. 
It is an embedding of $\mathbf{G}$ into ${\mathbf{W}^+_{\mathbf{G}}}=(\gamma_{\leq}[{\mathcal{P}}(G)],\cap,\cup_{\gamma_{\leq}},\circ_{\gamma_{\leq}},\backslash,/,\gamma_{\leq}(\{1\}))$, which preserves existing meets and joins. 
(For a proof, see e.g., Section 3.4.12 in Galatos \emph{et al.} (2007).) 
In this case, the Galois algebra $\mathbf{W}^+_{\mathbf{G}}$ is called the \emph{Dedekind-MacNeille completion} of $\mathbf{G}$. 
We say that a class $\mathcal{K}$ of $r \ell u$-groupoids \emph{admits Dedekind-MacNeille completions} if $\mathbf{W}^+_{\mathbf{A}} \in \mathcal{K}$ for any $\mathbf{A} \in \mathcal{K}$.
One can easily check that the following holds: 
\begin{Lem}[Galatos and Ono (2010)]
\label{DM}
Let $R$ be a subset of $\{\exch,\cont,\weak\}$. 
$\mathsf{RLUG}_{\mathsf{R}}$ admits Dedekind-MacNeille completions.
\end{Lem}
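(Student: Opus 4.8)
The plan is to show that for each $R\subseteq\{\exch,\cont,\weak\}$, the Dedekind-MacNeille completion $\mathbf{W}^+_{\mathbf{A}}$ of an $r\ell u$-groupoid $\mathbf{A}\in\mathsf{RLUG}_{\mathsf{R}}$ again lies in $\mathsf{RLUG}_{\mathsf{R}}$. Since Lemma~\ref{frame} already guarantees that $\mathbf{W}^+_{\mathbf{A}}$ is a complete $r\ell u$-groupoid, the only thing left to verify is that each of the defining identities $\mathsf{e}$, $\mathsf{c}$, $\mathsf{w}$ indexed by $R$ is preserved under the completion. Thus the argument reduces to three essentially independent identity-preservation checks, one per structural rule.

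First I would recall that $\mathbf{W}^+_{\mathbf{A}}$ is built from the $ru$-frame $\mathbf{W}_{\mathbf{A}}=(A,A,\leq,1)$, whose elements are the Galois-closed subsets $\gamma_\leq[\mathcal{P}(A)]$, with multiplication $X\circ_{\gamma_\leq}Y=\gamma_\leq(X\circ Y)$ and unit $\gamma_\leq(\{1\})$. The key technical observation is that each identity in question is determined by the behaviour of the underlying groupoid operation $\circ$ on $A$ before closure, together with monotonicity of $\gamma_\leq$ and the fact (noted in the excerpt) that $Z^\lhd$ is always Galois-closed. For the exchange identity $xy\le yx$, I would argue that since commutativity holds pointwise on $A$, we have $X\circ Y=Y\circ X$ as raw subsets, whence $X\circ_{\gamma_\leq}Y=Y\circ_{\gamma_\leq}X$; this is the easiest case. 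For weakening $x\le 1$, I would use that the unit of $\mathbf{W}^+_{\mathbf{A}}$ is $\gamma_\leq(\{1\})$ and show every Galois-closed set sits below it, exploiting that $x\le 1$ holds in $\mathbf{A}$.

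The genuinely delicate case is contraction $x\le xx$. The difficulty is that $X\subseteq X\circ X$ need not hold as raw subsets even when $x\le xx$ holds pointwise: from $x\le xx$ one only gets that each generator $x$ lies below a product, not that $X$ itself is contained in $X\circ X$. Here the standard remedy is to pass through the closure and use that $\gamma_\leq$ is a nucleus together with the embedding $x\mapsto\{x\}^\lhd$ that preserves the relevant structure. Concretely, one shows $X\subseteq\gamma_\leq(X\circ X)=X\circ_{\gamma_\leq}X$ by verifying it first on the image of $\mathbf{A}$ under the embedding and then extending to arbitrary Galois-closed sets via the fact that every such set is a meet (intersection) of closed sets coming from $\mathbf{A}$, using that the embedding preserves existing meets. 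I expect this contraction step to be the main obstacle, precisely because it is the one identity that is not straightforwardly inherited at the level of raw set-products and therefore requires the nucleus machinery to be invoked carefully.

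Finally, I would remark that this entire scheme is exactly the preservation-of-identities argument familiar from algebraic cut elimination for $\mathbf{FNL}$ and its structural extensions (cf.\ the references to Ciabattoni, Galatos, and Terui cited in the excerpt), so rather than reproving each case in full I would cite that the identities $\mathsf{e}$, $\mathsf{c}$, $\mathsf{w}$ are among those known to be preserved by Dedekind-MacNeille completions, and merely indicate the three verifications. Since each required identity holds in $\mathbf{W}^+_{\mathbf{A}}$ and $\mathbf{W}^+_{\mathbf{A}}$ is already a complete $r\ell u$-groupoid, we conclude $\mathbf{W}^+_{\mathbf{A}}\in\mathsf{RLUG}_{\mathsf{R}}$, which is precisely the statement that $\mathsf{RLUG}_{\mathsf{R}}$ admits Dedekind-MacNeille completions.
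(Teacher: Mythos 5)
Your overall strategy is exactly right, and it is in fact all the paper itself offers: Lemma~\ref{DM} is quoted from Galatos and Ono (2010) with the remark that it is easily checked, the intended argument being precisely your reduction — Lemma~\ref{frame} gives the complete $r\ell u$-groupoid structure, and one verifies case by case that each of the identities $\mathsf{e}$, $\mathsf{c}$, $\mathsf{w}$ indexed by $R$ survives the completion. Your exchange case is correct, and weakening is as easy as you expect: since $x\leq 1$ holds in $\mathbf{A}$, the unit $\gamma_{\leq}(\{1\})=\{1\}^{\rhd\lhd}$ contains every $b\leq 1$, hence equals $A$, the top of $\mathbf{W}^+_{\mathbf{A}}$, and every Galois-closed $X$ trivially lies below it.

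Your treatment of contraction, however, misdiagnoses the difficulty, and the mechanism you propose would not work as stated. You are right that $X\subseteq X\circ X$ can fail as raw subsets; but the remedy is not to verify the inclusion on the image of $\mathbf{A}$ and then extend to arbitrary Galois-closed sets via meets. That extension step is invalid: writing $X=\bigcap_i\{a_i\}^{\lhd}$, the inclusion $X\subseteq\{a_i\}^{\lhd}$ yields $\gamma_{\leq}(X\circ X)\subseteq\gamma_{\leq}(\{a_i\}^{\lhd}\circ\{a_i\}^{\lhd})$, so knowing $\{a_i\}^{\lhd}\subseteq\gamma_{\leq}(\{a_i\}^{\lhd}\circ\{a_i\}^{\lhd})$ for each $i$ bounds $X$ by the wrong side; the product does not interact with intersections in the direction you need. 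What actually works is a one-line element-wise argument, no more delicate than the other two cases: let $X\in\gamma_{\leq}[{\mathcal{P}}(A)]$, $x\in X$, and $z\in(X\circ X)^{\rhd}$. Then $x\circ x \N z$, i.e., $xx\leq z$ in $\mathbf{A}$, and since $x\leq xx$ holds in $\mathbf{A}$ we get $x\leq z$, i.e., $x \N z$. Hence $x\in(X\circ X)^{\rhd\lhd}=X\circ_{\gamma_{\leq}}X$, so $X\subseteq X\circ_{\gamma_{\leq}}X$, which is contraction in $\mathbf{W}^+_{\mathbf{A}}$. The only fact used is that $\gamma_{\leq}(Y)=Y^{\rhd\lhd}$ contains every element lying below some member of $Y$ — there is no genuine obstacle here. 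Your fallback of citing the general preservation results of Ciabattoni, Galatos, and Terui is legitimate and would rescue the lemma on its own, but the one piece of new mathematics your proposal actually offers, the meet-extension for $\mathsf{c}$, is the step that would fail.
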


\section{Undecidability of propositional non-associative intuitionistic linear logics}
\label{mainsection}
In this section, we prove the main theorem. 
We start with the following lemma:
\if0
Firstly, we simply show a relationship between $r \ell u$-groupoids and NACILL-algebras, which might be seemingly obvious from the syntactic point of view, but be of importance.
\begin{Lem}
\label{DM2}
Every $r \ell u$-groupoid is embedded into the $\mathcal{L}$-reduct of an NACILL-algebra. 
That is, every $r \ell u$-groupoid is the $\mathcal{L}$-subreduct of an NACILL-algebra.
\end{Lem}
\begin{proof}
Let $\mathbf{G}$ be an $r\ell u$-groupoid. 
We have the Dedekind-MacNeille completion $\mathbf{W}^+_{\mathbf{G}}$. 
As already remarked, we have the embedding of $\mathbf{G}$ into $\mathbf{W}^+_{\mathbf{G}}$. 
Define the unary operation $\oc_{\gamma_{\leq}}$ on $\gamma_{\leq}[{\mathcal{P}}(G)]$ by:
\[
\oc_{\gamma_{\leq}}X:=\gamma_{\leq}(X \cap K).
\]
Here, $K=\{x \in G \mid x\,\, \text{satisfies the conditions (i)--(v)}\}$:
\begin{enumerate}[(iii)]
\item $x \leq 1$,
\item $xx=x$,
\item $xa=ax$, for every $a \in G$,
\item $x(ab)=(xa)b$, for every $a,b \in G$,
\item $(ab)x=a(bx)$, for every $a,b \in G$.
\end{enumerate}
Note that $K$ is closed under the multiplication of $\mathbf{G}$ and actually forms a subalgebra of the $\{\cdot,1\}$-reduct of $\mathbf{G}$. 
It can be easily checked that the $\mathcal{L}_{\oc}$-algebra ${\mathbf{W}^+_{\mathbf{G}}}=(\gamma_{\leq}[{\mathcal{P}}(G)],\cap,\cup_{\gamma_{\leq}}\cdot_{\gamma_{\leq}},\backslash,/,\oc_{\gamma_{\leq}},\gamma_{\leq}(\{1\}))$ is a complete NACILL-algebra. 
\end{proof} 
\fi

\begin{Lem}
\label{sc}
Let $R$ be a subset of $\{\exch,\cont, \weak\}$ and $\Phi \cup \{x \Rightarrow a\}$ a set of $\mathcal{L}$-sequents. $\Phi \vdash_{{\mathbf{FNL}}_R} x \Rightarrow a$ if and only if $\Phi \vdash_{{\mathbf{NACILL}}_R} x \Rightarrow a$.
\end{Lem}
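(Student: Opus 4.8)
The plan is to prove the two directions separately, treating left-to-right as immediate and concentrating the work on conservativity, which I handle algebraically. For the forward implication, note that the sequent calculus for $\mathbf{NACILL}_R$ is obtained from that of $\mathbf{FNL}_R$ merely by adjoining the rules of Figure~\ref{inf2}; hence every $\mathbf{FNL}_R$-rule is an $\mathbf{NACILL}_R$-rule, and any derivation of $x \Rightarrow a$ from $\Phi$ in $\mathbf{FNL}_R$ is \emph{verbatim} a derivation in $\mathbf{NACILL}_R$. So $\Phi \vdash_{\mathbf{FNL}_R} x \Rightarrow a$ gives $\Phi \vdash_{\mathbf{NACILL}_R} x \Rightarrow a$ at once.

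For conservativity I would pass to the algebraic semantics through the two completeness theorems (Lemmas~\ref{algebra1} and~\ref{algebra2}); it then suffices to show the semantic implication $\Phi \models_{\mathsf{NACILL}_R} x \Rightarrow a \ \Rightarrow\ \Phi \models_{\mathsf{RLUG}_R} x \Rightarrow a$, which I would argue contrapositively. Assume $\Phi \not\models_{\mathsf{RLUG}_R} x \Rightarrow a$ and fix a witnessing $\mathbf{G} \in \mathsf{RLUG}_R$ with a valuation $f$ such that $f(\rho(y)) \le f(b)$ for every $y \Rightarrow b \in \Phi$ while $f(\rho(x)) \not\le f(a)$. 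The whole point is then to expand $\mathbf{G}$ to an $\mathsf{NACILL}_R$-algebra without disturbing its $\mathcal{L}$-structure. I would take the Dedekind--MacNeille completion $\mathbf{W}^+_{\mathbf{G}}$, which lies in $\mathsf{RLUG}_R$ by Lemma~\ref{DM}, with its canonical $\mathcal{L}$-embedding $e\colon \mathbf{G} \to \mathbf{W}^+_{\mathbf{G}}$, $x \mapsto \{x\}^{\lhd}$, and on $\gamma_{\leq}[\mathcal{P}(G)]$ define the modality by $\oc_{\gamma_{\leq}}X := \gamma_{\leq}(X \cap K)$, where $K$ collects all $p \in G$ that are subunits ($p \le 1$), idempotent ($pp=p$), central ($pa=ap$), and associate two-sidedly ($p(ab)=(pa)b$ and $(ab)p=a(bp)$) for all $a,b$. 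A short computation shows $1 \in K$ and that $K$ is closed under $\cdot$, so it is a submonoid of the $\{\cdot,1\}$-reduct of $\mathbf{G}$.

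The main obstacle, and essentially the only step needing genuine work, is verifying that $\mathbf{A} := (\mathbf{W}^+_{\mathbf{G}}, \oc_{\gamma_{\leq}})$ satisfies all the conditions of Definition~\ref{NACILL}. The order-theoretic ones fall out of simple inclusions: $\oc x \le x$ from $X \cap K \subseteq X$; $\oc x \le 1$ from $K \subseteq \{1\}^{\lhd}$; $1 \le \oc 1$ from $1 \in K$; $\oc x \le \oc\oc x$ from $X \cap K \subseteq \gamma_{\leq}(X \cap K) \cap K$; $\oc x \le \oc x\,\oc x$ from idempotency of members of $K$; monotonicity from monotonicity of $\cap$ and $\gamma_{\leq}$; and $\oc x\,\oc y \le \oc(xy)$ from closure of $K$ under $\cdot$. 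The genuinely structural identities (v)--(vii), asserting that $\oc x$ commutes and associates with arbitrary products, I would reduce via the standard nucleus identity $\gamma_{\leq}(\gamma_{\leq}(U) \circ V) = \gamma_{\leq}(U \circ V) = \gamma_{\leq}(U \circ \gamma_{\leq}(V))$ to the pointwise set equalities $(X \cap K) \circ Y = Y \circ (X \cap K)$ and $(X \cap K)\circ(Y \circ Z) = ((X \cap K)\circ Y)\circ Z$ (and its mirror), which hold precisely because every $p \in K$ is central and associating in $\mathbf{G}$. Finally, the identities defining $R$ are $\mathcal{L}$-identities, so they are inherited from $\mathbf{W}^+_{\mathbf{G}} \in \mathsf{RLUG}_R$; thus $\mathbf{A} \in \mathsf{NACILL}_R$.

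To conclude, put $g := e \circ f$. Since $e$ is an $\mathcal{L}$-embedding and every formula in $\Phi \cup \{x \Rightarrow a\}$ is an $\mathcal{L}$-formula, we have $g(c) = e(f(c))$ for all relevant $c$, and $e$ reflects and preserves $\le$. Hence $g$ realizes all of $\Phi$ in $\mathbf{A}$ while $g(\rho(x)) \not\le g(a)$, so $\Phi \not\models_{\mathsf{NACILL}_R} x \Rightarrow a$. This establishes the contrapositive, and with it the conservativity direction, completing the proof.
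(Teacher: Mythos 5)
Your proposal is correct and takes essentially the same route as the paper's own proof: both directions are reduced to the completeness lemmas (Lemmas~\ref{algebra1} and~\ref{algebra2}), and conservativity is obtained by equipping the Dedekind--MacNeille completion of the refuting $r\ell u$-groupoid with the modality $\oc_{\gamma_{\leq}}X=\gamma_{\leq}(X\cap K)$ for exactly the same set $K$ of subunital, idempotent, central, two-sidedly associating elements, verifying the NACILL identities, inheriting the structural identities for $R$ via Lemma~\ref{DM}, and transferring the countervaluation along the embedding $x\mapsto\{x\}^{\lhd}$. The only deviations are presentational, e.g.\ you check the identities (v)--(vii) of Definition~\ref{NACILL} via pointwise set equalities and the standard nucleus identity rather than the paper's inclusion-chasing.
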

\begin{proof}
It suffices to show the ``if'' direction, since the ``only-if'' direction clearly holds. 
Suppose that $\Phi \not \vdash_{{\mathbf{FNL}}_R} x \Rightarrow a$. 
By Lemma \ref{algebra1}, we have $f(\rho(y))\leq f(b)$ for all $y \Rightarrow b \in \Phi$ and  $f(\rho(x)) \not\leq f(a)$, for some $\mathbf{A} \in \mathsf{RLUG}_{\mathsf{R}}$ and some valuation $f$ into $\mathbf{A}$. 
Then we have the Dedekind-MacNeille completion $\mathbf{W}^+_{\mathbf{A}}$ of $\mathbf{A}$. 
As we have already remarked, the map $h \colon A \rightarrow \gamma_{\leq}[{\mathcal{P}}(A)]$ defined by $h(x)=\{x\}^{\lhd}$ is an embedding of $\mathbf{A}$ into $\mathbf{W}^+_{\mathbf{A}}$.    
Define the unary operation $\oc_{\gamma_{\leq}}$ on $\gamma_{\leq}[{\mathcal{P}}(A)]$ by:
\[
\oc_{\gamma_{\leq}}X:=\gamma_{\leq}(X \cap K).
\]
Here, $K=\{x \in A \mid x\,\, \text{satisfies the conditions (1)--(5)}\}$:
\begin{enumerate}[(iii)]
\renewcommand{\labelenumi}{(\arabic{enumi})}
\item $x \leq 1$,
\item $xx=x$,
\item $xa=ax$, for every $a \in A$,
\item $x(ab)=(xa)b$, for every $a,b \in A$,
\item $(ab)x=a(bx)$, for every $a,b \in A$.
\end{enumerate}
Notice that $K$ is a subalgebra of the $\{\cdot,1\}$-reduct of $\mathbf{A}$. 
Moreover, we check that the $\mathcal{L}_{\oc}$-algebra ${\mathbf{W}^{\oc+}_{\mathbf{A}}}=(\gamma_{\leq}[{\mathcal{P}}(A)],\cap,\cup_{\gamma_{\leq}},\circ_{\gamma_{\leq}},\backslash,/,\oc_{\gamma_{\leq}},\gamma_{\leq}(\{1\}))$ is an NACILL-algebra. 

Firstly, we show that $\mathbf{W}^{\oc+}_{\mathbf{A}}$ satisfies the conditions (i), (ii) and (iii) in Definition \ref{mrlu}. 
\begin{enumerate}[(iii)]
\item Clearly, $1 \in \gamma_{\leq}(\{1\}) \cap K$. By monotonicity of $\gamma_{\leq}$, $\gamma_{\leq}(\{1\}) \subseteq \oc_{\gamma_{\leq}}\gamma_{\leq}(\{1\})$.
\item Let $X,Y \in \gamma_{\leq}[{\mathcal{P}}(A)]$ be such that $X \subseteq Y$. We have $X \cap K \subseteq Y \cap K$. Clearly, $\oc_{\gamma_{\leq}} X \subseteq \oc_{\gamma_{\leq}}Y$. 
\item  Let $X,Y \in \gamma_{\leq}[{\mathcal{P}}(A)]$. 
The following inclusions hold: 
\begin{align*}
(X \cap K) \circ (Y \cap K) & \subseteq X \circ Y \\
(X \cap K) \circ (Y \cap K) & \subseteq K \circ K 
\end{align*}
$K$ is closed under multiplication; thus we have $(X \cap K) \circ (Y \cap K) \subseteq (X \circ Y) \cap K$.
Using properties of $\gamma_{\leq}$, we have $\oc_{\gamma_{\leq}} X \circ_{\gamma_{\leq}} \oc_{\gamma_{\leq}} Y \subseteq \oc_{\gamma_{\leq}}(X \circ_{\gamma{\leq}} Y)$. 
\end{enumerate}

Secondly, we show that all the equations in Definition \ref{NACILL} hold in $\mathbf{W}^{\oc+}_{\mathbf{A}}$. 
\begin{enumerate}[(vii)]
\item Let $X \in \gamma_{\leq}[{\mathcal{P}}(A)]$. 
Trivially, the inclusion $X \cap K \subseteq X$ holds. Clearly, $\oc_{\gamma_{\leq}}X \subseteq X$. 
\item  Let $X \in \gamma_{\leq}[{\mathcal{P}}(A)]$. Obviously, $X \cap K \subseteq \gamma_{\leq}(X \cap K) \cap K$. 
By monotonicity of $\gamma_{\leq}$, we have $\oc_{\gamma_{\leq}} X \subseteq \oc_{\gamma_{\leq}} \oc_{\gamma_{\leq}} X$.  
\item Let $X \in \gamma_{\leq}[{\mathcal{P}}(A)]$ and $x \in X \cap K$. 
By the definition of $K$, $x \leq 1$; hence $x \in \{1\}^{\lhd}=\gamma_{\leq}(\{1\})$. 
Thus $X \cap K \subseteq \gamma_{\leq}(\{1\})$. 
By monotonicity and idempotency of $\gamma_{\leq}$, $\oc_{\gamma_{\leq}} X \subseteq \gamma_{\leq}(\{1\})$. 
\item Let $X \in \gamma_{\leq}[{\mathcal{P}}(A)]$ and $x \in X \cap K$. 
By the definition of $K$, we have $x=xx \in (X \cap K) \circ (X \cap K)$. 
Thus $X \cap K \subseteq (X \cap K) \circ (X \cap K)$.
By using properties of nuclei, $\oc_{\gamma_{\leq}}X \subseteq \oc_{\gamma_{\leq}}X \circ_{\gamma{\leq}} \oc_{\gamma_{\leq}}X$.
\item Let $X,Y \in \gamma_{\leq}[{\mathcal{P}}(A)]$, $x \in X \cap K$ and $y \in Y$, i.e., $xy \in (X \cap K) \circ Y$. 
By the definition of $K$, $xy=yx \in Y \circ (X \cap K)$; thus $(X \cap K) \circ Y \subseteq Y \circ (X \circ K)$. 
By properties of nuclei, $\oc_{\gamma_{\leq}}X \circ_{\gamma{\leq}} Y \subseteq Y \circ_{\gamma{\leq}} \oc_{\gamma_{\leq}}X$. 
We have the converse inclusion in a similar way. 
\item Suppose that $X,Y,Z \in \gamma_{\leq}[{\mathcal{P}}(A)]$. 
Let $x \in X \cap K$, $y \in Y$ and $z \in Z$, i.e., $x(yz) \in (X \cap K) \circ (Y \circ Z)$. 
We have $x(yz)=(xy)z \in ((X \cap K) \circ Y) \circ Z$; thus $ (X \cap K) \circ (Y \circ Z) \subseteq ((X \cap K) \circ Y) \circ Z$.  
We have $\oc_{\gamma_{\leq}} X \circ_{\gamma_{\leq}}(Y \circ_{\gamma_{\leq}} Z) \subseteq (\oc_{\gamma_{\leq}} X \circ_{\gamma_{\leq}}Y) \circ_{\gamma_{\leq}} Z$, using properties of nuclei. 
Similarly, one can show the reverse inclusion. 
\item Almost the same as (vi).
\end{enumerate}

Consequently, $h$ is an embedding of $\mathbf{A}$ into the $\mathcal{L}$-reduct of $\mathbf{W}^{\oc+}_{\mathbf{A}}$. 
By Lemma \ref{DM},  the $\mathcal{L}$-reduct of $\mathbf{W}^{\oc+}_{\mathbf{A}}$ belongs to $\mathsf{RLUG}_{R}$. 
Thus $\mathbf{W}^{\oc+}_{\mathbf{A}}$ is an NACILL$_{R}$-algebra. 
Clearly, $h(f(\rho(y)))\subseteq h(f(b))$ for all $y \Rightarrow b \in \Phi$.
Due to the fact that $h$ is an embedding of $\mathbf{A}$ into $\mathbf{W}^{+}_{\mathbf{A}}$, we have $h(f(\rho(x))) \not \subseteq h(f(a))$. 
Moreover, define the valuation $v$ into $\mathbf{W}^{\oc+}_{\mathbf{A}}$ by $v(p)=h(f(p))$ for any propositional variable $p$. 
Clearly, $v(c)=h(f(c))$ for any $c \in Fm_{\mathcal{L}}$.
Hence, $v$ is a valuation into $\mathbf{W}^{\oc+}_{\mathbf{A}}$ such that $v(\rho(x)) \not \subseteq v(a)$ and $v(\rho(y)) \subseteq v(b)$ for all $y \Rightarrow b \in \Phi$, i.e., $\Phi \not \models_{{\mathbf{W}}^{\oc+}_{\mathbf{A}},v} x \Rightarrow a$.
By Lemma \ref{algebra2}, $\Phi \not\vdash_{{\mathbf{NACILL}}_R} x \Rightarrow a$. 
\end{proof}

Given a sequent $s=x \Rightarrow a$, define $\tau(s)=\oc(\rho(x) \backslash a)$.  
Next, we prove the following lemma:

\begin{Lem}
\label{encoding}
Let $R$ be a subset of $\{\exch,\cont,\weak\}$ and $\{s_1,\ldots,s_n\} \cup \{x \Rightarrow a\}$ a finite set of $\mathcal{L}_{\oc}$-sequents.
Then $\{s_1,\ldots,s_n\} \vdash_{{\mathbf{NACILL}}_R} x \Rightarrow a$ if and only if $\vdash_{{\mathbf{NACILL}}_R} x \circ (\tau(s_1) \circ \cdots(\tau(s_{n-1}) \circ \tau(s_n))\cdots) \Rightarrow a.$
\end{Lem}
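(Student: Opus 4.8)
The plan is to prove the two directions separately. Throughout, write $s_i = (y_i \Rightarrow b_i)$, so that $\tau(s_i) = \oc(\rho(y_i) \backslash b_i)$, and abbreviate the right-hand antecedent by $\Gamma = \tau(s_1) \circ (\tau(s_2) \circ \cdots (\tau(s_{n-1}) \circ \tau(s_n))\cdots)$; note $\Gamma \in K_{\mathcal{L}_{\oc}}$, so all modal structural rules of Figure~\ref{inf2} (and the admissible $(\oc\asso^{**})$) apply to $\Gamma$ and to each $\tau(s_i)$. The ``if'' direction is the easier one. First I would show that, once $s_i$ is available as an initial sequent, $\vdash_{\mathbf{NACILL}_R} \varepsilon \Rightarrow \tau(s_i)$: starting from $y_i \Rightarrow b_i$, convert the antecedent structure into the single formula $\rho(y_i)$ by repeated use of $(\cdot\Rightarrow)$ and $(1\Rightarrow)$ to obtain $\rho(y_i) \Rightarrow b_i$; reading this as $\rho(y_i)\circ\varepsilon \Rightarrow b_i$, apply $(\Rightarrow\backslash)$ to get $\varepsilon \Rightarrow \rho(y_i)\backslash b_i$, and then $(\Rightarrow\oc)$ — legitimate since $\varepsilon \in K_{\mathcal{L}_{\oc}}$ — to get $\varepsilon \Rightarrow \tau(s_i)$. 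Assuming $x \circ \Gamma \Rightarrow a$ is provable, I would then cut each $\varepsilon \Rightarrow \tau(s_i)$ against the unique occurrence of $\tau(s_i)$ in $x\circ\Gamma$; since $\varepsilon$ is a genuine unit of the free unital groupoid, these $n$ cuts collapse $\Gamma$ to $\varepsilon$ and leave precisely $x \Rightarrow a$, everything taking place inside $\mathbf{NACILL}_R$ augmented with $\{s_1,\ldots,s_n\}$.

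The ``only if'' direction is the substantial one and I would argue it proof-theoretically, following the encoding idea of \cite{LMSS92}. The key preliminary observation is that each $\tau(s_i)$ recovers its assumption: from $y_i \Rightarrow \rho(y_i)$ (provable by building up $\rho(y_i)$ with $(\Rightarrow\cdot)$), the dereliction instance $\oc(\rho(y_i)\backslash b_i) \Rightarrow \rho(y_i)\backslash b_i$ obtained by $(\oc\Rightarrow)$, and one application of $(\backslash\Rightarrow)$ followed by a cut, one derives $\vdash_{\mathbf{NACILL}_R} y_i \circ \tau(s_i) \Rightarrow b_i$. With this in hand I would prove, by induction on a derivation witnessing $\{s_1,\ldots,s_n\} \vdash_{\mathbf{NACILL}_R} z \Rightarrow c$, the auxiliary claim that $\vdash_{\mathbf{NACILL}_R} z \circ \Gamma \Rightarrow c$; the lemma is the instance $z\Rightarrow c = x \Rightarrow a$. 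In the base cases, the initial sequents $(\mathrm{Id})$ and $(\Rightarrow 1)$ acquire $\Gamma$ by a single application of $(\oc\weak)$, which replaces an empty slot $\varepsilon$ by the $\oc$-structure $\Gamma$; and an assumption $s_i$ used as an initial sequent is replaced by the derived sequent $y_i \circ \tau(s_i) \Rightarrow b_i$, the remaining factors $\tau(s_j)$ of $\Gamma$ being inserted by further $(\oc\weak)$ steps.

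For the inductive step, a context rule applied strictly inside the antecedent commutes with the extra $\Gamma$ on the right, since appending $\Gamma$ merely extends the ambient unary polynomial $u$ to $u[\cdot]\circ\Gamma$. The delicate cases are the rules that restructure the antecedent, namely $(\Rightarrow\backslash)$ and $(\Rightarrow/)$, together with the two-premise rules $(\Rightarrow\cdot)$, $(\backslash\Rightarrow)$, $(/\Rightarrow)$ and $(\mathrm{cut})$. For the former, $\Gamma$ must be re-associated or exchanged past a non-associative product before the principal connective can be introduced; this is exactly what $(\oc\asso)$, $(\oc\asso^*)$, $(\oc\asso^{**})$ and $(\oc\exch)$ permit, precisely because $\Gamma$ consists of $\oc$-formulas. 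For the latter, the induction hypothesis yields two copies of $\Gamma$, one from each premise, which I would bring adjacent by the same modal rearrangements and then merge into a single copy via $(\oc\cont)$. The main obstacle is therefore not conceptual but the careful structural bookkeeping needed to verify that $\Gamma$, being built entirely from $\oc$-formulas, behaves as one freely weakenable, contractible, exchangeable and re-associable shared context; isolating this behaviour as a small sub-lemma — justified by the rules of Figure~\ref{inf2} and Proposition~\ref{base} — and invoking it uniformly in each inductive case is what makes the argument go through. The side rules in $R$ act identically on both sides and cause no further difficulty.
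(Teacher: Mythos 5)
Your proposal is correct and takes essentially the same route as the paper: the ``if'' direction derives $\varepsilon \Rightarrow \tau(s_i)$ via $(\cdot\Rightarrow)$, $(\Rightarrow\backslash)$, $(\Rightarrow\oc)$ and then eliminates the $\tau(s_i)$ by repeated cuts, while the ``only-if'' direction is the same induction on derivations, handling assumption leaves by $(\backslash\Rightarrow)$, $(\oc\Rightarrow)$ and repeated $(\oc\weak)$, and merging the duplicated copies of the $\oc$-context in two-premise rules via the modal rearrangement rules and $(\oc\cont)$. The bookkeeping fact you propose to isolate as a sub-lemma is exactly the fact the paper invokes at its double line, namely that $\vdash u[\oc a] \Rightarrow c$ if and only if $\vdash u[\varepsilon] \circ \oc a \Rightarrow c$.
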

\begin{proof}
The proof depends on the argument in \cite[Theorem 3]{Lin92}. 
We show the ``only-if'' direction by induction on the length of a proof $\mathsf{P}$ of $x \Rightarrow a$ in ${\mathbf{NACILL}}_{R}$ from $\{s_1,\ldots,s_n\}$. 
\begin{itemize}
\item If $\mathsf{P}$ is of the form {\AxiomC{$s_i=x_i \Rightarrow a_i$}\DisplayProof} ($i=1,\ldots,n$), we have:
\begin{prooftree}
\AxiomC{$\vdots$}
\noLine
\UnaryInfC{$x_i \Rightarrow \rho(x_i)$}
\AxiomC{}
\RightLabel{(Id)}
\UnaryInfC{$a_i \Rightarrow a_i$}
\RightLabel{($\backslash \Rightarrow$)}
\BinaryInfC{$x_i \circ (\rho(x_i) \backslash a_i) \Rightarrow a_i$}
\RightLabel{($\oc \Rightarrow$)}
\UnaryInfC{$x_i \circ \tau(s_i) \Rightarrow a_i$}
\doubleLine
\UnaryInfC{$x_i \circ (\tau(s_1) \circ \cdots(\tau(s_{n-1}) \circ \tau(s_n))\cdots) \Rightarrow a_i$}
\end{prooftree}
Here, the double line means several applications of $(\oc \weak)$. 
Note that $\vdash_{{\mathbf{NACILL}}_R}x_i \Rightarrow \rho(x_i)$. 
\item If $\mathsf{P}$ is of the form {\AxiomC{\vdots}
\noLine
\UnaryInfC{$x \Rightarrow a$}
\AxiomC{\vdots}
\noLine
\UnaryInfC{$u[b]\Rightarrow c$}
\RightLabel{$(\backslash \Rightarrow)$}
\BinaryInfC{$u[x \circ (a \backslash b)] \Rightarrow c$}
\DisplayProof}, we have:
\begin{prooftree}
\AxiomC{$x \circ (\tau(s_1) \circ \cdots(\tau(s_{n-1}) \circ \tau(s_n))\cdots) \Rightarrow a$}
\AxiomC{$u[b] \circ (\tau(s_1) \circ \cdots(\tau(s_{n-1}) \circ \tau(s_n))\cdots) \Rightarrow c$}
\RightLabel{$(\backslash \Rightarrow)$}
\BinaryInfC{$u[(x \circ (\tau(s_1) \circ \cdots(\tau(s_{n-1}) \circ \tau(s_n))\cdots)) \circ (a \backslash b)] \circ (\tau(s_1) \circ \cdots(\tau(s_{n-1}) \circ \tau(s_n))\cdots) \Rightarrow c$}
\doubleLine
\UnaryInfC{$(u[x \circ (a \backslash b)] \circ (\tau(s_1) \circ \cdots(\tau(s_{n-1}) \circ \tau(s_n))\cdots)) \circ  (\tau(s_1) \circ \cdots(\tau(s_{n-1}) \circ \tau(s_n))\cdots) \Rightarrow c$}
\RightLabel{$(\oc \asso^*)$}
\UnaryInfC{$u[x \circ (a \backslash b)] \circ ((\tau(s_1) \circ \cdots(\tau(s_{n-1}) \circ \tau(s_n))\cdots) \circ  (\tau(s_1) \circ \cdots(\tau(s_{n-1}) \circ \tau(s_n))\cdots)) \Rightarrow c$}
\RightLabel{$(\oc \cont)$}
\UnaryInfC{$u[x \circ (a \backslash b)] \circ (\tau(s_1) \circ \cdots(\tau(s_{n-1}) \circ \tau(s_n))\cdots) \Rightarrow c$}
\end{prooftree}
\end{itemize}
At the double line in the above proof, we use the fact that $\vdash_{{\mathbf{NACILL}}_R} u[\oc a] \Rightarrow c$ if and only if $\vdash_{{\mathbf{NACILL}}_R} u[\varepsilon] \circ \oc a \Rightarrow c$. 
It is easy to show the other cases. 

The proof of the ``if'' direction is straightforward. 
Suppose that $\vdash_{{\mathbf{NACILL}}_R} x \circ (\tau(s_1) \circ \cdots(\tau(s_{n-1}) \circ \tau(s_n))\cdots) \Rightarrow a$.
Trivially, $\{s_1,\ldots,s_n\} \vdash_{{\mathbf{NACILL}}_{R}} x \circ (\tau(s_1) \circ \cdots(\tau(s_{n-1}) \circ \tau(s_n))\cdots) \Rightarrow a$. 
On the other hand, we have $\{s_1,\ldots,s_n\} \vdash_{{\mathbf{NACILL}}_{R}} \varepsilon \Rightarrow \tau(s_i)$ for $i  \in \{1,\ldots,n\}$, using the rules of ($\cdot \Rightarrow$), ($\Rightarrow \backslash$) and ($\Rightarrow \oc$). 
Hence, we have $\{s_1,\ldots,s_n\} \vdash_{{\mathbf{NACILL}}_R} x \Rightarrow a$ by applying the rule of (cut) several times. 
\end{proof}

Lemmas \ref{sc} and \ref{encoding} establish the following:

\begin{Cor}
\label{core}
Let $R$ be a subset of $\{\exch,\cont,\weak\}$ and $\{s_1,\ldots,s_n\} \cup \{x \Rightarrow a\}$ a finite set of $\mathcal{L}$-sequents. Then, $\{s_1,\ldots,s_n\} \vdash_{{\mathbf{FNL}}_R} x \Rightarrow a $ if and only if $\vdash_{{\mathbf{NACILL}}_R} x \circ (\tau(s_1) \circ \cdots(\tau(s_{n-1}) \circ \tau(s_n))\cdots) \Rightarrow a$.
\end{Cor}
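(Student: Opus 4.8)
The plan is to obtain the corollary simply by composing the two preceding lemmas, the only point of care being the shift of language: every $\mathcal{L}$-sequent is in particular an $\mathcal{L}_{\oc}$-sequent, since $\mathcal{L} \subseteq \mathcal{L}_{\oc}$, so both lemmas apply to the data at hand.

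First I would invoke Lemma~\ref{sc} with $\Phi = \{s_1,\ldots,s_n\}$. As $s_1,\ldots,s_n$ together with $x \Rightarrow a$ are all $\mathcal{L}$-sequents, the strong conservativity of $\mathbf{NACILL}_R$ over $\mathbf{FNL}_R$ gives that $\{s_1,\ldots,s_n\} \vdash_{\mathbf{FNL}_R} x \Rightarrow a$ holds if and only if $\{s_1,\ldots,s_n\} \vdash_{\mathbf{NACILL}_R} x \Rightarrow a$ holds. This trades the consequence relation of $\mathbf{FNL}_R$ for that of $\mathbf{NACILL}_R$, at the cost of passing to the richer language in which the modality $\oc$ is available.

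Next I would apply Lemma~\ref{encoding}. Although that lemma is phrased for $\mathcal{L}_{\oc}$-sequents, it applies here verbatim: the sequents $s_1,\ldots,s_n$ and $x \Rightarrow a$ are themselves $\mathcal{L}_{\oc}$-sequents, and for each $i$ the translation $\tau(s_i) = \oc(\rho(x_i) \backslash a_i)$ is a genuine $\mathcal{L}_{\oc}$-formula (it is precisely here that $\oc$ enters). Hence $\{s_1,\ldots,s_n\} \vdash_{\mathbf{NACILL}_R} x \Rightarrow a$ holds if and only if $\vdash_{\mathbf{NACILL}_R} x \circ (\tau(s_1) \circ \cdots(\tau(s_{n-1}) \circ \tau(s_n))\cdots) \Rightarrow a$ holds. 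Since both lemmas are stated as biconditionals, chaining this equivalence with the one from the previous step yields the asserted biconditional directly.

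I do not expect any genuine obstacle, as each step is an immediate instance of a result already established; the corollary is essentially a bookkeeping composition. The single point that warrants a sentence of justification is the language shift just noted, namely that treating the given $\mathcal{L}$-sequents as $\mathcal{L}_{\oc}$-sequents is harmless. This is precisely what Lemma~\ref{sc} secures for the hypotheses and conclusion, while the encoded target sequent lives naturally in $\mathcal{L}_{\oc}$ by construction of $\tau$.
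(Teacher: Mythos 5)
Your proposal is correct and coincides with the paper's own argument: the paper derives Corollary~\ref{core} exactly by chaining the biconditional of Lemma~\ref{sc} (with $\Phi=\{s_1,\ldots,s_n\}$) with that of Lemma~\ref{encoding}, viewing the given $\mathcal{L}$-sequents as $\mathcal{L}_{\oc}$-sequents. Your explicit remark on the harmlessness of the language shift is a sound piece of bookkeeping the paper leaves implicit.
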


On the other hand, the following theorem is shown in \cite{Ch15}:
\begin{Thm}[Chvalovsk\'{y} (2015)]
\label{ch}
Let $R$ be a subset of $\{\exch,\cont\}$. Given a finite set of $\mathcal{L}$-sequents $\Phi \cup \{s\}$, it is undecidable whether $\Phi \vdash_{{\mathbf{FNL}}_R} s$. 
\end{Thm}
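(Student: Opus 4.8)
The plan is to pass from the syntactic consequence relation to the quasi-equational theory of the associated variety, and then to encode a Turing-complete computation into that theory. By the strong completeness theorem (Lemma~\ref{algebra1}), $\Phi \vdash_{\mathbf{FNL}_R} s$ holds if and only if $\Phi \models_{\mathsf{RLUG}_{\mathsf{R}}} s$. Since $\Phi$ and $s$ are finite, the right-hand side asks precisely whether a finite set of inequations forces a single inequation in every member of $\mathsf{RLUG}_{\mathsf{R}}$, i.e.\ whether a certain quasi-identity is valid in the variety. Hence it suffices to prove that the quasi-equational theory (equivalently, the uniform word problem) of $\mathsf{RLUG}_{\mathsf{R}}$ is undecidable, and the remaining task is purely algebraic and combinatorial.

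First I would fix a computational model with undecidable halting problem that encodes comfortably into a groupoid --- a deterministic Turing machine, or more economically a two-counter (Minsky) machine. The free unital groupoid underlying an $r\ell u$-groupoid consists of binary trees with leaves labelled by generators, and this tree structure is exactly what compensates for the loss of associativity: a single term can carry a whole configuration (tape contents, head position and state, or the two counter values together with the control state) without the ambiguity that associativity would introduce. I would therefore represent each configuration $c$ by a designated term $t_c$ over a fixed finite stock of generating variables, and encode each transition of the machine as a sequent placed in $\Phi$. Reading such an axiom through (cut) turns one application of a premise into one computation step, so a run of the machine is mirrored by a derivation; the target sequent $s$ is chosen to assert that the halting configuration is reachable from the initial one.

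The soundness half of the reduction, namely that halting implies $\Phi \vdash_{\mathbf{FNL}_R} s$, is routine: one threads the transition axioms together with (cut) in the order dictated by the computation, using the product and residual rules to rearrange the encoding terms. The genuinely hard part, and the crux of the theorem, is \emph{faithfulness}: that $\Phi \vdash_{\mathbf{FNL}_R} s$ forces the encoded machine to halt, i.e.\ that no derivation can ``cheat'' by recombining subterms in a computationally meaningless way. Given the algebraic reformulation above, the cleanest route is semantic: when the machine does not halt I would build a concrete member of $\mathsf{RLUG}_{\mathsf{R}}$ --- a reachability model obtained from a residuated frame as in Section~\ref{pre}, whose nucleus is read off from the machine's transition relation --- validating every sequent in $\Phi$ while refuting $s$. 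Verifying that this model lies in $\mathsf{RLUG}_{\mathsf{R}}$ and that it refutes $s$ precisely when the halting configuration is unreachable is where essentially all the work lies; the rigidity of the non-associative bracketing is what makes the model well-defined and the refutation exact.

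Finally, the statement must hold uniformly for every $R \subseteq \{\exch,\cont\}$, so the encoding has to survive the addition of exchange and contraction --- exactly the structural rules that normally destroy the control provided by non-associativity. Exchange collapses $\cdot$ to a commutative operation and contraction licenses duplication of substructures, either of which can manufacture spurious derivations unless the encoding is insulated against them, for instance by segregating ``data'' generators from ``control'' markers so that commuting or duplicating configuration terms can never yield a new legal configuration. The main obstacle I anticipate is engineering this insulation so that one reduction, together with one robust faithfulness argument, settles all four choices of $R$ simultaneously; once the countermodel construction is made stable under $\exch$ and $\cont$, the theorem follows from the undecidability of the halting problem.
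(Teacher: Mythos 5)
There is a mismatch in kind here: the paper does not prove Theorem~\ref{ch} at all. It imports the result from \cite{Ch15}, and the only original content in the paper concerning it is the Remark following Corollary~\ref{main2}, observing that Chvalovsk\'{y} stated his undecidability results for $\mathbf{FNL}$ without the constant $1$ and that his techniques carry over to the formulation used here. So your proposal must be measured against Chvalovsk\'{y}'s original argument, and against that standard it is an architecture rather than a proof: the skeleton (pass to quasi-identities over $\mathsf{RLUG}_{\mathsf{R}}$ via Lemma~\ref{algebra1}; soundness of ``halting implies derivability'' by threading transition axioms through (cut); faithfulness by a countermodel when the machine diverges) is the standard shape of such reductions, but every step that makes the theorem true is deferred. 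You never fix the term encoding of configurations, never exhibit the transition sequents, and the countermodel ``whose nucleus is read off from the machine's transition relation'' is postulated, not constructed --- by your own account that is ``where essentially all the work lies.''

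Beyond incompleteness, there is a concrete reason the sketch as written cannot succeed: it is purely multiplicative. Your configurations are trees of generators and your derivations use only the product and residual rules, so both $\Phi$ and $s$ live in the $\{\cdot,\backslash,/,1\}$-fragment. But the consequence relation of non-associative Lambek calculus from finitely many multiplicative nonlogical axioms is known to be decidable --- indeed polynomial-time, by a result of Buszkowski --- so no sound and faithful reduction of a halting problem into that fragment exists. Worse for the ``rigidity of bracketing'' intuition you lean on: even adding the additives does not suffice if the lattice is distributive, since Buszkowski and Farulewski proved the finite embeddability property for distributive $\mathbf{FNL}$, which again yields a decidable finitary consequence relation. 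Any correct proof of Theorem~\ref{ch} must therefore use $\wedge$ and $\vee$ in an essential, non-distributive way; this is exactly the delicate part of Chvalovsk\'{y}'s encoding, and your proposal never touches it. The same diagnosis applies to your treatment of contraction: you correctly identify that $(\cont)$ can certify spurious ``runs'' by duplicating axiom material, but the proposed insulation (segregating data generators from control markers) is named as an obstacle, not supplied as a mechanism. Until the additive machinery and the contraction-stable faithfulness argument are actually built, none of the four cases of $R$ is settled.
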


By Corollary \ref{core} and Theorem \ref{ch}, we have the following main theorem:
\begin{Thm}
\label{main1}
Let $R$ be a subset of $\{\exch,\cont\}$. Given an ${\mathcal{L}}_{\oc}$-sequent $s$, it is undecidable whether $s$ is provable in ${\mathbf{NACILL}}_R$.
\end{Thm}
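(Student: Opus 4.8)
The plan is to derive Theorem~\ref{main1} as a routine many-one reduction from the consequence problem for $\mathbf{FNL}_R$, whose undecidability is exactly Theorem~\ref{ch}. All the substantive work has already been carried out in Corollary~\ref{core} (which itself rests on the strong conservativity Lemma~\ref{sc} and the encoding Lemma~\ref{encoding}); what remains is merely to cast these facts as an effective reduction and to check that the translation takes inputs of the right type to outputs of the right type.

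Concretely, I would fix $R \subseteq \{\exch,\cont\}$ and take an arbitrary instance of the $\mathbf{FNL}_R$ consequence problem, namely a finite set of $\mathcal{L}$-sequents $\{s_1,\ldots,s_n\}$ together with a sequent $x \Rightarrow a$, and ask whether $\{s_1,\ldots,s_n\} \vdash_{\mathbf{FNL}_R} x \Rightarrow a$. To this instance I associate the single $\mathcal{L}_{\oc}$-sequent
\[
g(\{s_1,\ldots,s_n\},\, x \Rightarrow a) := x \circ (\tau(s_1) \circ \cdots(\tau(s_{n-1}) \circ \tau(s_n))\cdots) \Rightarrow a,
\]
where $\tau(y \Rightarrow b) = \oc(\rho(y) \backslash b)$. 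Since $\rho$ merely rewrites $\circ$ as $\cdot$ and $\tau$ prefixes a $\oc$ together with a residual, the map $g$ is plainly computable, being a purely syntactic transformation of a finite input. The key step is then the equivalence supplied by Corollary~\ref{core}: $\{s_1,\ldots,s_n\} \vdash_{\mathbf{FNL}_R} x \Rightarrow a$ holds if and only if $g(\{s_1,\ldots,s_n\},\, x \Rightarrow a)$ is provable in $\mathbf{NACILL}_R$. Thus $g$ is a many-one reduction of the $\mathbf{FNL}_R$ consequence problem to provability in $\mathbf{NACILL}_R$. Were the latter decidable, composing its decision procedure with $g$ would decide the former, contradicting Theorem~\ref{ch}; hence provability in $\mathbf{NACILL}_R$ is undecidable. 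The sequents in the range of $g$ already lie in $\mathcal{L}_{\oc}$, so undecidability holds \emph{a fortiori} over all $\mathcal{L}_{\oc}$-sequents, which is precisely the assertion of the theorem.

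I do not expect a genuine obstacle at this final stage: the conceptual difficulty has been absorbed into Lemma~\ref{sc} (transferring underivability from $\mathbf{FNL}_R$ to $\mathbf{NACILL}_R$ via Dedekind--MacNeille completions) and Lemma~\ref{encoding} (internalising a finite set of assumptions as a single $\oc$-guarded context using $(\oc\weak)$, $(\oc\cont)$, and the modal associativity rules). The only points I would double-check are bookkeeping: that Theorem~\ref{ch} and Corollary~\ref{core} agree on the relevant range of $R$ --- Theorem~\ref{ch} delivers undecidability exactly for $R \subseteq \{\exch,\cont\}$, which is the range assumed here, even though Corollary~\ref{core} itself permits $\weak$ as well --- and that $g$ sends $\mathcal{L}$-inputs to $\mathcal{L}_{\oc}$-outputs, so that the source and target problems are of the types claimed.
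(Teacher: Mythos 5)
Your proposal is correct and matches the paper's own argument: the paper derives Theorem~\ref{main1} directly by combining Corollary~\ref{core} with Theorem~\ref{ch}, exactly the reduction you describe. Your explicit spelling-out of the computability of the translation $g$ and the bookkeeping about the range of $R$ is sound and merely makes the paper's one-line inference explicit.
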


By Lemma \ref{algebra2}, we also have:
\begin{Cor}
\label{main2}
Let $R$ be a subset of $\{\exch,\cont\}$. $\mathsf{NACILL}_{\mathsf{R}}$ has an undecidable equational theory.
\end{Cor}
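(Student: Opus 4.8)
The plan is to reduce the sequent-provability problem for $\mathbf{NACILL}_R$, shown undecidable in Theorem~\ref{main1}, to the decision problem for the equational theory of $\mathsf{NACILL}_{\mathsf{R}}$, i.e.\ to deciding, given an $\mathcal{L}_{\oc}$-identity, whether it holds in every member of the variety. The bridge between the two problems is the strong completeness theorem (Lemma~\ref{algebra2}) applied with the empty premise set.

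First I would fix the effective translation from sequents to identities. Given an $\mathcal{L}_{\oc}$-sequent $s = x \Rightarrow a$, form the $\mathcal{L}_{\oc}$-terms $\rho(x)$ and $a$ and associate to $s$ the identity
\[
\mathsf{eq}(s) \colon \quad \rho(x) \wedge a \approx \rho(x).
\]
Since $\rho$ simply replaces $\circ$ by $\cdot$ (and $\varepsilon$ by $1$), the map $s \mapsto \mathsf{eq}(s)$ is clearly computable. Moreover, as recorded in the remark following the definition of $r\ell u$-groupoids, in any lattice-ordered algebra $\mathsf{eq}(s)$ is nothing but the inequation $\rho(x) \leq a$ written as an identity.

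Next I would chain the relevant equivalences. Instantiating Lemma~\ref{algebra2} with $\Phi = \emptyset$ yields $\vdash_{\mathbf{NACILL}_R} x \Rightarrow a$ iff $\models_{\mathsf{NACILL}_R} x \Rightarrow a$; and, unfolding the definition of $\models$, the latter holds exactly when $f(\rho(x)) \leq f(a)$ for every $\mathbf{A} \in \mathsf{NACILL}_R$ and every valuation $f$, that is, when the inequation $\rho(x) \leq a$ --- equivalently the identity $\mathsf{eq}(s)$ --- is valid throughout the variety. Hence
\[
\vdash_{\mathbf{NACILL}_R} s \quad\Longleftrightarrow\quad \mathsf{NACILL}_R \models \mathsf{eq}(s).
\]
Composing a hypothetical decision procedure for the equational theory with the computable map $s \mapsto \mathsf{eq}(s)$ would then decide provability of $\mathcal{L}_{\oc}$-sequents in $\mathbf{NACILL}_R$, contradicting Theorem~\ref{main1}; therefore the equational theory of $\mathsf{NACILL}_{\mathsf{R}}$ is undecidable.

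There is no genuinely hard step: the argument is a one-line many-one reduction once Lemma~\ref{algebra2} and Theorem~\ref{main1} are available. The only points that require care are verifying that the premise-free instance of completeness really converts sequent provability into validity of a single identity in the variety, and noting that the passage between the inequation $\rho(x) \leq a$ and the identity $\rho(x) \wedge a \approx \rho(x)$ is both correct and effective, so that $s \mapsto \mathsf{eq}(s)$ is a legitimate reduction of decision problems.
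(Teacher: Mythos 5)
Your proposal is correct and is precisely the argument the paper intends: the paper derives the corollary from Theorem~\ref{main1} via Lemma~\ref{algebra2} with $\Phi=\emptyset$, leaving implicit exactly the details you spell out (translating $x \Rightarrow a$ into the identity $\rho(x)\wedge a \approx \rho(x)$, which the paper's remark on lattice-ordered algebras already licenses). No gaps; your write-up is just a more explicit version of the same reduction.
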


\begin{Remark}
The undecidability results proved in \cite{Ch15} are slightly different from Theorem~\ref{ch}. 
Precisely speaking, Chvalovsk\'{y} showed that the finitary consequence relation in each of the extensions of $\mathbf{FNL}$ without the constant $1$ by contraction and exchange is undecidable. 
On the other hand, his techniques work in a proof of Theorem~\ref{ch} without difficulty. 
In this sense, the establishment of Theorem~\ref{ch} is undoubtedly due to Chvalovsk\'{y}. 
\end{Remark}

\begin{Remark}
One obtains larger languages from the language $\mathcal{L}_{\oc}$ by adding some (possibly all) of the constants $0$, $\top$ and $\bot$. 
Likewise, the following inference rules can be added to any of the extensions of $\mathbf{NACILL}$: 
\[
\begin{bprooftree}
\AxiomC{}
\RightLabel{($0\Rightarrow$)}
\UnaryInfC{$0 \Rightarrow \varepsilon$}
\end{bprooftree}
\begin{bprooftree}
\AxiomC{$x \Rightarrow \varepsilon$}
\RightLabel{($\Rightarrow 0$)}
\UnaryInfC{$x \Rightarrow 0$}
\end{bprooftree}
\begin{bprooftree}
\AxiomC{}
\RightLabel{($\Rightarrow \top$)}
\UnaryInfC{$x \Rightarrow \top$}
\end{bprooftree}
\begin{bprooftree}
\AxiomC{}
\RightLabel{($\bot \Rightarrow$)}
\UnaryInfC{$u[\bot] \Rightarrow c$}
\end{bprooftree}
\]
We always assume that the right-hand side of a sequent is allowed to be $\varepsilon$ when the rules of $(0 \Rightarrow)$ and $(\Rightarrow 0)$ are added to the sequent calculus in question.  
We emphasize that Theorem~\ref{main1} holds even when some (possibly all) of the above rules are added.
\end{Remark}

\section{Undecidability of propositional non-associative non-commutative classical linear logics}
\label{classical}
In this section, we introduce two classical versions of $\mathbf{NACILL}$ and prove that both of them are undecidable. 

For the purpose of this section, we start with the logics $\mathbf{FCNL}^-$ and $\mathbf{FCNL}$, which are classical  versions of $\mathbf{FNL}$. 
The language ${\mathcal{L}}^0$ of $\mathbf{FCNL}^-$ is obtained from $\mathcal{L}$ by adding the constant $0$. 
$Fm_{{\mathcal{L}}^{0}}$ denotes the set of ${\mathcal{L}}^{0}$-formulas. 
In what follows, $a\backslash 0$ (resp. $0/a$) is abbreviated by $\negr a$ (resp. $\negl a$). 
An ${\mathcal{L}}^{0}$-structure is an element of the free unital groupoid (denoted by $Fm_{{\mathcal{L}}^0}^{\circ}$) generated by $Fm_{{\mathcal{L}}^{0}}$. 
An ${\mathcal{L}}^{0}$-sequent is denoted by $x \Rightarrow a$, where $x$ is an ${\mathcal{L}}^0$-structure, and $a$ is an ${\mathcal{L}}^0$-formula or $\varepsilon$.
A sequent calculus for $\mathbf{FCNL}^-$ is obtained from the sequent calculus for $\mathbf{FNL}$ by adding the following new inference rules:
\[
\begin{bprooftree}
\AxiomC{}
\RightLabel{($0 \Rightarrow$)}
\UnaryInfC{$0 \Rightarrow \varepsilon$}
\end{bprooftree}
\begin{bprooftree}
\AxiomC{$x \Rightarrow \varepsilon$}
\RightLabel{($\Rightarrow 0$)}
\UnaryInfC{$x \Rightarrow 0$}
\end{bprooftree}
\]
\[
\begin{bprooftree}
\AxiomC{}
\RightLabel{(DN1)}
\UnaryInfC{$\negrl a\Leftrightarrow a$}
\end{bprooftree}
\begin{bprooftree}
\AxiomC{}
\RightLabel{(DN2)}
\UnaryInfC{$\neglr a\Leftrightarrow a$}
\end{bprooftree}
\begin{bprooftree}
\AxiomC{}
\RightLabel{(CON)}
\UnaryInfC{$\negr a/b \Leftrightarrow a \backslash \!\!\negl \!b$}
\end{bprooftree}
\]
Here, the rules of (DN1), (DN2) and (CON) are originally introduced in \cite{Bus19}. 
An expression of the form $a \Leftrightarrow b$ is short for the sequents $a \Rightarrow b$ and $b \Rightarrow a$. 

The language of $\mathbf{FCNL}$ is the same as that of $\mathbf{FCNL}^-$. 
A sequent calculus for $\mathbf{FCNL}$ is obtained from the sequent calculus for $\mathbf{FCNL}^-$ by adding $\negr a \Leftrightarrow \negl a$ (cyclicity) as an initial sequent. 
The consequence relations $\vdash_{\mathbf{FCNL}^-}$ and $\vdash_{\mathbf{FCNL}}$ are defined in the obvious way. 
We stress that $\mathbf{FCNL}^-$ (resp. $\mathbf{FCNL}$) is equivalent to the logic FCNL1$^-$ (resp. FCNL1), which is introduced in \cite{Bus16}.

Next, we recall the algebraic models for $\mathbf{FCNL}^-$ and $\mathbf{FCNL}$. 
An \emph{involutive $r \ell u$-groupoid} is an algebra of the form ${\mathbf{A}}=(A,\wedge,\vee,\cdot,\backslash,/,1,0)$ (i.e., an ${\mathcal{L}}^0$-algebra) such that: 
\begin{itemize}
\item $(A,\wedge,\vee,\cdot,\backslash,/,1)$ is an $r \ell u$-groupoid, 
\item $0$ is an element of $A$, and 
\item the following identities hold:
\begin{enumerate}[(iii)]
\item $\negrl x=x=\neglr x$,
\item $\negr y/x=y \backslash \negl x$. 
\end{enumerate}
\end{itemize}
Here, $\negr x$ (resp. $\negl x$) is the abbreviation of $x\backslash 0$ (resp. $0/x$). 
Involutive $r \ell u$-groupoids are term equivalent to FCNL1$^-$-algebras in \cite{Bus16}. 
The class of involutive $r \ell u$-groupoids is a variety and is denoted by $\mathsf{InRLUG}$.
An involutive $r \ell u$-groupoid is said to be \emph{cyclic} if $\negr x=\negl x$ holds. 
Hence, the class of cyclic involutive $r \ell u$-groupoids, which is denoted by $\mathsf{CyInRLUG}$, forms a variety. 

Given an involutive $r \ell u$-groupoid $\mathbf{A}$, a valuation into $\mathbf{A}$ is a map $f \colon {\mathsf{Var}} \rightarrow A$. 
Then we have the homomorphism $f \colon {\mathbf{Fm}}_{{\mathcal{L}}^0} \rightarrow \mathbf{A}$ in a standard way.    
We define the map $\sigma: Fm_{{\mathcal{L}}^0} \cup \{\varepsilon\} \rightarrow Fm_{{\mathcal{L}}^0}$ as follows:
\[
\sigma(a) =
\begin{cases}
a & \text{if $a \in Fm_{{\mathcal{L}}^0}$,}\\
0 & \text{if $a=\varepsilon$.}
\end{cases}
\]

Given a set $\Phi \cup \{x \Rightarrow a\}$ of ${\mathcal{L}}^0$-sequents and a class $\mathcal{K}$ of involutive $r \ell u$-groupoids, we write $\Phi \models_{\mathcal{K}} x \Rightarrow a$, if for any $\mathbf{A} \in \mathcal{K}$ and any valuation $f$ into $\mathbf{A}$, $f(\rho(x)) \leq f(\sigma(a))$ holds whenever $f(\rho(y)) \leq f(\sigma(b))$ holds for each $y \Rightarrow b \in \Phi$. 
One can easily show that the following holds. 

\begin{Prop} 
Let $\Phi \cup \{x \Rightarrow a\}$ be a set of ${\mathcal{L}}^{0}$-sequents. The following statements hold:
\begin{enumerate}[(ii)]
\item $\Phi \vdash_{{\mathbf{FCNL}}^-} x \Rightarrow a$ if and only if $\Phi \models_{\mathsf{InRLUG}} x \Rightarrow a$.
\item $\Phi \vdash_{{\mathbf{FCNL}}} x \Rightarrow a$ if and only if $\Phi \models_{\mathsf{CyInRLUG}} x \Rightarrow a$.
\end{enumerate}
\end{Prop}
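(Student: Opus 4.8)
The plan is to establish both equivalences as strong soundness-and-completeness theorems, in exact parallel to Lemmas~\ref{algebra1} and~\ref{algebra2}. It suffices to treat statement~(i) in detail; statement~(ii) is handled by the same argument, the only change being the presence of the extra initial sequent $\negr a \Leftrightarrow \negl a$, which forces the cyclicity identity $\negr x = \negl x$ and hence lands us in $\mathsf{CyInRLUG}$ rather than $\mathsf{InRLUG}$.

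For the soundness direction (the ``only-if''), I would argue by induction on the height of a derivation of $x \Rightarrow a$ from $\Phi$, showing that for every $\mathbf{A} \in \mathsf{InRLUG}$ and every valuation $f$ validating all of $\Phi$ (that is, $f(\rho(y)) \le f(\sigma(b))$ for each $y \Rightarrow b \in \Phi$), the conclusion $f(\rho(x)) \le f(\sigma(a))$ holds. The initial sequents and the rules already present in $\mathbf{FNL}$ are sound by Lemma~\ref{algebra1}, since the $\mathcal{L}$-reduct of an involutive $r\ell u$-groupoid is an $r\ell u$-groupoid; so the work reduces to the five new schemata. The pair $(0 \Rightarrow)$ and $(\Rightarrow 0)$ is sound precisely because $\sigma(\varepsilon) = 0$, so that an empty succedent and the constant $0$ receive the same interpretation; the double-negation rules (DN1), (DN2) are sound by the defining identity $\negrl x = x = \neglr x$; and (CON) is sound by the identity $\negr y / x = y \backslash \negl x$. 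Each is a one-line check, the point being that the new rules were chosen to match the defining identities of $\mathsf{InRLUG}$ clause for clause.

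For the completeness direction (the ``if''), I would use the Lindenbaum--Tarski construction relative to $\Phi$. Assuming $\Phi \not\vdash_{\mathbf{FCNL}^-} x \Rightarrow a$, define a preorder on $Fm_{\mathcal{L}^0}$ by $b \preceq c$ iff $\Phi \vdash_{\mathbf{FCNL}^-} b \Rightarrow c$ (transitivity comes from (cut)), quotient by the induced equivalence, and equip the quotient with the operations induced by the connectives, the unit $[1]$ and the designated element $[0]$. The resulting algebra $\mathbf{A}_\Phi$ is an involutive $r\ell u$-groupoid: the residuated-lattice axioms are verified exactly as in Lemma~\ref{algebra1}, while identities (i) and (ii) of the definition hold because the sequents licensed by (DN1), (DN2), (CON) are derivable and so witness the required interderivabilities. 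Taking the canonical valuation $v(p) = [p]$, one has $v(\rho(y)) = [\rho(y)]$ and $v(\sigma(b)) = [\sigma(b)]$ for every sequent, so $v$ validates $\Phi$; and since $\Phi \not\vdash x \Rightarrow a$ yields $[\rho(x)] \not\le [\sigma(a)]$, we conclude $\Phi \not\models_{\mathsf{InRLUG}} x \Rightarrow a$.

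The main obstacle, and the only place demanding genuine care, is the bookkeeping linking the structural level to the formula level, namely the auxiliary fact that $\Phi \vdash x \Rightarrow a$ if and only if $\Phi \vdash \rho(x) \Rightarrow \sigma(a)$. This interderivability is what makes $v(\rho(\cdot))$ compute the correct element of $\mathbf{A}_\Phi$; it rests on the $(\cdot \Rightarrow)$ and $(\Rightarrow \cdot)$ rules together with the $(0 \Rightarrow)$ and $(\Rightarrow 0)$ rules governing the succedent, and it must be threaded consistently through both directions. Once this is in place, the remaining verifications are routine and mirror the standard completeness proofs cited for $\mathbf{FNL}$, which is why the statement can reasonably be asserted with ``one can easily show''.
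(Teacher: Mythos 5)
Your proposal is correct and follows exactly the route the paper has in mind: the paper states this Proposition without proof (``One can easily show that the following holds''), implicitly appealing to the same standard soundness-by-induction plus Lindenbaum--Tarski completeness argument used for Lemmas~\ref{algebra1} and~\ref{algebra2}, and you have simply written that routine argument out, including the one genuinely delicate point (the $\sigma$-bookkeeping for empty succedents via $(0\Rightarrow)$ and $(\Rightarrow 0)$). Nothing in your sketch deviates from or falls short of what the paper's assertion presupposes.
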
 

Moreover, we recall involutive versions of residuated frames, based on \cite{GJ13}. 
An \emph{involutive ru-frame} is an expression of the form ${\mathbf{W}}=(W,W,N,\varepsilon,^{\sim},^{-})$ such that $(W,W,N,\varepsilon)$ is an $ru$-frame, and $^{\sim}$ and $^-$ are unary operations on $W$ satisfying the following three conditions: for any $x,y \in W$,
\begin{enumerate}[(iii)]
\item $x \dbackslash y=(y^{-} \circ x)^{\sim}$ and $y \dslash x=(x \circ y^{\sim})^{-}$,
\item $x^{\sim-}=x=x^{-\sim}$,
\item $(x^{\sim} \circ y^{\sim})^{-}=(x^{-} \circ y^{-})^{\sim}$.
\end{enumerate}

For instance, given an involutive $r \ell u$-groupoid $\mathbf{G}$, ${\mathbf{W}_{\mathbf{G}}}=(G,G,\leq,1,\sim,-)$ is an involutive $ru$-frame. 
An involutive $ru$-frame $\mathbf{W}$ is said to be \emph{cyclic} if $x^{\sim}=x^{-}$ for all $x \in W$. 
Given an involutive $ru$-frame ${\mathbf{W}}=(W,W,N,\varepsilon,^{\sim},^{-})$, we define $X^{\sim}=\{x^{\sim} \mid x \in X\}$ and $X^{-}=\{x^{-} \mid x \in X\}$ for all $X \subseteq W$. 
Likewise, define $\negr X=X^{\sim\lhd}$ and $\negl X=X^{-\lhd}$. 
Then one has the Galois algebra ${\mathbf{W}}^+=(\gamma_{N}[{\mathcal{P}}(W)],\cap,\cup_{\gamma_{N}},\circ_{\gamma_{N}},\backslash,/,\gamma_{N}(\{\varepsilon\}),0_{\gamma_{N}})$, where $0_{\gamma_{N}}=\,\,\negr \{\varepsilon\}=\negl\{\varepsilon\}$. 
Observe that $0_{\gamma_{N}}=\,\,\negr \gamma_{N}(\{\varepsilon\})=\negl \gamma_{N}(\{\varepsilon\})$. 
Specifically, given an involutive $r \ell u$-groupoid $\mathbf{G}$, the Galois algebra $\mathbf{W}^+_{\mathbf{G}}$ is called the Dedekind-MacNeille completion of $\mathbf{G}$. 
The following lemma holds:
\begin{Lem}
\label{inv1}
Let $\mathbf{W}$ be an involutive $ru$-frame. The Galois algebra $\mathbf{W}^+$ is a complete involutive $r \ell u$-groupoid. 
\end{Lem}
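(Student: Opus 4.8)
The plan is to verify that the Galois algebra $\mathbf{W}^+$ of an involutive $ru$-frame inherits the involutive structure, building on Lemma~\ref{frame}. By Lemma~\ref{frame}, we already know that $\mathbf{W}^+$ is a complete $r\ell u$-groupoid, so it remains only to check that the designated element $0_{\gamma_N}$ together with the residuals witnesses the two defining identities of an involutive $r\ell u$-groupoid, namely $\negrl X = X = \neglr X$ and $\negr Y / X = X \backslash \negl X$ (the involution and contraposition laws). The backbone of the argument is to relate the frame-level operations $^{\sim}$ and $^{-}$ on $W$ to the algebra-level negations $\negr X = X^{\sim\lhd}$ and $\negl X = X^{-\lhd}$ on Galois-closed sets.

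First I would record the basic behaviour of the maps $X \mapsto \negr X$ and $X \mapsto \negl X$ on $\mathcal{P}(W)$. The key computational observations are that for a Galois-closed set $X$ we have $\negr X = X \backslash 0_{\gamma_N}$ and $\negl X = 0_{\gamma_N}/X$, so these coincide with the residual-defined negations in the putative involutive $r\ell u$-groupoid; this follows by unfolding the definitions of $^{\rhd}$, $^{\lhd}$, and the residuals $\backslash, /$, together with condition (i) of the involutive frame, which expresses $x \dbackslash y$ and $y \dslash x$ in terms of $^{\sim}$, $^{-}$ and $\circ$. Next I would establish the double-negation law: using condition (ii) $x^{\sim-} = x = x^{-\sim}$ at the level of elements, one lifts this to $\negl(\negr X) = X = \negr(\negl X)$ for Galois-closed $X$. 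The natural route is to show the set identities $X^{\sim\lhd-\lhd} = X^{\rhd\lhd}$ and symmetrically, exploiting that $Z^{\lhd}$ is always Galois-closed and that $^{\sim}$, $^{-}$ are mutually inverse bijections on $W$, so that applying $^{-}$ to $X^{\sim}$ and intersecting against the nuclear relation recovers $X$ up to Galois closure.

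For the contraposition identity $\negr Y / X = X \backslash \negl Y$, I would again reduce to a set-theoretic containment between Galois-closed sets and invoke condition (iii), $(x^{\sim}\circ y^{\sim})^{-} = (x^{-}\circ y^{-})^{\sim}$, which is precisely the frame-level encoding of the contraposition law. The strategy is to expand both $\negr Y / X$ and $X \backslash \negl Y$ via the formulas for $\backslash$ and $/$ on Galois-closed sets and show the two resulting conditions on a candidate element $z \in W'=W$ are equivalent, pushing condition (iii) through the definition of $N$. Throughout, I would use that all the sets involved are of the form $Z^{\lhd}$ and hence Galois-closed, so the nucleus $\gamma_N$ acts as identity on them and the $r\ell u$-groupoid operations of $\mathbf{W}^+$ agree with the raw set operations where needed.

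The main obstacle I anticipate is the bookkeeping in translating the three element-level conditions (i)--(iii) on $^{\sim}$ and $^{-}$ into the corresponding set-level identities on Galois-closed subsets, while correctly accounting for the interaction between the maps $X \mapsto X^{\sim}$, $X \mapsto X^{-}$ and the Galois closure $\gamma_N$. In particular, the delicate point is that $X^{\sim}$ need not itself be Galois-closed, so one must insert $\lhd$ or $\rhd$ at the right places and verify that the closure operator does not destroy the involutive relations; this is exactly where condition (ii) (involutivity of $^{\sim}$, $^{-}$ on $W$) must be combined with the fact that $Z^{\lhd}$ is Galois-closed to guarantee that $0_{\gamma_N} = \negr\{\varepsilon\} = \negl\{\varepsilon\}$ is a genuine dualizing element. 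Once these translations are handled carefully, each required identity should reduce to a routine verification, and the completeness of the lattice reduct is already secured by Lemma~\ref{frame}.
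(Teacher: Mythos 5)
You should first be aware that the paper contains no in-line proof of this lemma: Lemmas~\ref{inv1}--\ref{InDM} are explicitly delegated to \cite[Section~4]{GJ13}, and your overall architecture --- Lemma~\ref{frame} for the complete $r\ell u$-groupoid part, then showing $\negr X = X\backslash 0_{\gamma_N}$ and $\negl X = 0_{\gamma_N}/X$ and lifting conditions (ii) and (iii) through the Galois closure --- is exactly the verification carried out in that reference, so the plan is structurally sound (note that reducing $x \dbackslash{\varepsilon^{\sim}}$ to $x^{\sim}$ needs condition (ii) as well as (i)). However, you misstate the contraposition identity, twice and inconsistently: first as $\negr Y/X = X\backslash \negl X$ and later as $\negr Y/X = X\backslash \negl Y$, whereas the paper's axiom is $\negr y/x = y\backslash \negl x$, so the target must be $\negr Y/X = Y\backslash \negl X$. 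This is not a cosmetic slip: the swapped form is invalid in involutive $r\ell u$-groupoids generally (it already fails in cyclic, associative, non-commutative examples such as the powerset algebra of a non-abelian group), so expanding ``both sides'' of your version and trying to bridge them with condition (iii) cannot close. For the correct target it closes exactly: $z \in \negr Y/X$ iff $z \circ x \N {y^{\sim}}$ for all $x \in X$, $y \in Y$, iff $z \N {(x \circ y^{\sim\sim})^{-}}$ by condition (i), while $z \in Y\backslash \negl X$ iff $y \circ z \N {x^{-}}$, iff $z \N {(x^{--}\circ y)^{\sim}}$, and condition (iii) instantiated at $x^{-}, y^{\sim}$ (using (ii)) says these two elements of $W'$ are literally equal.

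The second genuine gap is in your double-negation step. That $^{\sim}$ and $^{-}$ are mutually inverse bijections gives only the pointwise fact $X^{\sim-}=X$; it does not by itself control $\neglr X = (X^{\sim\lhd})^{-\lhd}$, which passes twice through the relation $N$. What the argument actually runs on is the adjunction extracted from conditions (i)--(ii): $x \dbackslash{\varepsilon^{\sim}} = x^{\sim}$ and $\varepsilon^{-} \dslash{z} = z^{-}$, hence $z \N {x^{\sim}} \Longleftrightarrow x \circ z \N {\varepsilon^{\sim}}$ and $x \N {z^{-}} \Longleftrightarrow x \circ z \N {\varepsilon^{-}}$; combined with the identification $\negr \{\varepsilon\} = \negl \{\varepsilon\}$ (which the paper builds into the definition of $0_{\gamma_N}$, and which itself needs justification from (i)--(iii) rather than being available for free), this yields the genuine Galois-connection law $z \N {x^{\sim}} \Longleftrightarrow x \N {z^{-}}$, from which $\neglr X = \gamma_N(X) = X$ follows by the standard computation (given $v \in X^{\rhd}$, the witness $z = v^{\sim}$ lies in $X^{\sim\lhd}$ and satisfies $z^{-} = v$), and symmetrically for $\negrl X$. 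You correctly flag this as the delicate point, but ``applying $^{-}$ to $X^{\sim}$ \ldots\ recovers $X$ up to Galois closure'' is precisely the step that cannot be waved through; as written, your sketch stops exactly where the proof lives.
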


In particular, if $\mathbf{W}$ is cyclic, then $\mathbf{W}^+$ is a complete cyclic involutive $r \ell u$-groupoid.
Moreover, one can prove the following lemmas.  

\begin{Lem}
\label{inv2}
Let $\mathbf{G}$ be an involutive $r \ell u$-groupoid. Then the map $x \mapsto \{x\}^{\lhd}$ is an embedding of $\mathbf{G}$ into $\mathbf{W}_{\mathbf{G}}^+$.
\end{Lem}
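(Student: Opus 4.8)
The plan is to leverage the non-involutive case already recorded in Section~\ref{pre}. The underlying $ru$-frame of $\mathbf{W}_{\mathbf{G}}=(G,G,\leq,1,{}^{\sim},{}^{-})$ is precisely $(G,G,\leq,1)$, so the $\mathcal{L}$-reduct of its Galois algebra coincides with the Dedekind--MacNeille completion of the $r\ell u$-groupoid reduct of $\mathbf{G}$. Hence the remark recalled before Lemma~\ref{DM} gives, for free, that $h\colon x \mapsto \{x\}^{\lhd}$ is an injective $\mathcal{L}$-homomorphism preserving $\wedge,\vee,\cdot,\backslash,/$ and $1$ (indeed preserving all existing meets and joins). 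Since an embedding of involutive $r\ell u$-groupoids is just an injective homomorphism of $\mathcal{L}^0$-algebras, and the constant $0$ is the only symbol of $\mathcal{L}^0$ not already handled, it remains only to check that $h$ preserves $0$, i.e. that $h(0)=0_{\gamma_{\leq}}$.

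For this I would unwind the two definitions. On one side, $h(0)=\{0\}^{\lhd}$. On the other, by the definition of the distinguished element of an involutive Galois algebra, $0_{\gamma_{\leq}}=\,\negr{\{1\}}=\{1\}^{\sim\lhd}$, using that the unit of $\mathbf{W}_{\mathbf{G}}$ is $\varepsilon=1$. Now in $\mathbf{G}$ we have $1^{\sim}=1\backslash 0=0$, since $y\leq 1\backslash 0$ iff $1y=y\leq 0$; therefore $\{1\}^{\sim}=\{1^{\sim}\}=\{0\}$, whence $0_{\gamma_{\leq}}=\{0\}^{\lhd}=h(0)$, as required. Combined with the previous paragraph, $h$ is an injective $\mathcal{L}^0$-homomorphism, i.e. an embedding of involutive $r\ell u$-groupoids. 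Because $\mathbf{W}_{\mathbf{G}}^+$ is an involutive $r\ell u$-groupoid by Lemma~\ref{inv1} and $h$ preserves $\backslash,/$ and $0$, it automatically respects the negations $\negr{(\cdot)}$ and $\negl{(\cdot)}$, which are term-definable from these.

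The only genuine content is the single identity $h(0)=0_{\gamma_{\leq}}$; everything else is inherited. The point deserving the most care is the correct unwinding of $0_{\gamma_{\leq}}$ inside the involutive frame together with the auxiliary fact $1\backslash 0=0$, after which the two Galois-closed sets are literally equal. No separate argument is needed for the cyclic case: if $\mathbf{G}$ is cyclic, then $x^{\sim}=x^{-}$ on $W=G$, so $\mathbf{W}_{\mathbf{G}}$ is a cyclic involutive $ru$-frame, and the very same map $h$ embeds $\mathbf{G}$ into the complete cyclic involutive $r\ell u$-groupoid $\mathbf{W}_{\mathbf{G}}^+$.
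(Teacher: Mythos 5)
Your proof is correct, but it is not the paper's route, because the paper in fact gives no argument for Lemma~\ref{inv2} at all: it refers the reader to Section~4 of Galatos and Jipsen (2013), where the analogous embedding is established directly inside the involutive-frame machinery, with all operations (negations and $0$ included) verified in one sweep. Your decomposition is genuinely different and more economical: you observe that the Galois algebra of an involutive $ru$-frame is built solely from the underlying tuple $(W,W',N,\varepsilon)$ --- the operations $\cap$, $\cup_{\gamma_{N}}$, $\circ_{\gamma_{N}}$, $\backslash$, $/$ and $\gamma_{N}(\{\varepsilon\})$ are defined from $N$ alone, the maps $\dbackslash{}$ and $\dslash{}$ being mere witnesses of nuclearity --- so the $\mathcal{L}$-reduct of $\mathbf{W}^+_{\mathbf{G}}$ is literally the Dedekind--MacNeille completion of the $r\ell u$-groupoid reduct of $\mathbf{G}$ from Section~\ref{pre}, and injectivity together with preservation of $\wedge,\vee,\cdot,\backslash,/,1$ (indeed of all existing meets and joins) is inherited from the remark preceding Lemma~\ref{DM}. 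This identification of the two Galois algebras is the one step worth stating explicitly, since the involutive frame $\mathbf{W}_{\mathbf{G}}$ carries the witnesses $x \dbackslash z=(z^{-}\circ x)^{\sim}$ rather than $x\backslash z$; but as the witnesses enter nowhere in the construction of $\mathbf{W}^+$, the reducts agree. Your remaining check is right: $0_{\gamma_{\leq}}=\negr{\{\varepsilon\}}=\{1^{\sim}\}^{\lhd}=\{1\backslash 0\}^{\lhd}=\{0\}^{\lhd}=h(0)$, where $1\backslash 0=0$ holds in every $r\ell u$-groupoid by residuation; and since $\negr{(\cdot)}$ and $\negl{(\cdot)}$ are term-definable from $\backslash$, $/$ and $0$, an injective $\mathcal{L}^0$-homomorphism is all that an embedding of involutive $r\ell u$-groupoids requires. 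What each approach buys: yours is short, self-contained modulo Section~\ref{pre}, and isolates the constant $0$ as the only new content beyond the non-involutive case (with the cyclic case indeed coming for free, as you note); the cited Galatos--Jipsen treatment is uniform, delivering Lemma~\ref{inv1}, Lemma~\ref{InDM} and the Gentzen-frame quasi-embedding versions in the same framework, at the cost of redoing the non-involutive verifications.
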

\begin{Lem}
\label{InDM}
$\mathsf{InRLUG}$ and $\mathsf{CyInRLUG}$ admit Dedekind-MacNeille completions. 
\end{Lem}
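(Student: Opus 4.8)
The plan is to derive this immediately from Lemma~\ref{inv1}, in exactly the same manner that Lemma~\ref{DM} is obtained for $\mathsf{RLUG}_{\mathsf{R}}$ from Lemma~\ref{frame}. Recall that, for a class $\mathcal{K}$ of (involutive) $r\ell u$-groupoids, admitting Dedekind-MacNeille completions means precisely that $\mathbf{W}^+_{\mathbf{G}} \in \mathcal{K}$ whenever $\mathbf{G} \in \mathcal{K}$. So the entire task reduces to checking that the canonical frame construction sends members of $\mathsf{InRLUG}$ (resp.\ $\mathsf{CyInRLUG}$) to frames whose Galois algebras land back in the same class.

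First I would treat the involutive case. Given $\mathbf{G} \in \mathsf{InRLUG}$, the tuple $\mathbf{W}_{\mathbf{G}}=(G,G,\leq,1,{}^{\sim},{}^{-})$ is an involutive $ru$-frame, as already recorded before Lemma~\ref{inv1}; its three defining conditions on ${}^{\sim}$ and ${}^{-}$ are just the involutive identities $\negrl x=x=\neglr x$ and $\negr y/x=y\backslash\negl x$ of $\mathbf{G}$ recast at the frame level, together with the standard residuation--negation law $x\backslash y=(y^{-}\circ x)^{\sim}$. Applying Lemma~\ref{inv1} to this frame yields that $\mathbf{W}^+_{\mathbf{G}}$ is a complete involutive $r\ell u$-groupoid, hence a member of $\mathsf{InRLUG}$. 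Lemma~\ref{inv2} confirms that $\mathbf{W}^+_{\mathbf{G}}$ is genuinely a completion of $\mathbf{G}$, via the embedding $x\mapsto\{x\}^{\lhd}$, which settles the first half.

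For the cyclic case, the only additional point is that cyclicity transfers along the construction. If $\mathbf{G} \in \mathsf{CyInRLUG}$, then $\negr x=\negl x$ for all $x\in G$, i.e.\ $x^{\sim}=x^{-}$; hence $\mathbf{W}_{\mathbf{G}}$ is a cyclic involutive $ru$-frame. By the cyclic clause of Lemma~\ref{inv1}, $\mathbf{W}^+_{\mathbf{G}}$ is then a complete cyclic involutive $r\ell u$-groupoid, so $\mathbf{W}^+_{\mathbf{G}}\in\mathsf{CyInRLUG}$, as required.

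Since both the frames and their completions are handed to us by Lemmas~\ref{inv1} and~\ref{inv2}, there is essentially no obstacle here. The one place demanding a moment of care is checking that $\mathbf{W}_{\mathbf{G}}$ really satisfies the involutive-frame conditions and, in the cyclic case, that $x^{\sim}=x^{-}$ on $G$ forces $X^{\sim\lhd}=X^{-\lhd}$ on Galois-closed sets; but the latter is immediate from $X^{\sim}=X^{-}$, and the former is precisely the observation already made before Lemma~\ref{inv1}.
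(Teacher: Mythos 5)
Your proposal is correct and takes essentially the same route as the paper, which simply defers the proof to Section~4 of Galatos and Jipsen (2013): there the argument is exactly yours, namely that $\mathbf{W}_{\mathbf{G}}$ is an involutive (resp.\ cyclic involutive) $ru$-frame whenever $\mathbf{G}$ belongs to $\mathsf{InRLUG}$ (resp.\ $\mathsf{CyInRLUG}$), so Lemma~\ref{inv1} together with its cyclic addendum places $\mathbf{W}^+_{\mathbf{G}}$ back in the same variety, while Lemma~\ref{inv2} supplies the embedding $x \mapsto \{x\}^{\lhd}$. Your closing observation that $x^{\sim}=x^{-}$ on $G$ immediately yields $X^{\sim\lhd}=X^{-\lhd}$ is also the right (and only) point needing verification in the cyclic case.
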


For the proofs of Lemmas \ref{inv1}, \ref{inv2} and \ref{InDM}, the reader is referred to \cite[Section~4]{GJ13}. 

Now we are ready to introduce propositional non-associative non-commutative classical linear logics $\mathbf{NACCLL}^-$ and $\mathbf{NACCLL}$. 
The language ${\mathcal{L}}^{0}_{\oc}$ of these two logics is obtained from ${\mathcal{L}}^{0}$ by adding the unary operation symbol $\oc$. 
Formulas, structures and sequents in the language ${\mathcal{L}}^{0}_{\oc}$ (i.e., ${\mathcal{L}}^{0}_{\oc}$-formulas, ${\mathcal{L}}^{0}_{\oc}$-structures and ${\mathcal{L}}^{0}_{\oc}$-sequents) are defined in the same way as those in ${\mathcal{L}}^{0}$. 
Note that the succedent of an ${\mathcal{L}}^0_{\oc}$-sequent is allowed to be $\varepsilon$. 
A sequent calculus for $\mathbf{NACCLL}^-$ is obtained from the sequent calculus for $\mathbf{FCNL}^-$ by adding all the inference rules in Figure~\ref{inf2}; see Section~\ref{pre}. 
A sequent calculus for $\mathbf{NACCLL}$ can be also obtained from the sequent calculus for $\mathbf{NACCLL}^-$ by adding the rule of cyclicity.

Moreover, we introduce the algebraic semantics for $\mathbf{NACCLL}^-$ and $\mathbf{NACCLL}$. 
An \emph{NACCLL$^-$-algebra} is an algebra of the form $(A,\wedge,\vee,\cdot,\backslash,/,\oc,1,0)$ such that $(A,\wedge,\vee,\cdot,\backslash,/,\oc,1)$ is an NACILL-algebra and $(A,\wedge,\vee,\cdot,\backslash,/,1,0)$ is an involutive $r \ell u$-groupoid.
An \emph{NACCLL-algebra} is just a cyclic NACCLL$^-$-algebra. 
$\mathsf{NACCLL}^-$ (resp. $\mathsf{NACCLL}$) denotes the variety of NACCLL$^-$-algebras (resp. NACCLL-algebras).
Then we have:
\begin{Lem} 
\label{algebra3}
Let $\Phi \cup \{x \Rightarrow a\}$ be a set of ${\mathcal{L}}_{\oc}^{0}$-sequents. The following statements hold:
\begin{enumerate}[(ii)]
\item $\Phi \vdash_{{\mathbf{NACCLL}}^-} x \Rightarrow a$ if and only if $\Phi \models_{\mathsf{NACCLL}^-} x \Rightarrow a$.
\item $\Phi \vdash_{{\mathbf{NACCLL}}} x \Rightarrow a$ if and only if $\Phi \models_{\mathsf{NACCLL}} x \Rightarrow a$.
\end{enumerate}
\end{Lem}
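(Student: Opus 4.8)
The plan is to establish both equivalences in the customary two steps — soundness (the ``only if'' direction) and completeness (the ``if'' direction) — while reducing as much as possible to machinery already in place. The key structural observation is that, by definition, an NACCLL$^-$-algebra is simultaneously an NACILL-algebra and an involutive $r\ell u$-groupoid, and an NACCLL-algebra is just a cyclic one; so the verification splits cleanly into a \emph{modal} part and an \emph{involutive} part, each of which has essentially been treated already.

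For soundness I would check, for each initial sequent and each inference rule, that the associated inequality holds in every member of the relevant variety under every valuation respecting $\Phi$, reading an empty succedent through the map $\sigma$ (so $\sigma(\varepsilon)=0$). The rules inherited from $\mathbf{FNL}$ together with the modal rules of Figure~\ref{inf2} are handled exactly as in the proof of Lemma~\ref{algebra2}; the rules $(0\Rightarrow)$, $(\Rightarrow 0)$, (DN1), (DN2) and (CON) are handled exactly as in the completeness argument for $\mathbf{FCNL}^-$, using the identities $\negrl x = x = \neglr x$ and $\negr y / x = y \backslash \negl x$ of an involutive $r\ell u$-groupoid. For $\mathbf{NACCLL}$ one adds the single observation that the cyclicity sequent $\negr a \Leftrightarrow \negl a$ is validated precisely by the identity $\negr x = \negl x$. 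Thus soundness is a routine superposition of the two earlier arguments.

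For completeness I would run the Lindenbaum--Tarski construction relative to $\Phi$. Define $b \equiv_\Phi c$ to mean $\Phi \vdash_{\mathbf{NACCLL}^-} b \Rightarrow c$ and $\Phi \vdash_{\mathbf{NACCLL}^-} c \Rightarrow b$; the left- and right-introduction rules of each connective (together with (cut) and the derivable monotonicity of $\oc$) show that $\equiv_\Phi$ is a congruence on $\mathbf{Fm}_{{\mathcal{L}}^{0}_{\oc}}$, so the quotient $\mathbf{F}_\Phi$ is an ${\mathcal{L}}^{0}_{\oc}$-algebra ordered by $[b] \le [c] \iff \Phi \vdash_{\mathbf{NACCLL}^-} b \Rightarrow c$. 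Each defining identity of an NACCLL$^-$-algebra then corresponds to a provable pair of sequents: the $r\ell u$- and residuation laws from the $\mathbf{FNL}$ rules, the modal identities (vii)--(xiii) of Definition~\ref{NACILL} from Proposition~\ref{base}, and the involutive identities from (DN1), (DN2), (CON); hence $\mathbf{F}_\Phi \in \mathsf{NACCLL}^-$, and the extra axiom $\negr a \Leftrightarrow \negl a$ places it in $\mathsf{NACCLL}$ in the cyclic case. With the canonical valuation $v(p)=[p]$, a short induction using $(\cdot\Rightarrow)$, $(\Rightarrow\cdot)$, $(1\Rightarrow)$, $(\Rightarrow 1)$ and the interprovability of a structure with its flattening gives $v(\rho(y)) \le v(\sigma(b))$ for every $y \Rightarrow b \in \Phi$, whereas $v(\rho(x)) \not\le v(\sigma(a))$ whenever $\Phi \not\vdash_{\mathbf{NACCLL}^-} x \Rightarrow a$; this refutes $x \Rightarrow a$ in $\mathbf{F}_\Phi$ and yields $\Phi \not\models_{\mathsf{NACCLL}^-} x \Rightarrow a$, and symmetrically for the cyclic case.

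I expect the obstacles to be bookkeeping rather than conceptual. The main delicate point is confirming that $\equiv_\Phi$ really is a congruence compatible with $\backslash$, $/$ and $\oc$ in the presence of an empty succedent, and that the involutive identities $\negrl x = x = \neglr x$ survive in the quotient — this is exactly where the double-negation sequents (DN1), (DN2) and the careful treatment of $\sigma(\varepsilon)=0$ are indispensable. As an alternative, the same completeness could be obtained by running the argument through the involutive $ru$-frame and Dedekind--MacNeille apparatus of Lemmas~\ref{inv1}--\ref{InDM}, in direct parallel with the proof of Lemma~\ref{sc}; but the syntactic quotient is the most economical route and is the one I would adopt.
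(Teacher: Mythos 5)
Your proposal is correct and matches the paper's intent: the paper states Lemma~\ref{algebra3} without any proof, treating it as the same routine (``tedious'') strong completeness argument invoked for Lemma~\ref{algebra2}, and your rule-by-rule soundness check combined with a Lindenbaum--Tarski quotient (with $\sigma(\varepsilon)=0$ handling empty succedents, and (DN1), (DN2), (CON), cyclicity passing to the quotient as initial sequents) is precisely that standard argument. One cosmetic slip: the identities in Definition~\ref{NACILL} are labelled (i)--(vii), not (vii)--(xiii), though it is clear which identities you mean and Proposition~\ref{base} does supply the corresponding provable sequents.
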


The following theorem plays a critical role in proofs of the undecidability of $\mathbf{NACCLL}^-$ and $\mathbf{NACCLL}$. 
\begin{Thm}[Buszkowski (2016)]
\label{Bus}
Let $\Phi \cup \{s\}$ be a finite set of ${\mathcal{L}}^{0}$-sequents. It is undecidable whether $\Phi \vdash_{\mathbf{FCNL}^-} s$. Also, it is undecidable whether $\Phi \vdash_{\mathbf{FCNL}} s$.
\end{Thm}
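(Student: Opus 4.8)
The plan is to reduce from Chvalovsk\'{y}'s Theorem~\ref{ch} in the base case $R=\emptyset$. Since the finitary consequence relation $\Phi \vdash_{\mathbf{FNL}} s$ is undecidable for $\mathcal{L}$-sequents, it suffices to show that $\mathbf{FCNL}^-$ and $\mathbf{FCNL}$ are \emph{conservative} over $\mathbf{FNL}$ on $0$-free data: that for every finite set $\Phi \cup \{s\}$ of $\mathcal{L}$-sequents one has $\Phi \vdash_{\mathbf{FNL}} s$ iff $\Phi \vdash_{\mathbf{FCNL}^-} s$, and likewise with $\mathbf{FCNL}$ in place of $\mathbf{FCNL}^-$. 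The left-to-right implication is immediate, since $\mathbf{FCNL}^-$ and $\mathbf{FCNL}$ both extend $\mathbf{FNL}$; granting the converse, the identity map $(\Phi,s)\mapsto(\Phi,s)$ is a computable reduction of the undecidable $\mathbf{FNL}$-problem to each of the two classical problems, which yields the theorem. Thus the whole content lies in the conservativity direction.

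First I would establish conservativity algebraically, mirroring the embedding used in Lemma~\ref{sc}. Suppose $\Phi \not\vdash_{\mathbf{FNL}} s$ with $\Phi \cup \{s\}$ consisting of $\mathcal{L}$-sequents. By Lemma~\ref{algebra1} there are an $r\ell u$-groupoid $\mathbf{A}$ and a valuation $f$ refuting $s$ while validating $\Phi$. The goal is to embed the $\mathcal{L}$-reduct of $\mathbf{A}$ into a (cyclic) involutive $r\ell u$-groupoid so that this refutation survives; by the completeness theorem for $\mathbf{FCNL}^-$ (resp.\ $\mathbf{FCNL}$) this gives $\Phi \not\vdash_{\mathbf{FCNL}^-} s$ (resp.\ $\not\vdash_{\mathbf{FCNL}}$). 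To manufacture the missing involution I would build an involutive (resp.\ cyclic involutive) $ru$-frame $\mathbf{W}$ out of $\mathbf{A}$ — taking $W=A$ with the groupoid $(A,\cdot,1)$, a dualizing relation $N$, and unary maps $\sim,-$ supplying the negations $\negr{x}$, $\negl{x}$ — and then pass to its Galois algebra. By Lemma~\ref{inv1} this $\mathbf{W}^+$ is a complete (cyclic) involutive $r\ell u$-groupoid, and the analogues of Lemmas~\ref{inv2} and~\ref{InDM} should yield an $\mathcal{L}$-embedding of $\mathbf{A}$ into it preserving the finitely many order facts that $f$ records on $\Phi \cup \{s\}$.

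The main obstacle is exactly this embedding step: a general $r\ell u$-groupoid carries no involution, so one must adjoin the operations $\sim,-$ (equivalently, a dualizing element) without collapsing the original order or perturbing the interpretation of the $0$-free connectives, and for $\mathbf{FCNL}$ one must in addition arrange $\sim=-$ on the frame so as to land in $\mathsf{CyInRLUG}$. A cleaner, purely proof-theoretic alternative would sidestep the embedding altogether: using cut elimination for $\mathbf{FCNL}^-$ and $\mathbf{FCNL}$ (established by Buszkowski, in the same algebraic spirit as the cut-elimination argument of \ref{gentzen}) together with a subformula-property analysis of derivations from the nonlogical axioms in $\Phi$, one argues that in a normalized derivation of a $0$-free sequent from $0$-free axioms the constant $0$, the negations $\negr{x}$, $\negl{x}$, and the rules (DN1), (DN2), (CON) and cyclicity can never be invoked, so that the derivation already lives in $\mathbf{FNL}$. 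Here the delicate point is controlling cuts against the axioms of $\Phi$ so that a subformula property relative to $\Phi \cup \{s\}$ is retained, with non-associativity working in our favour by sharply limiting how structures may recombine. Finally, I note that one could instead prove Theorem~\ref{Bus} directly, by re-running inside the involutive calculi the encoding of an undecidable word or machine problem that underlies Theorem~\ref{ch}; faithfulness of that encoding in the classical setting is, once again, exactly the conservativity asserted above.
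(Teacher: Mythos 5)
Your proposal takes essentially the same route as the paper: Theorem~\ref{Bus} is quoted from Buszkowski, and the paper's only indication of proof is precisely your reduction --- $\mathbf{FCNL}^-$ and $\mathbf{FCNL}$ are strongly conservative extensions of $\mathbf{FNL}$, so the identity map carries Chvalovsk\'{y}'s undecidable consequence problem (Theorem~\ref{ch} with $R=\emptyset$) to the two classical problems. Your two sketches for the conservativity step (an involutive-frame embedding in the spirit of Lemmas~\ref{inv1} and~\ref{inv2}, or cut elimination with a subformula property relative to $\Phi$) correspond to the methods of \cite{Bus16}, to which the paper defers rather than reproving the conservativity itself.
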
 

Buszkowski proved this, using the fact that both of the logics $\mathbf{FCNL}^-$ and $\mathbf{FCNL}$ are  strongly conservative extensions of $\mathbf{FNL}$; see \cite{Bus16}. 
In view of Theorem~\ref{Bus}, we prove the undecidability of $\mathbf{NACCLL}^-$ and $\mathbf{NACCLL}$ by showing the following two lemmas. 

\begin{Lem}
\label{classical1}
Let $\Phi \cup \{x \Rightarrow a\}$ be a set of ${\mathcal{L}}^{0}$-sequents. The following statements hold:
\begin{enumerate}[(ii)]
\item $\Phi \vdash_{{\mathbf{FCNL}}^-} x \Rightarrow a$ if and only if $\Phi \vdash_{{\mathbf{NACCLL}}^-} x \Rightarrow a$.
\item $\Phi \vdash_{{\mathbf{FCNL}}} x \Rightarrow a$ if and only if $\Phi \vdash_{{\mathbf{NACCLL}}} x \Rightarrow a$.
\end{enumerate}
\end{Lem}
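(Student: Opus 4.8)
The plan is to adapt the proof of Lemma~\ref{sc} to the involutive setting, replacing $r\ell u$-groupoids by involutive $r\ell u$-groupoids throughout. I treat part (i) in detail; part (ii) is obtained by inserting the word ``cyclic'' everywhere. In both biconditionals the ``only-if'' direction is immediate, since the sequent calculus for $\mathbf{NACCLL}^-$ (resp.\ $\mathbf{NACCLL}$) is obtained from that for $\mathbf{FCNL}^-$ (resp.\ $\mathbf{FCNL}$) merely by adjoining the modality rules of Figure~\ref{inf2}; hence every $\mathbf{FCNL}^-$-derivation from $\Phi$ is already an $\mathbf{NACCLL}^-$-derivation. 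So I concentrate on the contrapositive of the ``if'' direction.

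First I would assume $\Phi \not\vdash_{\mathbf{FCNL}^-} x \Rightarrow a$. By the completeness proposition for $\mathsf{InRLUG}$, there exist an involutive $r\ell u$-groupoid $\mathbf{A}$ and a valuation $f$ into $\mathbf{A}$ with $f(\rho(y)) \leq f(\sigma(b))$ for every $y \Rightarrow b \in \Phi$ but $f(\rho(x)) \not\leq f(\sigma(a))$. I would then pass to the Dedekind-MacNeille completion $\mathbf{W}^+_{\mathbf{A}}$: by Lemma~\ref{InDM} this is again an involutive $r\ell u$-groupoid, by Lemma~\ref{inv1} it is complete, and by Lemma~\ref{inv2} the map $h(y)=\{y\}^{\lhd}$ embeds $\mathbf{A}$ into it, so the witnessing failure is preserved. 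Next I would equip $\mathbf{W}^+_{\mathbf{A}}$ with the exponential defined exactly as in Lemma~\ref{sc}, namely $\oc_{\gamma_{\leq}}X = \gamma_{\leq}(X \cap K)$, where $K$ is the set of elements of $\mathbf{A}$ satisfying conditions (1)--(5) of that proof.

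The verification that the resulting $\mathcal{L}^0_{\oc}$-algebra satisfies the modality conditions of Definition~\ref{mrlu} and all the equations of Definition~\ref{NACILL} is word-for-word the same as in Lemma~\ref{sc}, since those conditions involve only the $\mathcal{L}_{\oc}$-operations and the single nucleus $\gamma_{\leq}$ on $\mathcal{P}(A)$ arising from $N=\,\leq$. The one point that really needs checking beyond Lemma~\ref{sc} is that the expanded algebra is an \emph{involutive} $r\ell u$-groupoid as well, so that it is genuinely an $\mathbf{NACCLL}^-$-algebra. This is immediate: the $\mathcal{L}^0$-reduct of the expansion is literally $\mathbf{W}^+_{\mathbf{A}}$, which is an involutive $r\ell u$-groupoid by Lemma~\ref{inv1}, and the involutive identities live entirely in the $\mathcal{L}^0$-reduct, so the newly adjoined operation $\oc_{\gamma_{\leq}}$ cannot disturb them. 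Thus the expansion is simultaneously an NACILL-algebra and an involutive $r\ell u$-groupoid, hence an NACCLL$^-$-algebra. Setting $v(p)=h(f(p))$ for each propositional variable $p$ yields a valuation into it with $v(\rho(x)) \not\leq v(\sigma(a))$ and $v(\rho(y)) \leq v(\sigma(b))$ for all $y \Rightarrow b \in \Phi$, so by Lemma~\ref{algebra3}(i) we conclude $\Phi \not\vdash_{\mathbf{NACCLL}^-} x \Rightarrow a$.

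For part (ii) one starts instead from a cyclic involutive $r\ell u$-groupoid supplied by the completeness proposition for $\mathsf{CyInRLUG}$, uses that its Dedekind-MacNeille completion is again cyclic (Lemma~\ref{InDM} together with the remark following Lemma~\ref{inv1}), and finishes via Lemma~\ref{algebra3}(ii). I do not anticipate a genuine obstacle: the substantive work was already carried out in Lemma~\ref{sc} and in the frame-theoretic Lemmas~\ref{inv1}--\ref{InDM}, and the involutive (resp.\ cyclic) structure is inherited for free from the completion. The only thing to keep an eye on is the bookkeeping around $\sigma$ and the $\varepsilon$-succedents, which is harmless because the embedding $h$ preserves $0$ and the two negations.
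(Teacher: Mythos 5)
Your proposal is correct and coincides with the paper's intended argument: the paper itself proves Lemma~\ref{classical1} only by remarking that the proof is essentially the same as that of Lemma~\ref{sc}, and your write-up carries out exactly that adaptation, correctly identifying the one genuinely new point (that the involutive, resp.\ cyclic, identities live entirely in the $\mathcal{L}^0$-reduct, which Lemmas~\ref{inv1}--\ref{InDM} show is preserved under Dedekind-MacNeille completion, so the adjoined $\oc_{\gamma_{\leq}}$ cannot disturb them) together with the bookkeeping for $\sigma$ and $\varepsilon$-succedents via Lemma~\ref{algebra3}.
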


Given an ${\mathcal{L}}^0_{\oc}$-sequent $s=x \Rightarrow a$, define $\tau^*(s)=\oc(\rho(x)\backslash \sigma(a))$. 
\begin{Lem}
\label{classical2}
Let $\{s_1,\ldots,s_n\} \cup \{x \Rightarrow a\}$ be a finite set of ${\mathcal{L}}^{0}_{\oc}$-sequents. The following statements hold:
\begin{enumerate}[(ii)]
\item $\{s_1,\ldots,s_n\} \vdash_{{\mathbf{NACCLL}}^-} x \Rightarrow a$ if and only if $\vdash_{{\mathbf{NACCLL}}^-} x \circ (\tau^*(s_1) \circ \cdots(\tau^*(s_{n-1}) \circ \tau^*(s_n))\cdots) \Rightarrow a$.
\item $\{s_1,\ldots,s_n\} \vdash_{{\mathbf{NACCLL}}} x \Rightarrow a$ if and only if $\vdash_{{\mathbf{NACCLL}}} x \circ (\tau^*(s_1) \circ \cdots(\tau^*(s_{n-1}) \circ \tau^*(s_n))\cdots) \Rightarrow a$.
\end{enumerate}
\end{Lem}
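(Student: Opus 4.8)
The plan is to transcribe the proof of Lemma~\ref{encoding} almost verbatim, since both $\mathbf{NACCLL}^-$ and $\mathbf{NACCLL}$ are obtained from the corresponding involutive calculus by adjoining \emph{all} the modal rules of Figure~\ref{inf2}; in particular $(\oc\weak)$, $(\oc\cont)$ and $(\oc\asso^*)$ --- the three rules that drive the argument of Lemma~\ref{encoding} --- remain available. Throughout, let $\mathbf{L}$ denote either $\mathbf{NACCLL}^-$ or $\mathbf{NACCLL}$, so that parts (i) and (ii) are proved uniformly; note that each $\tau^*(s_i)=\oc(\rho(x_i)\backslash\sigma(a_i))$ is a single $\oc$-formula, so the appended context $\tau^*(s_1)\circ\cdots\circ\tau^*(s_n)$ is a product of $\oc$-formulas, i.e.\ an element of $K_{{\mathcal{L}}^{0}_{\oc}}$. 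The one genuinely new feature is that a succedent may be $\varepsilon$; this is exactly what $\sigma$ records, and it is reconciled with the calculus by the rules $(0\Rightarrow)$ and $(\Rightarrow 0)$, which make $y\Rightarrow\varepsilon$ and $y\Rightarrow 0=y\Rightarrow\sigma(\varepsilon)$ interderivable.

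For the \emph{only-if} direction I would argue by induction on the length of a proof $\mathsf{P}$ of $x\Rightarrow a$ in $\mathbf{L}$ from $\{s_1,\ldots,s_n\}$, following the two cases displayed in Lemma~\ref{encoding}. When $\mathsf{P}$ is a hypothesis $s_i=x_i\Rightarrow a_i$ with $a_i$ a formula, the derivation is copied unchanged, producing $x_i\circ\tau^*(s_i)\Rightarrow a_i$ and then weakening in the rest of the context by $(\oc\weak)$; here one uses, as before, that $\vdash_{\mathbf{L}} x_i\Rightarrow\rho(x_i)$. When $a_i=\varepsilon$, I replace the initial sequent $a_i\Rightarrow a_i$ used there by the axiom $0\Rightarrow\varepsilon$: applying $(\backslash\Rightarrow)$ to $x_i\Rightarrow\rho(x_i)$ and $0\Rightarrow\varepsilon$ yields $x_i\circ(\rho(x_i)\backslash 0)\Rightarrow\varepsilon$, whence $(\oc\Rightarrow)$ gives $x_i\circ\tau^*(s_i)\Rightarrow\varepsilon$, and $(\oc\weak)$ introduces the remaining context. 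Every leaf that is instead one of the logical, structural or involutive initial sequents of $\mathbf{L}$ --- $(\mathrm{Id})$, $\varepsilon\Rightarrow 1$, $0\Rightarrow\varepsilon$, (DN1), (DN2), (CON), and (in the cyclic case) cyclicity --- is a theorem of $\mathbf{L}$ with a one-formula or empty antecedent, so the whole context is simply weakened away by repeated $(\oc\weak)$. The inductive step for each logical rule, and in particular the branching rules and (cut) where the context occurs in both premises, is handled exactly as the $(\backslash\Rightarrow)$ case of Lemma~\ref{encoding}: after pushing the context through both premises, the two copies are brought together by $(\oc\asso^*)$ and merged by $(\oc\cont)$. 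The only rule peculiar to the involutive setting that is not an axiom, namely $(\Rightarrow 0)$, is trivial, since it leaves the antecedent untouched: one pushes the context through the premise by the induction hypothesis and reapplies $(\Rightarrow 0)$ to the conclusion.

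The \emph{if} direction is straightforward, exactly as in Lemma~\ref{encoding}. From $\vdash_{\mathbf{L}} x\circ(\tau^*(s_1)\circ\cdots\circ\tau^*(s_n))\Rightarrow a$ we trivially obtain $\{s_1,\ldots,s_n\}\vdash_{\mathbf{L}} x\circ(\tau^*(s_1)\circ\cdots\circ\tau^*(s_n))\Rightarrow a$, so it suffices to derive $\{s_1,\ldots,s_n\}\vdash_{\mathbf{L}}\varepsilon\Rightarrow\tau^*(s_i)$ for each $i$ and then remove the $\tau^*(s_i)$ one at a time by (cut), using the unit laws of $\circ$ to collapse the context to $\varepsilon$. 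To derive $\varepsilon\Rightarrow\tau^*(s_i)$ from the hypothesis $s_i=x_i\Rightarrow a_i$: if $a_i=\varepsilon$, first pass to $x_i\Rightarrow 0$ by $(\Rightarrow 0)$; then, with $\sigma(a_i)$ on the right in either case, collapse $x_i$ to $\rho(x_i)$ by repeated $(\cdot\Rightarrow)$ (and $(1\Rightarrow)$ if $x_i=\varepsilon$), apply $(\Rightarrow\backslash)$ to obtain $\varepsilon\Rightarrow\rho(x_i)\backslash\sigma(a_i)$, and finish with $(\Rightarrow\oc)$, which is legitimate since $\varepsilon\in K_{{\mathcal{L}}^{0}_{\oc}}$.

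Since the argument is a transcription, no deep new difficulty arises; the point that must be checked, and which I expect to be the only real concern, is that the additional involutive and cyclic machinery never forces the context to be \emph{propagated} through a branching rule. This is immediate precisely because (DN1), (DN2), (CON), cyclicity and $(0\Rightarrow)$ are all axioms, so at those leaves the context is discarded by $(\oc\weak)$ rather than carried along, while the single non-axiom $(\Rightarrow 0)$ acts only on the succedent. Consequently one and the same induction establishes both (i) and (ii), the cyclicity axiom of $\mathbf{NACCLL}$ entering merely as one further leaf. As in Lemma~\ref{encoding}, the most laborious bookkeeping is the repeated use of $(\oc\asso^*)$ to make the two context copies adjacent before contracting, but this is entirely routine.
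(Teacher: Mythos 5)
Your proposal is correct and takes exactly the route the paper intends: the paper gives no separate proof of Lemma~\ref{classical2}, stating only that it is ``essentially the same'' as that of Lemma~\ref{encoding}, and your transcription supplies precisely the needed adaptations (treating $\varepsilon$-succedents via $\sigma$ and the axiom $0 \Rightarrow \varepsilon$, weakening in the context at the new initial sequents (DN1), (DN2), (CON) and cyclicity via $(\oc \weak)$, and handling $(\Rightarrow 0)$ trivially since it leaves the antecedent untouched). The uniform treatment of $\mathbf{NACCLL}^-$ and $\mathbf{NACCLL}$, with cyclicity entering only as one further leaf, matches the paper's intent as well.
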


The proofs of Lemmas \ref{classical1} and \ref{classical2} are essentially the same as those of Lemmas \ref{sc} and \ref{encoding} in Section \ref{mainsection}. 
Consequently, we have:
\begin{Thm}
The decision problems for $\mathbf{NACCLL}^-$ and $\mathbf{NACCLL}$ are undecidable. 
\end{Thm}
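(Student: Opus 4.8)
The plan is to mirror the proof of Theorem~\ref{main1} exactly, replacing the intuitionistic ingredients by their classical counterparts that have just been assembled, and to treat $\mathbf{NACCLL}^-$ and $\mathbf{NACCLL}$ in parallel, using part (i) of the relevant lemmas for the former and part (ii) for the latter. I describe only the involutive (non-cyclic) case, since the cyclic case is word-for-word identical.

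First I would combine Lemma~\ref{classical1} and Lemma~\ref{classical2} into a classical analogue of Corollary~\ref{core}: for every finite set of $\mathcal{L}^0$-sequents $\{s_1,\ldots,s_n\} \cup \{x \Rightarrow a\}$, the consequence $\{s_1,\ldots,s_n\} \vdash_{\mathbf{FCNL}^-} x \Rightarrow a$ holds if and only if
\[
\vdash_{\mathbf{NACCLL}^-} x \circ (\tau^*(s_1) \circ \cdots (\tau^*(s_{n-1}) \circ \tau^*(s_n))\cdots) \Rightarrow a.
\]
Since every $\mathcal{L}^0$-sequent is a special case of an $\mathcal{L}^0_{\oc}$-sequent, Lemma~\ref{classical1}(i) first rewrites the consequence in $\mathbf{FCNL}^-$ as the same consequence in $\mathbf{NACCLL}^-$, and then Lemma~\ref{classical2}(i) encodes that finitary consequence as the single displayed provability statement. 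The only point to verify is that Lemma~\ref{classical2} applies to the $\oc$-free sequents coming out of $\mathbf{FCNL}^-$, exactly as Lemma~\ref{encoding} was applied in Corollary~\ref{core} to $\mathcal{L}$-sequents; this is immediate, and the definition $\tau^*(x \Rightarrow a)=\oc(\rho(x)\backslash\sigma(a))$ (with $\sigma$ sending an $\varepsilon$-succedent to $0$) ensures the encoding is well defined even when the succedent is $\varepsilon$.

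With this equivalence in hand, the undecidability is a routine reduction. The map sending an instance $\{s_1,\ldots,s_n\}\cup\{x\Rightarrow a\}$ to the single $\mathcal{L}^0_{\oc}$-sequent $x \circ (\tau^*(s_1) \circ \cdots (\tau^*(s_{n-1}) \circ \tau^*(s_n))\cdots) \Rightarrow a$ is plainly computable, so it is a many-one reduction of the finitary consequence problem for $\mathbf{FCNL}^-$ to the provability problem for $\mathbf{NACCLL}^-$. Since the former is undecidable by Buszkowski's Theorem~\ref{Bus}, the latter is undecidable as well. Running the same argument with parts (ii) of Lemmas~\ref{classical1} and~\ref{classical2} together with the second assertion of Theorem~\ref{Bus} yields the undecidability of $\mathbf{NACCLL}$.

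Because the genuinely technical content has already been isolated in Lemmas~\ref{classical1} and~\ref{classical2} (whose proofs parallel those of Lemmas~\ref{sc} and~\ref{encoding}), I do not expect a substantive obstacle in this final assembly. The only care required is bookkeeping: to keep the involutive and cyclic branches strictly separate so that the cyclicity axiom $\negr a \Leftrightarrow \negl a$ is invoked only for $\mathbf{NACCLL}$ and $\mathbf{FCNL}$, and never silently imported into the $\mathbf{NACCLL}^-$/$\mathbf{FCNL}^-$ reduction where it is unavailable.
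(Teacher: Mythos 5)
Your proposal matches the paper's proof exactly: the paper likewise obtains the theorem by chaining Lemma~\ref{classical1} and Lemma~\ref{classical2} (the classical analogues of Lemmas~\ref{sc} and~\ref{encoding}) to reduce the finitary consequence problems for $\mathbf{FCNL}^-$ and $\mathbf{FCNL}$, undecidable by Theorem~\ref{Bus}, to provability in $\mathbf{NACCLL}^-$ and $\mathbf{NACCLL}$ respectively. Your additional remarks on the computability of the reduction and on keeping the cyclic and non-cyclic branches separate are correct and consistent with the paper's argument.
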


\begin{Remark}
Unfortunately, none of our sequent calculi for $\mathbf{FCNL}^-$, $\mathbf{FCNL}$, $\mathbf{NACCLL}^-$ and $\mathbf{NACCLL}$ admit cut elimination. 
For our purpose here, however, we consider that our formulation of these logics is more convenient than other formulations, such as dual Sch\"{u}tte style systems in \cite{Bus16}, because our formulation allows us to obtain classical non-associative logics from $\mathbf{FNL}$ and $\mathbf{NACILL}$ by merely adding several inference rules. 
\end{Remark}

\section{Concluding remarks and future work}
\label{future}
We have introduced a non-associative and non-commutative version of propositional intuitionistic linear logic and have shown that all its extensions by exchange and contraction are undecidable. 
Likewise, we have also shown the undecidability of propositional non-associative non-commutative classical linear logics.

In Section~\ref{mainsection}, we have employed algebraic techniques to prove the main result, since it seemed difficult to prove Lemma~\ref{sc}, using only proof-theoretic methods. 
We believe that our undecidability results can be shown by purely proof-theoretic methods. 

In connection with the problems we dealt with in this paper, the following two questions remain open:
\begin{enumerate}[(ii)]
\item Are the logics $\mathbf{NACILL}_{\weak}$ and $\mathbf{NACILL}_{\exch\!\weak}$ decidable?
\item Are the logics $\mathbf{FNL}_{\cont}$ and $\mathbf{FNL}_{\exch\!\cont}$ undecidable?
\end{enumerate}

Regarding the question (i), the techniques described in Section~\ref{mainsection} cannot be used to show that $\mathbf{NACILL}_{\weak}$ and $\mathbf{NACILL}_{\exch\!\weak}$ are undecidable, because Blok and van Alten (2002) proved that the deducibility problems for the multiplicative-additive fragments of both of these logics are already decidable. 
We conjecture that $\mathbf{NACILL}_{\weak}$ and $\mathbf{NACILL}_{\exch\!\weak}$ are decidable. 

The question (ii) is also mentioned in \cite{Ch15} and seems to be of interest to substructural logicians rather than linear logicians. 
We conjecture that  $\mathbf{FNL}_{\cont}$ and $\mathbf{FNL}_{\exch\!\cont}$ are undecidable. 
If $\mathbf{FNL}_{\cont}$ and $\mathbf{FNL}_{\exch\!\cont}$ are undecidable, one has the undecidability of  $\mathbf{NACILL}_{\cont}$ and $\mathbf{NACILL}_{\exch\!\cont}$, due to the fact that $\mathbf{NACILL}_{\cont}$ (resp. $\mathbf{NACILL}_{\exch\!\cont}$) is a conservative extension of $\mathbf{FNL}_{\cont}$ (resp. $\mathbf{FNL}_{\exch\!\cont}$). 
This fact immediately follows from Theorem \ref{cutelim} in \ref{gentzen}.

\begin{appendix}
\section{Cut elimination for propositional non-associative intuitionistic linear logics}
\label{gentzen}

\setcounter{section}{0}
\renewcommand{\thesection}{A\Alph{section}}

We present a uniform proof of cut elimination for $\mathbf{NACILL}$ and all its extensions by the rules of exchange, contraction and weakening, using modal residuated frames. 
Our proof is a slight refinement of the algebraic proof of the cut elimination for $\mathbf{FNL}$, which was given in \cite{GJ13}.

First of all, we introduce modal $ru$-frames. 
\begin{Def}
A \emph{modal $ru$-frame} is a tuple ${\mathbf{W}}=(W,W',N,\varepsilon,K)$ such that $(W,W',N,\varepsilon)$ is an $ru$-frame and $K$ is a subalgebra of $(W,\circ,\varepsilon)$. 
\end{Def}

Given a modal $ru$-frame $\mathbf{W}$, define the unary operation $\oc_{\gamma_{N}}$ on $\gamma_{N}[{\mathcal{P}}(W)]$ by $\oc_{\gamma_{N}} X=\gamma_{N}(X \cap K)$. 
The following lemma holds: 
\begin{Lem}
\label{mru}
If $\mathbf{W}$ is a modal $ru$-frame, ${\mathbf{W}}^+=(\gamma_{N}[{\mathcal{P}}(W)],\cap,\cup_{\gamma_{N}},\circ_{\gamma_{N}},\backslash,/,\oc_{\gamma_{N}},\gamma_{N}(\{\varepsilon\}))$ is a complete modal $r \ell u$-groupoid.  
In addition, the identities $\oc x \leq x$ and $\oc x \leq \oc\oc x$ hold in $\mathbf{W}^+$.
\end{Lem}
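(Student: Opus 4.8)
The plan is to build on Lemma~\ref{frame}, which already guarantees that the $\mathcal{L}$-reduct $(\gamma_{N}[{\mathcal{P}}(W)],\cap,\cup_{\gamma_{N}},\circ_{\gamma_{N}},\backslash,/,\gamma_{N}(\{\varepsilon\}))$ is a complete $r\ell u$-groupoid. Thus the entire lattice structure, residuation, and completeness come for free, and it remains only to verify that $\oc_{\gamma_{N}}$ satisfies the three conditions~(i)--(iii) defining a modal $r\ell u$-groupoid (Definition~\ref{mrlu}) together with the two extra identities $\oc x\leq x$ and $\oc x\leq\oc\oc x$. The facts I will use throughout are that $\gamma_{N}$ is a nucleus, hence inflationary, monotone, idempotent, and satisfying $\gamma_{N}(X)\circ\gamma_{N}(Y)\subseteq\gamma_{N}(X\circ Y)$ and the absorption law $\gamma_{N}(\gamma_{N}(X)\circ\gamma_{N}(Y))=\gamma_{N}(X\circ Y)$; that $K$, being a subalgebra of $(W,\circ,\varepsilon)$, is closed under $\circ$ and contains $\varepsilon$; and that every element $X$ of the Galois algebra is Galois-closed, i.e. $\gamma_{N}(X)=X$.

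For condition~(i), since $\varepsilon\in K$ and $\varepsilon\in\gamma_{N}(\{\varepsilon\})$, I have $\{\varepsilon\}\subseteq\gamma_{N}(\{\varepsilon\})\cap K$, and applying $\gamma_{N}$ yields $\gamma_{N}(\{\varepsilon\})\subseteq\oc_{\gamma_{N}}\gamma_{N}(\{\varepsilon\})$. Monotonicity~(ii) is immediate: $X\subseteq Y$ gives $X\cap K\subseteq Y\cap K$, hence $\oc_{\gamma_{N}}X\subseteq\oc_{\gamma_{N}}Y$. For condition~(iii) the crux is the inclusion $(X\cap K)\circ(Y\cap K)\subseteq(X\circ Y)\cap K$, which holds because the left side lies in $X\circ Y$ and, since $K$ is closed under $\circ$, also in $K$; from this I derive $\oc_{\gamma_{N}}X\circ_{\gamma_{N}}\oc_{\gamma_{N}}Y=\gamma_{N}((X\cap K)\circ(Y\cap K))\subseteq\oc_{\gamma_{N}}(X\circ_{\gamma_{N}}Y)$ by the nucleus identities above.

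The two extra identities are equally direct. For $\oc x\leq x$, note $X\cap K\subseteq X$ and $X$ is Galois-closed, so $\oc_{\gamma_{N}}X=\gamma_{N}(X\cap K)\subseteq\gamma_{N}(X)=X$. For $\oc x\leq\oc\oc x$, inflationarity gives $X\cap K\subseteq\gamma_{N}(X\cap K)\cap K$, and applying $\gamma_{N}$ yields $\oc_{\gamma_{N}}X\subseteq\oc_{\gamma_{N}}\oc_{\gamma_{N}}X$.

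This verification is a direct generalization of the computations already carried out in the proof of Lemma~\ref{sc}, where the special case $W=W'=A$, $N={\leq}$, $\varepsilon=1$ was treated; the only abstraction is that here $K$ is merely postulated to be a subgroupoid rather than constructed from the conditions (1)--(5). I therefore do not anticipate a genuine obstacle, as every step reduces to an application of the nucleus properties combined with closure of $K$ under $\circ$. The single point demanding care is condition~(iii), where one must not conflate $X\circ Y$ with $X\circ_{\gamma_{N}}Y=\gamma_{N}(X\circ Y)$; the passage between them is mediated by the absorption law $\gamma_{N}(\gamma_{N}(X)\circ\gamma_{N}(Y))=\gamma_{N}(X\circ Y)$, which I will invoke explicitly so that the extra $\gamma_{N}$ hidden inside $\circ_{\gamma_{N}}$ is correctly absorbed.
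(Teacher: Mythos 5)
Your proof is correct and takes essentially the same route as the paper: invoke Lemma~\ref{frame} for the complete $r\ell u$-groupoid reduct, then verify conditions (i)--(iii) of Definition~\ref{mrlu} directly from monotonicity of $\gamma_{N}$, the inclusion $(X\cap K)\circ(Y\cap K)\subseteq(X\circ Y)\cap K$ given by closure of $K$ under $\circ$, and the nucleus absorption law. The only difference is that you also write out the verifications of $\oc x\leq x$ and $\oc x\leq\oc\oc x$, which the paper leaves to the reader, and these are correct.
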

\begin{proof}
By Lemma \ref{frame}, we know that the $\mathcal{L}$-reduct of $\mathbf{W}^+$ is a complete $r \ell u$-groupoid. 
We show that the three conditions given in Definition \ref{mrlu} are satisfied in $\mathbf{W}^+$.
\begin{enumerate}[(iii)]
\item Obviously, $\varepsilon \in \gamma_{N}(\{\varepsilon\}) \cap K$. 
We have $\gamma_{N}(\{\varepsilon\}) \subseteq \oc_{\gamma_{N}}\gamma_{N}(\{\varepsilon\})$ by monotonicity of $\gamma_{N}$. 
\item Let $X,Y \in \gamma_{N}[{\mathcal{P}}(W)]$ be such that $X \subseteq Y$. Clearly, $X \cap K \subseteq Y \cap K$. By monotonicity of $\gamma_{N}$, $\oc_{\gamma_{N}}X \subseteq \oc_{\gamma_{N}}Y$. 
\item Let $X,Y \in \gamma_{N}[{\mathcal{P}}(W)]$.
Clearly, we have $X \cap K \circ Y \cap K \subseteq X \circ Y$ and $X \cap K \circ Y \cap K \subseteq K \circ K$. 
We have $X \cap K \circ Y \cap K \subseteq (X \circ Y) \cap K$, due to the fact that $K$ is closed under multiplication.
By properties of nuclei, we have $\oc_{\gamma_{N}} X \circ_{\gamma_{N}} \oc_{\gamma_{N}} Y \subseteq \oc_{\gamma_{N}}(X \circ_{\gamma_{N}} Y)$.
\end{enumerate}
A proof of the remaining claim is left to the reader. 
\end{proof}

Given an NACILL-algebra $\mathbf{A}$, ${\mathbf{W}}_{\mathbf{A}}=(A,A,\leq,1,A^{\oc})$ is a modal $ru$-frame, where $A^{\oc}=\{\oc x \mid x \in A\}$. 
Note that ${\mathbf{W}}^+_{\mathbf{A}}=(\gamma_{\leq}[{\mathcal{P}}(A)],\cap,\cup_{\gamma_{\leq}},\circ_{\gamma_{\leq}},\backslash,/,\oc_{\gamma_{\leq}},\gamma_{\leq}(\{1\}))$ is an NACILL-algebra whose lattice reduct is complete. 
As in the case of $r \ell u$-groupoids, we have the following:
\begin{Lem}
Let $\mathbf{A}$ be an NACILL-algebra. The map $x \mapsto \{x\}^{\lhd}$ is an embedding of $\mathbf{A}$ into the complete NACILL-algebra $\mathbf{W}^+_{\mathbf{A}}$.  
\end{Lem}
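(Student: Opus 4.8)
The plan is to reduce the statement to the embedding theorem for $r\ell u$-groupoids already recalled in Section~\ref{pre} and then to verify, by one short calculation, the single genuinely new requirement, namely that $h\colon x \mapsto \{x\}^{\lhd}$ commutes with $\oc$. Indeed, the $\mathcal{L}$-reduct of $\mathbf{W}^+_{\mathbf{A}}$ is precisely the Dedekind--MacNeille completion of the $\mathcal{L}$-reduct of $\mathbf{A}$ (an $r\ell u$-groupoid), so by the fact recorded in Section~\ref{pre} the map $h$ is an embedding of the $\mathcal{L}$-reduct of $\mathbf{A}$ into the $\mathcal{L}$-reduct of $\mathbf{W}^+_{\mathbf{A}}$. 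In particular $h$ is injective and preserves $\wedge,\vee,\cdot,\backslash,/$ and $1$. Hence the whole lemma reduces to proving $\{\oc a\}^{\lhd} = \oc_{\gamma_{\leq}}\{a\}^{\lhd}$ for every $a \in A$.

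First I would unwind the definitions. Since $N$ is the order $\leq$, the set $\{a\}^{\lhd}$ is the principal downset $\{c \in A \mid c \leq a\}$, so
\[
\{a\}^{\lhd} \cap A^{\oc} = \{\oc b \mid b \in A,\ \oc b \leq a\},
\]
and $\oc_{\gamma_{\leq}}\{a\}^{\lhd} = \gamma_{\leq}\bigl(\{a\}^{\lhd} \cap A^{\oc}\bigr)$ by the definition of $\oc_{\gamma_{\leq}}$ for the modal $ru$-frame $\mathbf{W}_{\mathbf{A}}$. The key step is to identify $\oc a$ as the greatest element of $Y := \{a\}^{\lhd} \cap A^{\oc}$. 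It belongs to $Y$ because $\oc a \leq a$ by identity (i) of Definition~\ref{NACILL}; and it dominates every member, since from $\oc b \leq a$ monotonicity of $\oc$ (condition (ii) of Definition~\ref{mrlu}) gives $\oc\oc b \leq \oc a$, while identity (ii) of Definition~\ref{NACILL} gives $\oc b \leq \oc\oc b$, whence $\oc b \leq \oc a$.

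Once $\oc a$ is identified as the greatest element of $Y$, its set of $N$-upper bounds coincides with $\{\oc a\}^{\rhd}$, so $\gamma_{\leq}(Y) = Y^{\rhd\lhd} = \{\oc a\}^{\rhd\lhd}$; and because $N$ is $\leq$, the right-hand side is just $\{c \in A \mid c \leq \oc a\} = \{\oc a\}^{\lhd}$. This yields $\oc_{\gamma_{\leq}}h(a) = \gamma_{\leq}(Y) = \{\oc a\}^{\lhd} = h(\oc a)$, the desired identity. I do not expect any serious obstacle: the only content beyond the $r\ell u$-groupoid case is the maximality computation above, which is a direct consequence of the three facts $\oc a \leq a$, $\oc b \leq \oc\oc b$, and monotonicity of $\oc$, with all remaining homomorphism and injectivity conditions inherited from the Dedekind--MacNeille embedding.
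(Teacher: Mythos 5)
Your proposal is correct and follows essentially the same route as the paper: both reduce the lemma to the known Dedekind--MacNeille embedding of the $\mathcal{L}$-reduct and then verify $\oc$-preservation using exactly the three facts $\oc a \leq a$, $\oc b \leq \oc\oc b$, and monotonicity of $\oc$. Your packaging of the verification---identifying $\oc a$ as the greatest element of $\{a\}^{\lhd} \cap A^{\oc}$ so that $\gamma_{\leq}$ of that set is the principal downset $\{\oc a\}^{\lhd}$---is just a compact reformulation of the paper's two-inclusion argument, which establishes $\{\oc x\}^{\lhd} \subseteq \oc_{\gamma_{\leq}}\{x\}^{\lhd}$ and $\{x\}^{\lhd} \cap A^{\oc} \subseteq \{\oc x\}^{\lhd}$ separately.
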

\begin{proof}
Recall that the map $x \mapsto \{x\}^{\lhd}$ is an embedding of the $\mathcal{L}$-reduct of $\mathbf{A}$ into the $\mathcal{L}$-reduct of $\mathbf{W}^+_{\mathbf{A}}$. 
We check only that this map preserves the operation $\oc$. 
Due to the fact that $\oc x \leq x$ and $\oc x \in A^{\oc}$, $\oc x \in \{x\}^{\lhd} \cap A^{\oc}$. 
Thus $\{\oc x\}^{\lhd}=\gamma_{\leq}(\{\oc x\}) \subseteq \gamma_{\leq}(\{x\}^{\lhd} \cap A^{\oc})$, i.e., $\{\oc x\}^{\lhd} \subseteq \oc_{\gamma_{\leq}}\{x\}^{\lhd}$. 
For the reverse inclusion, let $a \in \{x\}^{\lhd} \cap A^{\oc}$. 
Then, $a=\oc b$ for some $b \in A$ and $a \leq x$. 
By monotonicity and idempotency of $\oc$, we have $a \leq \oc x$. 
This means that $a \in \{\oc x\}^{\lhd}$. 
Thus we have $\{x\}^{\lhd} \cap A^{\oc} \subseteq \{\oc x\}^{\lhd}$. 
Hence, $\oc_{\gamma_{\leq}}\{x\}^{\lhd} \subseteq \{\oc x\}^{\lhd}$. 
\end{proof}

Given an NACILL-algebra $\mathbf{A}$, the complete NACILL-algebra $\mathbf{W}^+_{\mathbf{A}}$ is also called the  \emph{Dedekind-MacNeille completion} of $\mathbf{A}$. 
We say that a class $\mathcal{K}$ of NACILL-algebras \emph{admits Dedekind-MacNeille completions} if $\mathbf{W}^+_{\mathbf{A}} \in \mathcal{K}$ for any $\mathbf{A} \in \mathcal{K}$.
The following holds:
\begin{Lem}
Let $R$ be a subset of $\{\exch,\cont,\weak\}$. $\mathsf{NACILL}_{\mathsf{R}}$ admits Dedekind-MacNeille completions.
\end{Lem}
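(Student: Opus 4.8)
The plan is to reduce the claim to the already-established fact (Lemma~\ref{DM}) that $\mathsf{RLUG}_{\mathsf{R}}$ admits Dedekind-MacNeille completions, together with the preceding lemma guaranteeing that $\mathbf{W}^+_{\mathbf{A}}$ is an NACILL-algebra. Fix an $\mathbf{A} \in \mathsf{NACILL}_{\mathsf{R}}$ and form the modal $ru$-frame $\mathbf{W}_{\mathbf{A}}=(A,A,\leq,1,A^{\oc})$ with $A^{\oc}=\{\oc x \mid x \in A\}$. By the preceding lemma, $\mathbf{W}^+_{\mathbf{A}}$ is already an NACILL-algebra, so it only remains to verify that it additionally satisfies the structural equations in $\mathsf{R}$.

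First I would observe that the operations $\cap$, $\cup_{\gamma_{\leq}}$, $\circ_{\gamma_{\leq}}$, $\backslash$, $/$ and the constant $\gamma_{\leq}(\{1\})$ defining the $\mathcal{L}$-reduct of $\mathbf{W}^+_{\mathbf{A}}$ depend only on the $ru$-frame $(A,A,\leq,1)$ and not on the distinguished subalgebra $K=A^{\oc}$; the latter is used solely to define $\oc_{\gamma_{\leq}}$. Consequently the $\mathcal{L}$-reduct of $\mathbf{W}^+_{\mathbf{A}}$ coincides with the Dedekind-MacNeille completion of the $\mathcal{L}$-reduct $\mathbf{B}$ of $\mathbf{A}$, in the sense of Section~\ref{pre}.

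Next, since $\mathbf{A}\in\mathsf{NACILL}_{\mathsf{R}}$ and the equations $\mathsf{e}$, $\mathsf{c}$, $\mathsf{w}$ (namely $xy\leq yx$, $x\leq xx$, $x\leq 1$) are formulated purely in the language $\mathcal{L}$, the $\mathcal{L}$-reduct $\mathbf{B}$ lies in $\mathsf{RLUG}_{\mathsf{R}}$. By Lemma~\ref{DM}, $\mathsf{RLUG}_{\mathsf{R}}$ admits Dedekind-MacNeille completions, so the Dedekind-MacNeille completion of $\mathbf{B}$ again lies in $\mathsf{RLUG}_{\mathsf{R}}$; that is, the $\mathcal{L}$-reduct of $\mathbf{W}^+_{\mathbf{A}}$ satisfies each equation in $\mathsf{R}$. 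Because these equations mention neither $\oc$ nor $K$, their validity in the $\mathcal{L}$-reduct already guarantees their validity in $\mathbf{W}^+_{\mathbf{A}}$ itself. Combining this with the fact that $\mathbf{W}^+_{\mathbf{A}}$ is an NACILL-algebra yields $\mathbf{W}^+_{\mathbf{A}}\in\mathsf{NACILL}_{\mathsf{R}}$, as required.

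The main thing to be careful about is the compatibility claim in the second step: one must confirm that passing to the $\mathcal{L}$-reduct commutes with the frame completion, i.e. that enriching the frame with $K$ leaves the underlying $\mathcal{L}$-structure untouched. Once this is in hand, the argument is essentially a reduct-transfer of Lemma~\ref{DM}, and no new frame-theoretic computation is needed; in particular the modal equations are handled entirely by the preceding lemma.
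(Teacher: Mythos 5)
Your proof is correct and is essentially the argument the paper intends: the paper states this lemma without an explicit proof, but the identical reduct-transfer step---the $\mathcal{L}$-reduct of $\mathbf{W}^{+}_{\mathbf{A}}$ depends only on the underlying $ru$-frame $(A,A,\leq,1)$ and hence is the Dedekind-MacNeille completion of the $\mathcal{L}$-reduct of $\mathbf{A}$, so Lemma~\ref{DM} yields the equations in $\mathsf{R}$, while the modal equations are covered by the preceding material---is exactly how the paper handles the corresponding point in the proof of Lemma~\ref{sc}. Your explicit compatibility check, that $K=A^{\oc}$ enters only through $\oc_{\gamma_{\leq}}$ and leaves the $\mathcal{L}$-operations untouched, is the one thing that needed to be verified, and you verify it.
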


Now we introduce cut-free Gentzen frames for $\mathbf{NACILL}$.  
\begin{Def}
A \emph{cut-free Gentzen frame} for $\mathbf{NACILL}$ is a pair $({\mathbf{W}},{\mathbf{A}})$ such that:
\begin{itemize}
\item ${\mathbf{W}}=(W,W',N,\varepsilon,K)$ is a modal $ru$-frame, 
\item $\mathbf{A}$ is an $\mathcal{L}_{\oc}$-algebra, 
\item there are injections $i \colon A \rightarrow W$, $j \colon A \rightarrow W'$ and $k \colon A^{\oc} \rightarrow K$, i.e., $A$ is identified with a subset of $W$ and a subset of $W'$, and $A^{\oc}$ is identified with a subset of $K$, and
\item the nuclear relation $N$ satisfies all the rules in Figure \ref{gen}.
\begin{figure}[t]
\[
\begin{bprooftree}
\AxiomC{}
\RightLabel{[1R]}
\UnaryInfC{$\varepsilon \N 1$}
\end{bprooftree}
\begin{bprooftree}
\AxiomC{}
\RightLabel{[Id]}
\UnaryInfC{$a \N a$}
\end{bprooftree}
\begin{bprooftree}
\AxiomC{$\varepsilon \N z$}
\RightLabel{[1L]}
\UnaryInfC{$1 \N z$}
\end{bprooftree}
\]
\[
\begin{bprooftree}
\AxiomC{$a \circ b \N z$}
\RightLabel{[$\cdot$L]}
\UnaryInfC{$a \cdot b \N z$}
\end{bprooftree}
\begin{bprooftree}
\AxiomC{$x \N a$}
\AxiomC{$y \N b$}
\RightLabel{[$\cdot$R]}
\BinaryInfC{$x \circ y \N a \cdot b$}
\end{bprooftree}
\]
\[
\begin{bprooftree}
\AxiomC{$x \N a$}
\AxiomC{$b \N z$}
\RightLabel{[$\backslash$L]}
\BinaryInfC{$x \circ (a \backslash b) \N z$}
\end{bprooftree}
\begin{bprooftree}
\AxiomC{$a \circ x \N b$}
\RightLabel{[$\backslash$R]}
\UnaryInfC{$x \N a \backslash b$}
\end{bprooftree}
\]
\[
\begin{bprooftree}
\AxiomC{$x \N a$}
\AxiomC{$b \N z$}
\RightLabel{[$/$L]}
\BinaryInfC{$(b/a)\circ x \N z$}
\end{bprooftree}
\begin{bprooftree}
\AxiomC{$x \circ a \N b$}
\RightLabel{[$/$R]}
\UnaryInfC{$x \N b/a$}
\end{bprooftree}
\]
\[
\begin{bprooftree}
\AxiomC{$a \N z$}
\RightLabel{[$\wedge$L]}
\UnaryInfC{$a \wedge b \N z$}
\end{bprooftree}
\begin{bprooftree}
\AxiomC{$b \N z$}
\RightLabel{[$\wedge$L]}
\UnaryInfC{$a \wedge b \N z$}
\end{bprooftree}
\begin{bprooftree}
\AxiomC{$x \N a$}
\AxiomC{$x \N b$}
\RightLabel{[$\wedge$R]}
\BinaryInfC{$x \N a \wedge b$}
\end{bprooftree}
\]
\[
\begin{bprooftree}
\AxiomC{$a \N z$}
\AxiomC{$b \N z$}
\RightLabel{[$\vee$L]}
\BinaryInfC{$a \vee b \N z$}
\end{bprooftree}
\begin{bprooftree}
\AxiomC{$x \N a$}
\RightLabel{[$\vee$R]}
\UnaryInfC{$x \N a \vee b$}
\end{bprooftree}
\begin{bprooftree}
\AxiomC{$x \N b$}
\RightLabel{[$\vee$R]}
\UnaryInfC{$x \N a \vee b$}
\end{bprooftree}
\]
\[
\begin{bprooftree}
\AxiomC{$a \N z$}
\RightLabel{[$\oc$L]}
\UnaryInfC{$\oc a \N z$}
\end{bprooftree}
\begin{bprooftree}
\AxiomC{$k \N a$}
\RightLabel{[$\oc$R]}
\UnaryInfC{$k \N \oc a$}
\end{bprooftree}
\begin{bprooftree}
\AxiomC{$\varepsilon \N z$}
\RightLabel{[$K$-w]}
\UnaryInfC{$k \N z$}
\end{bprooftree}
\begin{bprooftree}
\AxiomC{$k \circ k \N z$}
\RightLabel{[$K$-c]}
\UnaryInfC{$k \N z$}
\end{bprooftree}
\]
\[
\begin{bprooftree}
\AxiomC{$k \circ y \N z$}
\doubleLine
\RightLabel{[$K$-e]}
\UnaryInfC{$y \circ k \N z$}
\end{bprooftree}
\begin{bprooftree}
\AxiomC{$k \circ (x \circ y) \N z$}
\doubleLine
\RightLabel{[$K$-a]}
\UnaryInfC{$(k \circ x) \circ y \N z$}
\end{bprooftree}
\begin{bprooftree}
\AxiomC{$x \circ (y \circ k) \N z$}
\doubleLine
\RightLabel{[$K$-a$^*$]}
\UnaryInfC{$(x \circ y) \circ k \N z$}
\end{bprooftree}
\]
\caption{}
\label{gen}
\end{figure}
\end{itemize}
\end{Def}

In Figure~\ref{gen}, $x,y$ range over $W$, $z$ over $W'$, $a,b$ over $A$, and $k$ over $K$.  
Each of the rules in Figure \ref{gen} means that if the expression over the horizontal line holds, then so does the expression under the horizontal line. 
For instance, the rule of [$\vee$R] says that if $x \N a$ (or $x \N b$) holds, then so does $x \N a \lor b$.  
The rules of [Id] and [1R] always hold.
In particular, each of the rules of [$K$-e], [$K$-a] and [$K$-a$^*$] also means that if the expression under the double line holds, then so does the expression over the double line. 

Moreover, we can consider the following extra rules:
\[
\begin{bprooftree}
\AxiomC{$x \circ y \N z$}
\RightLabel{[e]}
\UnaryInfC{$y \circ x \N z$}
\end{bprooftree}
\begin{bprooftree}
\AxiomC{$x \circ x \N z$}
\RightLabel{[c]}
\UnaryInfC{$x \N z$}
\end{bprooftree}
\begin{bprooftree}
\AxiomC{$\varepsilon \N z$}
\RightLabel{[w]}
\UnaryInfC{$x \N z$}
\end{bprooftree}
\]
Given a subset $R$ of $\{\exch,\cont,\weak\}$, a cut-free Gentzen frame $(\mathbf{W},\mathbf{A})$ for ${\mathbf{NACILL}}$ is called a \emph{cut-free Gentzen frame for ${\mathbf{NACILL}}_R$} if the nuclear relation $N$ satisfies the rule of [$r$] for each $r \in R$.

\begin{Exm}
\label{exm2}
\begin{enumerate}[(ii)]
\item Given an NACILL$_{R}$-algebra $\mathbf{A}$, $(\mathbf{W}_{\mathbf{A}},\mathbf{A})$ is a cut-free Gentzen frame for ${\mathbf{NACILL}}_{R}$. 
\item Given a subset $R$ of $\{\exch,\cont,\weak\}$, consider the pair $({\mathbf{W}}^{cf}_{{\mathbf{N
ACILL}}_{R}},{\mathbf{Fm}}_{\mathcal{L}_{\oc}})$ such that:
\begin{itemize}
\item $\mathbf{Fm}_{\mathcal{L}_{{\oc}}}$ is the absolutely free algebra in the language $\mathcal{L}_{\oc}$ over $\mathsf{Var}$.
\item The tuple ${\mathbf{W}}^{cf}_{{\mathbf{NACILL}}_{R}}=(Fm^{\circ}_{\mathcal{L}_{\oc}},S_{Fm^{\circ}_{\mathcal{L}_{\oc}}} \times Fm_{\mathcal{L}_{\oc}},N^*,\varepsilon,K_{\mathcal{L}_{\oc}})$ is defined as follows. 
\begin{itemize}
\item $Fm^{\circ}_{\mathcal{L}_{\oc}}=(Fm^{\circ}_{\mathcal{L}_{\oc}},\circ,\varepsilon)$ is the free unital groupoid generated by $Fm_{\mathcal{L}_{\oc}}$.
\item $S_{Fm^{\circ}_{\mathcal{L}_{\oc}}}$ is the set of unary linear polynomials over $Fm^{\circ}_{\mathcal{L}_{\oc}}$.
\item For any $a \in Fm_{\mathcal{L}_{\oc}}$, $a$ is identified with $(\id,a)$, where $\id$ is the unary polynomial such that $\id[x]=x$.
\item $N^* \subseteq Fm^{\circ}_{\mathcal{L}_{\oc}} \times (S_{Fm^{\circ}_{\mathcal{L}_{\oc}}} \times Fm_{\mathcal{L}_{\oc}})$ is defined by:
\[
x \NN (u,a) \Longleftrightarrow  u[x] \Rightarrow a \text{\,\,is provable in ${\mathbf{NACILL}}_R$ without using (cut)}. 
\]
\item $K_{\mathcal{L}_{\oc}}$ is the free unital groupoid generated by $\{\oc a \mid a \in Fm_{\mathcal{L}_{\oc}}\}$.
\end{itemize}
\end{itemize} 
We identify $Fm_{{\mathcal{L}}_{\oc}}$ with a subset of $Fm^{\circ}_{\mathcal{L}_{\oc}}$ and $\{\oc a \mid a \in Fm_{\mathcal{L}_{\oc}}\}$ with a subset of $K_{\mathcal{L}_{\oc}}$. 
For all $x,y \in Fm_{\mathcal{L}_{\oc}}^{\circ}$ and $u \in S_{Fm_{\mathcal{L}_{\oc}}}$, define the unary linear polynomials $u_{x\circ}$ and $u_{\circ y}$ by $u_{x\circ}(y)=u(x \circ y)$ and $u_{\circ y}(x)=u(x \circ y)$. 
Obviously, the following holds:
\[
x \circ y \NN (u,a) \Longleftrightarrow y \NN (u_{x \circ}, a) \Longleftrightarrow x \NN (u_{\circ y}, a)
\]
Then $N^*$ forms a nuclear relation by setting $x \dbackslash (u,a)=(u_{x \circ},a)$ and $(u,a)\dslash y=(u_{\circ y},a)$. 
Moreover, ${\mathbf{W}}^{cf}_{{\mathbf{NACILL}}_{R}}$ is a modal $ru$-frame and $({\mathbf{W}}^{cf}_{{\mathbf{NACILL}}_R},{\mathbf{Fm}}_{\mathcal{L}_{\oc}})$ is a cut-free Gentzen frame for ${\mathbf{NACILL}}_R$. 
The verification is left to the reader.  
\end{enumerate}
\end{Exm}

Next, we prove:
\begin{Lem}
\label{preserve}
Let $R$ be a subset of $\{\exch,\cont,\weak\}$. If $({\mathbf{W}},{\mathbf{A}})$ is a cut-free Gentzen frame for ${\mathbf{NACILL}}_R$, then $\mathbf{W}^+$ is a complete NACILL$_{R}$-algebra. 
\end{Lem}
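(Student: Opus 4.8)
The plan is to build directly on Lemma~\ref{mru}, which already guarantees that $\mathbf{W}^+$ is a complete modal $r\ell u$-groupoid and that the identities $\oc x \leq x$ and $\oc x \leq \oc\oc x$, i.e.\ conditions (i) and (ii) of Definition~\ref{NACILL}, hold. It therefore remains only to verify the identities (iii)--(vii) of Definition~\ref{NACILL}, together with, for each $r \in R$, the corresponding structural identity among $xy \leq yx$, $x \leq xx$ and $x \leq 1$. Establishing all of these in $\mathbf{W}^+$ shows that $\mathbf{W}^+ \in \mathsf{NACILL}_{\mathsf{R}}$, which is the assertion of the lemma.

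The single technical device I would isolate first is the interaction of the nucleus $\gamma_N$ with the products. Because $\gamma_N$ is a nucleus, it satisfies $\gamma_N(\gamma_N(P)\circ\gamma_N(Q)) = \gamma_N(P\circ Q)$, and every member of $\mathbf{W}^+$ is Galois-closed. Hence a product such as $\oc_{\gamma_N}X \circ_{\gamma_N} Y$ simplifies to $\gamma_N((X\cap K)\circ Y)$, and symmetrically $Y \circ_{\gamma_N} \oc_{\gamma_N}X$ to $\gamma_N(Y\circ(X\cap K))$; the ternary products occurring in (vi) and (vii) simplify in the same fashion. Combined with the fact that, for a Galois-closed target $Q$, an inclusion $P \subseteq Q$ holds as soon as the generators of $P$ lie in $Q$, this reduces each desired identity to an elementwise statement about the relation $N$.

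With this reduction in hand, each identity follows immediately from the corresponding frame rule of Figure~\ref{gen}. For (iii) $\oc x \leq 1$ it suffices to show $X\cap K \subseteq \gamma_N(\{\varepsilon\})$, which is exactly [$K$-w] applied to each $k \in X\cap K$; for (iv) $\oc x \leq \oc x\,\oc x$ one reduces to $X\cap K \subseteq \gamma_N((X\cap K)\circ(X\cap K))$ and applies [$K$-c] to the instance $k\circ k \N z$. The remaining modal identities (v) $\oc x\,y = y\,\oc x$, (vi) $\oc x(yz) = (\oc x\,y)z$ and (vii) $x(y\,\oc z) = (xy)\oc z$ follow from the double-line rules [$K$-e], [$K$-a] and [$K$-a$^*$], which deliver both inclusions at once. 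Finally, the structural identities $xy \leq yx$, $x \leq xx$ and $x \leq 1$ are treated identically using the extra rules [e], [c] and [w], giving membership in $\mathsf{NACILL}_{\mathsf{R}}$ for the relevant $R$.

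The step I expect to demand the most care is the nucleus bookkeeping of the second paragraph: one must push $\gamma_N$ correctly through the non-associative products so that the elementwise reading of each identity matches \emph{precisely} the hypothesis of the relevant frame rule (in particular keeping track of which argument ranges over $K$). Once the simplification $\oc_{\gamma_N}X \circ_{\gamma_N} Y = \gamma_N((X\cap K)\circ Y)$ and its ternary analogues are recorded as a preliminary observation, the individual verifications become mechanical, so I would dispatch the identities (iii)--(vii) and the three structural ones in quick succession, leaving the routine elementwise checks to the reader.
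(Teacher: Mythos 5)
Your proposal is correct and follows essentially the same route as the paper's proof: both start from Lemma~\ref{mru} and verify the remaining identities of Definition~\ref{NACILL} together with those in $\mathsf{R}$ by reducing each, via standard nucleus properties such as $\gamma_N(\gamma_N(P)\circ\gamma_N(Q))=\gamma_N(P\circ Q)$, to an elementwise statement about $N$ discharged by the corresponding frame rule ([$K$-w], [$K$-c], [$K$-e], [$K$-a], [$K$-a$^*$], and [e], [c], [w]). The only difference is presentational: you record the simplification $\oc_{\gamma_N}X\circ_{\gamma_N}Y=\gamma_N((X\cap K)\circ Y)$ once up front, whereas the paper works through two representative cases ((vi) via [$K$-a] and (iii) via [$K$-w]) and leaves the rest to the reader.
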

\begin{proof}
By Lemma \ref{mru}, $\mathbf{W}^+$ is a complete modal $r\ell u$-groupoid. 
To show that the $\mathcal{L}_{\oc}$-algebra ${\mathbf{W}}^+=(\gamma_{N}[{\mathcal{P}}(W)],\cap,\cup_{\gamma_{N}},\circ_{\gamma_{N}},\backslash,/,\oc_{\gamma_{N}},\gamma_{N}(\{\varepsilon\}))$ is an NACILL$_{R}$-algebra, we check that all the equations in Definition \ref{NACILL} and in the set $\mathsf{R}$ hold in $\mathbf{W}^+$.
Let us consider only some cases here. 

For the equation (vi) in Definition \ref{NACILL}, let $X,Y,Z \in \gamma_{N}[{\mathcal{P}}(W)]$ and $w \in (X \cap K) \circ (Y \circ Z)$. 
This means that $w=k \circ (y \circ z) \in  (X \cap K) \circ (Y \circ Z)$, for some $k \in X \cap K$, $y \in Y$ and $z \in Z$.
Obviously, we have $(k \circ y) \circ z \in (\oc_{\gamma_{N}}X \circ_{\gamma_{N}} Y) \circ_{\gamma_{N}} Z$. 
Let $w' \in  ((\oc_{\gamma_{N}}X \circ_{\gamma_{N}} Y) \circ Z)^{\rhd}$.
Then we have $(k \circ y) \circ z \N w'$.
Using the rule of [$K$-a], we have $k \circ (y \circ z) \N w'$, i.e., $w=k \circ (y \circ z) \in (\oc_{\gamma_{N}}X \circ_{\gamma_{N}} Y) \circ_{\gamma_{N}} Z$; hence $(X \cap K) \circ (Y \circ Z) \subseteq (\oc_{\gamma_{N}}X \circ_{\gamma_{N}} Y) \circ_{\gamma_{N}}Z$. 
Using properties of nuclei, we have $\oc_{\gamma_{N}}X \circ_{\gamma_{N}} (Y \circ_{\gamma_{N}}Z) \subseteq (\oc_{\gamma_{N}}X \circ_{\gamma_{N}} Y) \circ_{\gamma_{N}}Z$.
One has the converse inclusion in a similar way. 

For the equation (iii) in Definition \ref{NACILL}, let $k \in K$. 
Suppose that $\gamma_{N}(\{\varepsilon\}) \subseteq \{z\}^{\lhd}$; hence $\varepsilon \N z$. 
By the rule of [$K$-w], we have $k \N z$, i.e., $k \in \{z\}^{\lhd}$. 
Thus we have $k \in \bigcap \{\{z\}^{\lhd} \mid \gamma_{N}(\{\varepsilon\}) \subseteq \{z\}^{\lhd}\}$. 
Then, 
\begin{align*}
k \in \bigcap \{\{z\}^{\lhd} \mid \gamma_{N}(\{\varepsilon\}) \subseteq \{z\}^{\lhd}\} & \Longleftrightarrow k \in \bigcap \{\{z\}^{\lhd} \mid z \in \gamma_{N}(\{\varepsilon\})^{\rhd}\} \\
& \Longleftrightarrow k \N z, \text{for all $z \in \gamma_{N}(\{\varepsilon\})^{\rhd}$} \\
& \Longleftrightarrow k \in \gamma_{N}(\{\varepsilon\})^{\rhd\lhd}
\end{align*}
Hence, $K \subseteq \gamma_{N}(\{\varepsilon\})$. 
This implies that $\oc_{\gamma_{N}} X \subseteq \gamma_{N}(\{\varepsilon\})$. 
\end{proof}

The following lemma says that, given a cut-free Gentzen frame $({\mathbf{W}},{\mathbf{A}})$ for ${\mathbf{NACILL}}$, $\mathbf{A}$ is quasi-embeddable into the NACILL-algebra $\mathbf{W}^+$. 
\begin{Lem}
\label{truth}
Let $({\mathbf{W}},{\mathbf{A}})$ be a cut-free Gentzen frame for ${\mathbf{NACILL}}$. For every $a,b \in A$, and $X,Y \in \gamma_{N}[{\mathcal{P}}(W)]$, the following statements hold: 
\begin{enumerate}[(iii)]
\item $1 \in \gamma_{N}(\{\varepsilon\}) \subseteq \{1\}^{\lhd}$. 
\item If $a \in X \subseteq \{a\}^{\lhd}$ and $b \in Y \subseteq \{b\}^{\lhd}$ then $a \bullet b \in X \bullet_{{\mathbf{W}}^+} Y \subseteq \{a \bullet b \}^{\lhd}$, where $\bullet \in \{\wedge,\vee,\cdot,\backslash,/\}$ and $\bullet_{\mathbf{W}^+}$ denotes the operation on $\mathbf{W}^+$ corresponding to $\bullet$. 
\item If $a \in X \subseteq \{a\}^{\lhd}$ then $\oc a \in \oc_{\gamma_{N}} X \subseteq \{\oc a\}^{\lhd}$.
\end{enumerate}
\end{Lem}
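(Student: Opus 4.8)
The plan is to verify the three items by direct computation in the Galois algebra $\mathbf{W}^+$, establishing each of the two inclusions of the asserted sandwich by invoking the matching pair of Gentzen-frame rules from Figure~\ref{gen} together with the Galois connection between $(-)^{\rhd}$ and $(-)^{\lhd}$. No induction is needed here: the hypotheses $a \in X \subseteq \{a\}^{\lhd}$ are precisely the data that an induction on formulas would supply in the cut-elimination theorem itself. Throughout I would use that a Galois-closed set $X$ satisfies $X = X^{\rhd\lhd}$, that $\gamma_N$ is extensive and monotone, and the standard fact that residuals of Galois-closed sets are again Galois-closed. The uniform pattern is this: to prove a membership $a \bullet b \in X \bullet_{\mathbf{W}^+} Y$ I would use the \emph{left} rule for $\bullet$ together with $a \in X$ and $b \in Y$, while to prove the containment $X \bullet_{\mathbf{W}^+} Y \subseteq \{a \bullet b\}^{\lhd}$ I would use the \emph{right} rule for $\bullet$ together with $X \subseteq \{a\}^{\lhd}$ and $Y \subseteq \{b\}^{\lhd}$. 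Item (i) is the base instance of this pattern: [1R] gives $\varepsilon \N 1$, so $1 \in \{\varepsilon\}^{\rhd}$ and hence $\gamma_N(\{\varepsilon\}) = \{\varepsilon\}^{\rhd\lhd} \subseteq \{1\}^{\lhd}$, while [1L] turns $\varepsilon \N z$ into $1 \N z$ for every $z \in \{\varepsilon\}^{\rhd}$, giving $1 \in \{\varepsilon\}^{\rhd\lhd} = \gamma_N(\{\varepsilon\})$.

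For item (ii) I would run this template once per connective. For $\cdot$: any $z \in (X \circ Y)^{\rhd}$ satisfies $a \circ b \N z$ since $a \in X$, $b \in Y$, so $a \cdot b \N z$ by [$\cdot$L], proving $a \cdot b \in \gamma_N(X \circ Y)$; and from $x \N a$, $y \N b$ for all $x \in X$, $y \in Y$ the rule [$\cdot$R] yields $x \circ y \N a \cdot b$, so $a \cdot b \in (X \circ Y)^{\rhd}$ and therefore $\gamma_N(X \circ Y) \subseteq \{a \cdot b\}^{\lhd}$. The residual $\backslash$ is the only mildly delicate case: to see $a \backslash b \in X \backslash Y$ I would fix $x \in X$ and show $x \circ (a \backslash b) \in Y$ by checking $x \circ (a \backslash b) \N z$ for each $z \in Y^{\rhd}$ via [$\backslash$L] (from $x \N a$ and $b \N z$) and the closedness of $Y$; conversely, if $z \in X \backslash Y$ then $a \circ z \in Y \subseteq \{b\}^{\lhd}$, so $a \circ z \N b$ and [$\backslash$R] gives $z \N a \backslash b$. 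The case of $/$ is symmetric, and $\wedge$, $\vee$ are analogous but simpler.

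Item (iii), the modal case, is where the argument is least mechanical and is the step I expect to be the main obstacle, since it is the only one that cannot be settled purely at the level of the relation $N$ but must exploit the interplay between the subalgebra $K$ and Galois closure. For the membership $\oc a \in \oc_{\gamma_N} X = \gamma_N(X \cap K)$ I would first promote $a \in X$ to $\oc a \in X$: for every $z \in X^{\rhd}$ we have $a \N z$ (as $a \in X$), hence $\oc a \N z$ by [$\oc$L], so $\oc a \in X^{\rhd\lhd} = X$; since moreover $\oc a \in K$ under the identification $A^{\oc} \hookrightarrow K$, this places $\oc a \in X \cap K \subseteq \gamma_N(X \cap K)$. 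For the containment $\oc_{\gamma_N} X \subseteq \{\oc a\}^{\lhd}$ I would show $\oc a \in (X \cap K)^{\rhd}$: each $k \in X \cap K$ lies in $X \subseteq \{a\}^{\lhd}$, so $k \N a$, and then $k \N \oc a$ by [$\oc$R], which is exactly the rule licensed for elements of $K$; consequently every member of $\gamma_N(X \cap K) = (X \cap K)^{\rhd\lhd}$ is $N$-related to $\oc a$, as required.
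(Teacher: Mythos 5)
Your proposal is correct, and on item (iii) --- the only item the paper proves directly --- it follows exactly the paper's own argument: [$\oc$L] promotes $a \in X$ to $\oc a \in X^{\rhd\lhd}=X$, the identification of $A^{\oc}$ with a subset of $K$ puts $\oc a \in X \cap K \subseteq \oc_{\gamma_{N}} X$, and [$\oc$R] yields $X \cap K \subseteq \{\oc a\}^{\lhd}$, whence $\oc_{\gamma_{N}} X \subseteq \{\oc a\}^{\lhd}$ by properties of $\gamma_{N}$. For items (i) and (ii) the paper simply cites \cite[Theorem 2.6]{GJ13} rather than arguing, and your left-rule/right-rule verifications are precisely the standard proof of that cited result, so the approaches coincide.
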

\begin{proof}
It suffices to show the statement (iii), since the other statements are shown in \cite[Theorem 2.6]{GJ13}. 
Let $z \in X^{\rhd}$.
By assumption, we have $a \N z$. 
Using the rule of [$\oc$L], we have $\oc a \N z$, i.e., $\oc a \in X^{\rhd\lhd}=X$.
By the definition of $K$, $\oc a \in K$. 
Thus $\oc a \in X \cap K \subseteq \oc_{\gamma_{N}} X$.

Let $k \in X \cap K$.
Due to the fact that $k \in X \cap K \subseteq X \subseteq \{a\}^{\lhd}$, we have $k \N a$.
Using the rule of [$\oc$R], we have $k \N \oc a$; thus $k \in \{\oc a\}^{\lhd}$. 
Hence, we have $X \cap K \subseteq \{\oc a\}^{\lhd}$. 
By monotonicity and idempotency of $\gamma_{N}$, $\oc_{\gamma_{N}} X = \gamma_{N}(X \cap K) \subseteq \{\oc a\}^{\lhd}$.
\end{proof}

Given a cut-free Gentzen frame $({\mathbf{W}},{\mathbf{A}})$ for ${\mathbf{NACILL}}$ and a valuation $f $ into $\mathbf{A}$, define the valuation $f^*$ into $\mathbf{W}^+$ by $f^*(p)=\{f(p)\}^{\lhd}$.
Moreover, we have:
\begin{Lem}
\label{ind}
Let $({\mathbf{W}},{\mathbf{A}})$ be a cut-free Gentzen frame for ${\mathbf{NACILL}}$. Then $f(a) \in f^*(a) \subseteq \{f(a)\}^{\lhd}$ for any valuation $f$ into $\mathbf{A}$ and any $a \in Fm_{\mathcal{L}_{\oc}}$.
\end{Lem}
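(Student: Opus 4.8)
The plan is to prove the claim by induction on the structure of the formula $a$, using Lemma~\ref{truth} as the engine that propagates the two-sided inclusion through each connective. Recall that $f^*$ is defined on propositional variables by $f^*(p)=\{f(p)\}^{\lhd}$ and then extended to a homomorphism $\mathbf{Fm}_{\mathcal{L}_{\oc}}\to\mathbf{W}^+$; since $f$ is likewise the homomorphic extension of a valuation, both $f$ and $f^*$ commute with every connective, and it is exactly this compatibility that lets the induction hypotheses feed into Lemma~\ref{truth}.

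For the base cases, first let $a=p$ be a propositional variable. Then $f^*(p)=\{f(p)\}^{\lhd}$, so the right-hand inclusion is an equality, and the membership $f(p)\in\{f(p)\}^{\lhd}$ amounts to $f(p)\N f(p)$, which holds by the rule [Id]. Next let $a=1$. Here $f(1)$ is the unit $1$ of $\mathbf{A}$ while $f^*(1)$ is the unit $\gamma_N(\{\varepsilon\})$ of $\mathbf{W}^+$, so the desired chain $f(1)\in f^*(1)\subseteq\{f(1)\}^{\lhd}$ is precisely statement (i) of Lemma~\ref{truth}.

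For the inductive step, suppose first that $a=b\bullet c$ with $\bullet\in\{\wedge,\vee,\cdot,\backslash,/\}$. By the induction hypothesis we have $f(b)\in f^*(b)\subseteq\{f(b)\}^{\lhd}$ and $f(c)\in f^*(c)\subseteq\{f(c)\}^{\lhd}$, so Lemma~\ref{truth}(ii) yields $f(b)\bullet f(c)\in f^*(b)\bullet_{\mathbf{W}^+}f^*(c)\subseteq\{f(b)\bullet f(c)\}^{\lhd}$. Since $f(b\bullet c)=f(b)\bullet f(c)$ and $f^*(b\bullet c)=f^*(b)\bullet_{\mathbf{W}^+}f^*(c)$, this is exactly the claim for $a$. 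The case $a=\oc b$ is identical in spirit: the induction hypothesis together with Lemma~\ref{truth}(iii) gives $\oc f(b)\in\oc_{\gamma_N}f^*(b)\subseteq\{\oc f(b)\}^{\lhd}$, and using $f(\oc b)=\oc f(b)$ and $f^*(\oc b)=\oc_{\gamma_N}f^*(b)$ completes the step.

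I expect no serious obstacle here, as the genuine content has already been isolated in Lemma~\ref{truth}; the proof is essentially bookkeeping that threads the induction hypotheses through the connective-by-connective inclusions. The only points that require care are the variable base case, where one must invoke [Id] directly rather than appeal to Lemma~\ref{truth}, and the uniform observation that both valuations are homomorphisms, so that each connective can be pulled outside $f$ and $f^*$ simultaneously before the corresponding clause of Lemma~\ref{truth} is applied.
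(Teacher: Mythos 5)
Your proof is correct and follows essentially the same route as the paper: induction on the structure of $a$, with Lemma~\ref{truth} handling each connective, including clause (iii) for the $\oc$ case exactly as the paper does. The only difference is that the paper spells out just the $\oc$ case and cites Galatos and Jipsen (2013, Lemma~3.1) for the remaining cases, which you instead write out explicitly (correctly, including the [Id]-based variable case).
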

\begin{proof}
By induction on the length of $a$.
We show the case where $a=\oc b$. 
By the induction hypothesis, $f(b) \in f^*(b) \subseteq \{f(b)\}^{\lhd}$. 
By Lemma \ref{truth}, $\oc f(b) \in \oc_{\gamma_{N}} f^*(b) \subseteq \{\oc f(b)\}^{\lhd}$, i.e., $f(\oc b) \in f^*(\oc b) \subseteq \{f(\oc b)\}^{\lhd}$.
See \cite[Lemma 3.1]{GJ13} for the remaining cases. 
\end{proof}

We define the validity of an $\mathcal{L}_{\oc}$-sequent in a cut-free Gentzen frame for $\mathbf{NACILL}$, based on \cite{GJ13}. 
Given a cut-free Gentzen frame $(\mathbf{W},\mathbf{A})$ for $\mathbf{NACILL}$ and a map $f \colon Fm_{\mathcal{L}_{\oc}} \rightarrow A$, we inductively define the map $f^{\circ} \colon Fm^{\circ}_{\mathcal{L}_{\oc}} \rightarrow W$ by $f^{\circ}(x \circ y)=f^{\circ}(x) \circ f^{\circ}(y)$. 
Similarly, given an $\mathcal{L}_{\oc}$-algebra $\mathbf{A}$ and a valuation $f$ into $\mathbf{A}$,  the homomorphism $f^{\circ} \colon {\mathbf{Fm}}^{\circ}_{\mathcal{L}_{\oc}} \rightarrow \mathbf{A}$ is obtained by extending $f$, where $\mathbf{Fm}^{\circ}_{\mathcal{L}_{\oc}}$ denotes the absolutely free algebra in the language $\{\circ,\varepsilon\}$ over $Fm_{\mathcal{L}_{\oc}}$.  
We say that an $\mathcal{L}_{\oc}$-sequent $x \Rightarrow a$ is \emph{valid} in a cut-free Gentzen frame $({\mathbf{W}},{\mathbf{A}})$ for ${\mathbf{NACILL}}$ and write $({\mathbf{W}},{\mathbf{A}}) \models x \Rightarrow a$ if $f^{\circ}(x) \N f(a)$ holds for every valuation $f$ into $\mathbf{A}$. 

Using Lemma \ref{ind}, we can show the following lemma:
\begin{Lem}
\label{cut}
Let $({\mathbf{W}},{\mathbf{A}})$ be a cut-free Gentzen frame for $\mathbf{NACILL}$ and $x\Rightarrow a$ an $\mathcal{L}_{\oc}$-sequent. If $\models_{{\mathbf{W}}^+} x \Rightarrow a$, then $({\mathbf{W}},{\mathbf{A}}) \models x \Rightarrow a$.
\end{Lem}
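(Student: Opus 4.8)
The plan is to fix an arbitrary valuation $f$ into $\mathbf{A}$ and prove directly that $f^{\circ}(x) \N f(a)$, since this is exactly what $({\mathbf{W}},{\mathbf{A}}) \models x \Rightarrow a$ requires. The bridge between validity in $\mathbf{W}^+$ and validity in the frame is the induced valuation $f^*$ into $\mathbf{W}^+$ defined by $f^*(p)=\{f(p)\}^{\lhd}$, for which Lemma~\ref{ind} already supplies the sandwiching $f(b) \in f^*(b) \subseteq \{f(b)\}^{\lhd}$ for every formula $b$. The strategy is to lift this membership statement from formulas to structures, apply the hypothesis to $f^*$, and then chain the resulting inclusions.

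First I would establish the auxiliary claim that $f^{\circ}(x) \in f^*(\rho(x))$ for every $x \in Fm^{\circ}_{\mathcal{L}_{\oc}}$, by induction on the construction of $x$. Since a structure is built from formulas using $\circ$ and $\varepsilon$ only, there are three cases. When $x=\varepsilon$, we have $f^{\circ}(\varepsilon)=\varepsilon \in \gamma_{N}(\{\varepsilon\})=f^*(1)=f^*(\rho(\varepsilon))$ by extensivity of $\gamma_{N}$; when $x$ is a formula, the claim is precisely the membership half of Lemma~\ref{ind}. For the inductive step $x=y \circ z$, the hypothesis gives $f^{\circ}(y) \in f^*(\rho(y))$ and $f^{\circ}(z) \in f^*(\rho(z))$, so, using $\rho(y \circ z)=\rho(y) \cdot \rho(z)$ together with the definition of the product $\cdot_{\gamma_{N}}$ of $\mathbf{W}^+$,
\[
f^{\circ}(x)=f^{\circ}(y) \circ f^{\circ}(z) \in f^*(\rho(y)) \circ f^*(\rho(z)) \subseteq \gamma_{N}\bigl(f^*(\rho(y)) \circ f^*(\rho(z))\bigr)=f^*(\rho(x)).
\]

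With the auxiliary claim in hand, the conclusion follows by chaining inclusions. Instantiating the hypothesis $\models_{{\mathbf{W}}^+} x \Rightarrow a$ at the particular valuation $f^*$ yields $f^*(\rho(x)) \subseteq f^*(a)$ (recall that the order of $\mathbf{W}^+$ is inclusion), while the upper half of Lemma~\ref{ind} gives $f^*(a) \subseteq \{f(a)\}^{\lhd}$. Combining these with the auxiliary claim,
\[
f^{\circ}(x) \in f^*(\rho(x)) \subseteq f^*(a) \subseteq \{f(a)\}^{\lhd},
\]
and by the definition of $(-)^{\lhd}$ this is exactly $f^{\circ}(x) \N f(a)$. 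As $f$ was arbitrary, $({\mathbf{W}},{\mathbf{A}}) \models x \Rightarrow a$.

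I expect the only substantive step to be the structural induction of the second paragraph, and even there the work is light: the single point requiring care is the case $x=y\circ z$, where one must see that the groupoid product $\circ$ of $W$ sits inside the nuclear product $\cdot_{\gamma_{N}}$ of $\mathbf{W}^+$. This is immediate from extensivity of $\gamma_{N}$ and the identity $\rho(y\circ z)=\rho(y)\cdot\rho(z)$, so no genuine obstacle arises; the remainder is a routine unwinding of the definitions of $f^*$, $(-)^{\lhd}$, and validity in $\mathbf{W}^+$.
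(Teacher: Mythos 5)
Your proof is correct and is essentially the paper's proof: the paper simply defers to the argument of Theorem~3.2 in Galatos and Jipsen (2013), which is exactly your route---establish $f^{\circ}(x)\in f^*(\rho(x))$ by structural induction (using Lemma~\ref{ind} for the formula case and extensivity of $\gamma_N$ for $\varepsilon$ and $\circ$), instantiate the hypothesis at the valuation $f^*$, and chain the inclusions through $\{f(a)\}^{\lhd}$ to read off $f^{\circ}(x)\N f(a)$.
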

\begin{proof}
Essentially the same as the proof of \cite[Theorem 3.2]{GJ13}.
\end{proof}

Now we show the following:
\begin{Thm}
\label{cutelim}
Let $R$ be a subset of $\{\exch,\cont,\weak\}$. If $x \Rightarrow a$ is provable in ${\mathbf{NACILL}}_R$, then it is provable in ${\mathbf{NACILL}}_R$ without using the rule of (cut).
\end{Thm}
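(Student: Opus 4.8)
The plan is to obtain cut elimination as a corollary of the algebraic completeness theorem together with the cut-free frame machinery assembled above, following the standard Gentzen/residuated-frame route to algebraic cut elimination. The single object that does all the work is the syntactic cut-free Gentzen frame $(\mathbf{W}^{cf}_{\mathbf{NACILL}_R},\mathbf{Fm}_{\mathcal{L}_{\oc}})$ of Example \ref{exm2}(ii), whose nuclear relation $N^*$ is, by its very definition, cut-free provability in $\mathbf{NACILL}_R$. The strategy is to show that any sequent provable \emph{with} cut is valid in the Galois algebra $\mathbf{W}^{+}$ of this frame, and then to use the frame lemmas to read cut-free provability off from $N^*$ directly.

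Concretely, I would argue as follows. Suppose $x \Rightarrow a$ is provable in $\mathbf{NACILL}_R$, possibly using cut. By the soundness direction of Lemma \ref{algebra2}, this gives $\models_{\mathsf{NACILL}_R} x \Rightarrow a$, so the sequent is valid in every $\mathbf{NACILL}_R$-algebra. By Lemma \ref{preserve}, the Galois algebra $\mathbf{W}^{+}$ of the cut-free Gentzen frame $(\mathbf{W}^{cf}_{\mathbf{NACILL}_R},\mathbf{Fm}_{\mathcal{L}_{\oc}})$ is itself a complete $\mathbf{NACILL}_R$-algebra, so in particular $\models_{\mathbf{W}^{+}} x \Rightarrow a$. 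Since every cut-free Gentzen frame for $\mathbf{NACILL}_R$ is a fortiori one for $\mathbf{NACILL}$, Lemma \ref{cut} applies and yields $(\mathbf{W}^{cf}_{\mathbf{NACILL}_R},\mathbf{Fm}_{\mathcal{L}_{\oc}}) \models x \Rightarrow a$, i.e. $f^{\circ}(x) \NN f(a)$ for every valuation $f$ into $\mathbf{Fm}_{\mathcal{L}_{\oc}}$.

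The final step is to unwind this frame-validity at the right valuation. Because $\mathbf{A}=\mathbf{Fm}_{\mathcal{L}_{\oc}}$ is the absolutely free algebra, I would choose the identity valuation $f(p)=p$; then $f(a)=a$ for every formula and $f^{\circ}(x)=x$ for every structure, so the instance reads $x \NN a$. Recalling the identification of the formula $a$ with $(\id,a)$ in $\mathbf{W}^{cf}_{\mathbf{NACILL}_R}$, the definition of $N^*$ says precisely that $\id[x] \Rightarrow a$, that is $x \Rightarrow a$, is provable in $\mathbf{NACILL}_R$ without using (cut), which is the claim.

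I expect the genuinely load-bearing work to lie not in this short assembly but in the lemmas it invokes, and above all in the verification deferred to the reader in Example \ref{exm2}(ii): that $N^*$ really satisfies every rule of Figure \ref{gen} together with the rules $[r]$ for $r \in R$. The delicate clauses there are the modal and structural-modal rules $[\oc\mathrm{L}]$, $[\oc\mathrm{R}]$, $[K\text{-}\mathrm{w}]$, $[K\text{-}\mathrm{c}]$, $[K\text{-}\mathrm{e}]$, $[K\text{-}\mathrm{a}]$ and $[K\text{-}\mathrm{a}^*]$, each of which asserts a closure property of cut-free provability under the corresponding operation; confirming them amounts to checking that the matching rules of Figures \ref{inf1} and \ref{inf2} are invertible, or height-preserving admissible, in the cut-free calculus. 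Once the frame is certified to be a bona fide cut-free Gentzen frame, no residual induction on proof length is required: the entire inductive burden of a syntactic cut-elimination argument has been absorbed into Lemma \ref{cut} through the quasi-embedding of Lemma \ref{ind}.
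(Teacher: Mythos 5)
Your proposal is correct and follows essentially the same route as the paper's own proof: soundness via Lemma \ref{algebra2}, then Lemma \ref{preserve} applied to the syntactic frame $({\mathbf{W}}^{cf}_{{\mathbf{NACILL}}_R},{\mathbf{Fm}}_{\mathcal{L}_{\oc}})$ of Example \ref{exm2}(ii), then Lemma \ref{cut}, reading cut-free provability off the relation $N^*$. Your explicit unwinding at the identity valuation (and your note that a cut-free Gentzen frame for ${\mathbf{NACILL}}_R$ is a fortiori one for ${\mathbf{NACILL}}$, so Lemma \ref{cut} applies) merely spells out what the paper's final ``hence'' leaves implicit.
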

\begin{proof}
Suppose that $\vdash_{{\mathbf{NACILL}}_R} x \Rightarrow a$. 
By Lemma \ref{algebra2}, we have $\models_{\mathsf{NACILL}_{\mathsf{R}}} x \Rightarrow a$. 
As we have remarked in Example~\ref{exm2}, the pair $({\mathbf{W}}^{cf}_{{\mathbf{NACILL}}_{R}},{\mathbf{Fm}}_{\mathcal{L}_{\oc}})$ is a cut-free Gentzen frame for ${\mathbf{NACILL}}_{R}$. 
By Lemma \ref{preserve}, we have $\models_{{\mathbf{W}}^{cf+}_{{\mathbf{NACILL}}_R}} x \Rightarrow a$. 
By Lemma \ref{cut}, $({\mathbf{W}}^{cf}_{{\mathbf{NACILL}}_R},{\mathbf{Fm}}_{\mathcal{L}_{\oc}}) \models x \Rightarrow a$; hence $x \Rightarrow a$ is provable in ${\mathbf{NACILL}}_R$ without using (cut).
\end{proof}

\end{appendix}

\label{lastpage}
\end{document}